\documentclass[12pt]{amsart}
\usepackage{amsfonts,amssymb,amsmath,euscript}
    \newtheorem{thm}{Theorem}                     [section]
    \newtheorem{thm*}{Theorem}
    \newtheorem{prop}[thm]{Proposition}
    \newtheorem{lemma}[thm]{Lemma}
    \newtheorem{cor}[thm]{Corollary}

    \newtheorem{lemma*}{Lemma}    



    \newtheorem{defn}[thm]{Definition}                 
    \newtheorem{rems}[thm]{Remark}                     
    \newtheorem{rems*}{Remark}   






\newcommand{\ndef}{\newcommand*}
\def\rndef{\renewcommand}

\ndef{\myaddress}[1]{\begin{center} \it\small #1 \end{center}}




\ndef{\clA}{{\mathcal A}} \ndef{\rmA}{{\mathrm A}} \ndef{\mbA}{{\mathbb A}} \ndef{\bfA}{{\mathbf A}} \ndef{\euA}{{\EuScript A}} \ndef{\frA}{{\mathfrak A}}
\ndef{\clB}{{\mathcal B}} \ndef{\rmB}{{\mathrm B}} \ndef{\mbB}{{\mathbb B}} \ndef{\bfB}{{\mathbf B}} \ndef{\euB}{{\EuScript B}} \ndef{\frB}{{\mathfrak B}}
\ndef{\clC}{{\mathcal C}} \ndef{\rmC}{{\mathrm C}} \ndef{\mbC}{{\mathbb C}} \ndef{\bfC}{{\mathbf C}} \ndef{\euC}{{\EuScript C}} \ndef{\frC}{{\mathfrak C}}
\ndef{\clD}{{\mathcal D}} \ndef{\rmD}{{\mathrm D}} \ndef{\mbD}{{\mathbb D}} \ndef{\bfD}{{\mathbf D}} \ndef{\euD}{{\EuScript D}} \ndef{\frD}{{\mathfrak D}}
\ndef{\clE}{{\mathcal E}} \ndef{\rmE}{{\mathrm E}} \ndef{\mbE}{{\mathbb E}} \ndef{\bfE}{{\mathbf E}} \ndef{\euE}{{\EuScript E}} \ndef{\frE}{{\mathfrak E}}
\ndef{\clF}{{\mathcal F}} \ndef{\rmF}{{\mathrm F}} \ndef{\mbF}{{\mathbb F}} \ndef{\bfF}{{\mathbf F}} \ndef{\euF}{{\EuScript F}} \ndef{\frF}{{\mathfrak F}}
\ndef{\clG}{{\mathcal G}} \ndef{\rmG}{{\mathrm G}} \ndef{\mbG}{{\mathbb G}} \ndef{\bfG}{{\mathbf G}} \ndef{\euG}{{\EuScript G}} \ndef{\frG}{{\mathfrak G}}
\ndef{\clH}{{\mathcal H}} \ndef{\rmH}{{\mathrm H}} \ndef{\mbH}{{\mathbb H}} \ndef{\bfH}{{\mathbf H}} \ndef{\euH}{{\EuScript H}} \ndef{\frH}{{\mathfrak H}}
\ndef{\clI}{{\mathcal I}} \ndef{\rmI}{{\mathrm I}} \ndef{\mbI}{{\mathbb I}} \ndef{\bfI}{{\mathbf I}} \ndef{\euI}{{\EuScript I}} \ndef{\frI}{{\mathfrak I}}
\ndef{\clJ}{{\mathcal J}} \ndef{\rmJ}{{\mathrm J}} \ndef{\mbJ}{{\mathbb J}} \ndef{\bfJ}{{\mathbf J}} \ndef{\euJ}{{\EuScript J}} \ndef{\frJ}{{\mathfrak J}}
\ndef{\clK}{{\mathcal K}} \ndef{\rmK}{{\mathrm K}} \ndef{\mbK}{{\mathbb K}} \ndef{\bfK}{{\mathbf K}} \ndef{\euK}{{\EuScript K}} \ndef{\frK}{{\mathfrak K}}
\ndef{\clL}{{\mathcal L}} \ndef{\rmL}{{\mathrm L}} \ndef{\mbL}{{\mathbb L}} \ndef{\bfL}{{\mathbf L}} \ndef{\euL}{{\EuScript L}} \ndef{\frL}{{\mathfrak L}}
\ndef{\clM}{{\mathcal M}} \ndef{\rmM}{{\mathrm M}} \ndef{\mbM}{{\mathbb M}} \ndef{\bfM}{{\mathbf M}} \ndef{\euM}{{\EuScript M}} \ndef{\frM}{{\mathfrak M}}
\ndef{\clN}{{\mathcal N}} \ndef{\rmN}{{\mathrm N}} \ndef{\mbN}{{\mathbb N}} \ndef{\bfN}{{\mathbf N}} \ndef{\euN}{{\EuScript N}} \ndef{\frN}{{\mathfrak N}}
\ndef{\clO}{{\mathcal O}} \ndef{\rmO}{{\mathrm O}} \ndef{\mbO}{{\mathbb O}} \ndef{\bfO}{{\mathbf O}} \ndef{\euO}{{\EuScript O}} \ndef{\frO}{{\mathfrak O}}
\ndef{\clP}{{\mathcal P}} \ndef{\rmP}{{\mathrm P}} \ndef{\mbP}{{\mathbb P}} \ndef{\bfP}{{\mathbf P}} \ndef{\euP}{{\EuScript P}} \ndef{\frP}{{\mathfrak P}}
\ndef{\clQ}{{\mathcal Q}} \ndef{\rmQ}{{\mathrm Q}} \ndef{\mbQ}{{\mathbb Q}} \ndef{\bfQ}{{\mathbf Q}} \ndef{\euQ}{{\EuScript Q}} \ndef{\frQ}{{\mathfrak Q}}
\ndef{\clR}{{\mathcal R}} \ndef{\rmR}{{\mathrm R}} \ndef{\mbR}{{\mathbb R}} \ndef{\bfR}{{\mathbf R}} \ndef{\euR}{{\EuScript R}} \ndef{\frR}{{\mathfrak R}}
\ndef{\clS}{{\mathcal S}} \ndef{\rmS}{{\mathrm S}} \ndef{\mbS}{{\mathbb S}} \ndef{\bfS}{{\mathbf S}} \ndef{\euS}{{\EuScript S}} \ndef{\frS}{{\mathfrak S}}
\ndef{\clT}{{\mathcal T}} \ndef{\rmT}{{\mathrm T}} \ndef{\mbT}{{\mathbb T}} \ndef{\bfT}{{\mathbf T}} \ndef{\euT}{{\EuScript T}} \ndef{\frT}{{\mathfrak T}}
\ndef{\clU}{{\mathcal U}} \ndef{\rmU}{{\mathrm U}} \ndef{\mbU}{{\mathbb U}} \ndef{\bfU}{{\mathbf U}} \ndef{\euU}{{\EuScript U}} \ndef{\frU}{{\mathfrak U}}
\ndef{\clV}{{\mathcal V}} \ndef{\rmV}{{\mathrm V}} \ndef{\mbV}{{\mathbb V}} \ndef{\bfV}{{\mathbf V}} \ndef{\euV}{{\EuScript V}} \ndef{\frV}{{\mathfrak V}}
\ndef{\clW}{{\mathcal W}} \ndef{\rmW}{{\mathrm W}} \ndef{\mbW}{{\mathbb W}} \ndef{\bfW}{{\mathbf W}} \ndef{\euW}{{\EuScript W}} \ndef{\frW}{{\mathfrak W}}
\ndef{\clX}{{\mathcal X}} \ndef{\rmX}{{\mathrm X}} \ndef{\mbX}{{\mathbb X}} \ndef{\bfX}{{\mathbf X}} \ndef{\euX}{{\EuScript X}} \ndef{\frX}{{\mathfrak X}}
\ndef{\clY}{{\mathcal Y}} \ndef{\rmY}{{\mathrm Y}} \ndef{\mbY}{{\mathbb Y}} \ndef{\bfY}{{\mathbf Y}} \ndef{\euY}{{\EuScript Y}} \ndef{\frY}{{\mathfrak Y}}
\ndef{\clZ}{{\mathcal Z}} \ndef{\rmZ}{{\mathrm Z}} \ndef{\mbZ}{{\mathbb Z}} \ndef{\bfZ}{{\mathbf Z}} \ndef{\euZ}{{\EuScript Z}} \ndef{\frZ}{{\mathfrak Z}}

\ndef{\tA}{{\widetilde A}} \ndef{\tcA}{{\widetilde\clA}} \ndef{\ttcA}{\widetilde{\tcA}} \ndef{\sfA}{{\textsf A}} \ndef{\ttA}{\widetilde{\tA}} \ndef{\dzA}{{A^\sharp}}
\ndef{\tB}{{\widetilde B}} \ndef{\tcB}{{\widetilde\clB}} \ndef{\ttcB}{\widetilde{\tcB}} \ndef{\sfB}{{\textsf B}} \ndef{\ttB}{\widetilde{\tB}} \ndef{\dzB}{{B^\sharp}}
\ndef{\tC}{{\widetilde C}} \ndef{\tcC}{{\widetilde\clC}} \ndef{\ttcC}{\widetilde{\tcC}} \ndef{\sfC}{{\textsf C}} \ndef{\ttC}{\widetilde{\tC}} \ndef{\dzC}{{C^\sharp}}
\ndef{\tD}{{\widetilde D}} \ndef{\tcD}{{\widetilde\clD}} \ndef{\ttcD}{\widetilde{\tcD}} \ndef{\sfD}{{\textsf D}} \ndef{\ttD}{\widetilde{\tD}} \ndef{\dzD}{{D^\sharp}}
\ndef{\tE}{{\widetilde E}} \ndef{\tcE}{{\widetilde\clE}} \ndef{\ttcE}{\widetilde{\tcE}} \ndef{\sfE}{{\textsf E}} \ndef{\ttE}{\widetilde{\tE}} \ndef{\dzE}{{E^\sharp}}
\ndef{\tF}{{\widetilde F}} \ndef{\tcF}{{\widetilde\clF}} \ndef{\ttcF}{\widetilde{\tcF}} \ndef{\sfF}{{\textsf F}} \ndef{\ttF}{\widetilde{\tF}} \ndef{\dzF}{{F^\sharp}}
\ndef{\tG}{{\widetilde G}} \ndef{\tcG}{{\widetilde\clG}} \ndef{\ttcG}{\widetilde{\tcG}} \ndef{\sfG}{{\textsf G}} \ndef{\ttG}{\widetilde{\tG}} \ndef{\dzG}{{G^\sharp}}
\ndef{\tH}{{\widetilde H}} \ndef{\tcH}{{\widetilde\clH}} \ndef{\ttcH}{\widetilde{\tcH}} \ndef{\sfH}{{\textsf H}} \ndef{\ttH}{\widetilde{\tH}} \ndef{\dzH}{{H^\sharp}}
\ndef{\tI}{{\widetilde I}} \ndef{\tcI}{{\widetilde\clI}} \ndef{\ttcI}{\widetilde{\tcI}} \ndef{\sfI}{{\textsf I}} \ndef{\ttI}{\widetilde{\tI}} \ndef{\dzI}{{I^\sharp}}
\ndef{\tJ}{{\widetilde J}} \ndef{\tcJ}{{\widetilde\clJ}} \ndef{\ttcJ}{\widetilde{\tcJ}} \ndef{\sfJ}{{\textsf J}} \ndef{\ttJ}{\widetilde{\tJ}} \ndef{\dzJ}{{J^\sharp}}
\ndef{\tK}{{\widetilde K}} \ndef{\tcK}{{\widetilde\clK}} \ndef{\ttcK}{\widetilde{\tcK}} \ndef{\sfK}{{\textsf K}} \ndef{\ttK}{\widetilde{\tK}} \ndef{\dzK}{{K^\sharp}}
\ndef{\tL}{{\widetilde L}} \ndef{\tcL}{{\widetilde\clL}} \ndef{\ttcL}{\widetilde{\tcL}} \ndef{\sfL}{{\textsf L}} \ndef{\ttL}{\widetilde{\tL}} \ndef{\dzL}{{L^\sharp}}
\ndef{\tM}{{\widetilde M}} \ndef{\tcM}{{\widetilde\clM}} \ndef{\ttcM}{\widetilde{\tcM}} \ndef{\sfM}{{\textsf M}} \ndef{\ttM}{\widetilde{\tM}} \ndef{\dzM}{{M^\sharp}}
\ndef{\tN}{{\widetilde N}} \ndef{\tcN}{{\widetilde\clN}} \ndef{\ttcN}{\widetilde{\tcN}} \ndef{\sfN}{{\textsf N}} \ndef{\ttN}{\widetilde{\tN}} \ndef{\dzN}{{N^\sharp}}
\ndef{\tO}{{\widetilde O}} \ndef{\tcO}{{\widetilde\clO}} \ndef{\ttcO}{\widetilde{\tcO}} \ndef{\sfO}{{\textsf O}} \ndef{\ttO}{\widetilde{\tO}} \ndef{\dzO}{{O^\sharp}}
\ndef{\tP}{{\widetilde P}} \ndef{\tcP}{{\widetilde\clP}} \ndef{\ttcP}{\widetilde{\tcP}} \ndef{\sfP}{{\textsf P}} \ndef{\ttP}{\widetilde{\tP}} \ndef{\dzP}{{P^\sharp}}
\ndef{\tQ}{{\widetilde Q}} \ndef{\tcQ}{{\widetilde\clQ}} \ndef{\ttcQ}{\widetilde{\tcQ}} \ndef{\sfQ}{{\textsf Q}} \ndef{\ttQ}{\widetilde{\tQ}} \ndef{\dzQ}{{Q^\sharp}}
\ndef{\tR}{{\widetilde R}} \ndef{\tcR}{{\widetilde\clR}} \ndef{\ttcR}{\widetilde{\tcR}} \ndef{\sfR}{{\textsf R}} \ndef{\ttR}{\widetilde{\tR}} \ndef{\dzR}{{R^\sharp}}
\ndef{\tS}{{\widetilde S}} \ndef{\tcS}{{\widetilde\clS}} \ndef{\ttcS}{\widetilde{\tcS}} \ndef{\sfS}{{\textsf S}} \ndef{\ttS}{\widetilde{\tS}} \ndef{\dzS}{{S^\sharp}}
\ndef{\tT}{{\widetilde T}} \ndef{\tcT}{{\widetilde\clT}} \ndef{\ttcT}{\widetilde{\tcT}} \ndef{\sfT}{{\textsf T}} \ndef{\ttT}{\widetilde{\tT}} \ndef{\dzT}{{T^\sharp}}
\ndef{\tU}{{\widetilde U}} \ndef{\tcU}{{\widetilde\clU}} \ndef{\ttcU}{\widetilde{\tcU}} \ndef{\sfU}{{\textsf U}} \ndef{\ttU}{\widetilde{\tU}} \ndef{\dzU}{{U^\sharp}}
\ndef{\tV}{{\widetilde V}} \ndef{\tcV}{{\widetilde\clV}} \ndef{\ttcV}{\widetilde{\tcV}} \ndef{\sfV}{{\textsf V}} \ndef{\ttV}{\widetilde{\tV}} \ndef{\dzV}{{V^\sharp}}
\ndef{\tW}{{\widetilde W}} \ndef{\tcW}{{\widetilde\clW}} \ndef{\ttcW}{\widetilde{\tcW}} \ndef{\sfW}{{\textsf W}} \ndef{\ttW}{\widetilde{\tW}} \ndef{\dzW}{{W^\sharp}}
\ndef{\tX}{{\widetilde X}} \ndef{\tcX}{{\widetilde\clX}} \ndef{\ttcX}{\widetilde{\tcX}} \ndef{\sfX}{{\textsf X}} \ndef{\ttX}{\widetilde{\tX}} \ndef{\dzX}{{X^\sharp}}
\ndef{\tY}{{\widetilde Y}} \ndef{\tcY}{{\widetilde\clY}} \ndef{\ttcY}{\widetilde{\tcY}} \ndef{\sfY}{{\textsf Y}} \ndef{\ttY}{\widetilde{\tY}} \ndef{\dzY}{{Y^\sharp}}
\ndef{\tZ}{{\widetilde Z}} \ndef{\tcZ}{{\widetilde\clZ}} \ndef{\ttcZ}{\widetilde{\tcZ}} \ndef{\sfZ}{{\textsf Z}} \ndef{\ttZ}{\widetilde{\tZ}} \ndef{\dzZ}{{Z^\sharp}}

\ndef{\bfa}{{\mathbf a}}
\ndef{\bfb}{{\mathbf b}}
\ndef{\bfc}{{\mathbf c}}
\ndef{\bfd}{{\mathbf d}}

\ndef{\euu}{{\EuScript u}}

  \ndef{\eps}{\varepsilon}


\let\geq\geqslant
\let\leq\leqslant

\ndef{\lims}[1]{\lim\limits_{#1}}
\ndef{\sums}[1]{\sum\limits_{#1}}
\ndef{\ints}[1]{\int_{#1}}
\ndef{\sups}[1]{\sup\limits_{#1}}
\ndef{\liminfty}[1]{\lims{#1\to\infty}}
\ndef{\suminf}[1]{\sums{#1=1}^\infty}

\ndef{\limo}[1]{\omega\mbox{-}\!\!\!\lims{#1\to\infty}}          
\ndef{\limL}[1]{\rmL\mbox{-}\!\!\!\lims{#1\to\infty}}            
\ndef{\limLOne}[1]{\clL_1\mbox{-}\!\lims{#1}}
\ndef{\tildelimo}[1]{\tilde\omega\mbox{-}\!\!\!\lims{#1\to\infty}}
\ndef{\slim}{\mathrm{s}\mbox{-}\!\!\lim}          
\ndef{\wlim}{\mathrm{w}\mbox{-}\!\lim}          

\ndef{\Aut}{\operatorname{Aut}}      
\ndef{\Ch}{\operatorname{ch}}        
\ndef{\End}{\operatorname{End}}      
\ndef{\Hom}{\operatorname{Hom}}      
\rndef{\ker}{\operatorname{ker}}      
\ndef{\coker}{\operatorname{coker}}      
\ndef{\im}{\operatorname{im}}        
\ndef{\Log}{\operatorname{Log}}      
\ndef{\OP}{\operatorname{OP}}        
\ndef{\Op}{\operatorname{Op}}        
\ndef{\Symb}{\operatorname{Symb}}    
\ndef{\Tr}{\operatorname{Tr}}        
\ndef{\Wres}{\operatorname{Wres}}    
\ndef{\cl}{\operatorname{cl}}        
\ndef{\com}{\operatorname{com}}
\ndef{\const}{\operatorname{const}}  
\ndef{\conv}{\operatorname{conv}}    
\rndef{\det}{\operatorname{det}}     

\ndef{\detFK}[1]{\Delta\brs{#1}} 
\ndef{\detFKrel}[2]{\Delta_{#2}\brs{#1}} 

\ndef{\adj}{\operatorname{adj}}    
\ndef{\diag}{\operatorname{diag}}    
\ndef{\dist}{\operatorname{dist}}    
\ndef{\dom}{\operatorname{dom}}      
\ndef{\ec}{\operatorname{ec}}        
\ndef{\id}{\mathrm{Id}}                        
\ndef{\ind}{\operatorname{ind}}      
\ndef{\mydeg}{\operatorname{deg}}    
\ndef{\op}{\operatorname{op}}
\ndef{\rank}{\operatorname{rank}}
\ndef{\res}{\operatorname{res}}      
\ndef{\rng}{\operatorname{ran}}      
\ndef{\sflow}{\operatorname{sf}}     
\ndef{\isf}{\operatorname{isf}}      
\ndef{\sign}{\operatorname{sign}}    
\ndef{\sgn}{\operatorname{sgn}}      
\ndef{\sing}{\operatorname{sing}}    
\ndef{\supp}{\operatorname{supp}}    
\ndef{\tr}{\operatorname{tr}}        
\ndef{\var}{\operatorname{var}}      
\ndef{\vol}{\operatorname{vol}}      
\ndef{\wn}{\operatorname{wn}}        
\ndef{\wres}{\operatorname{wres}}    
\rndef{\Im}{\operatorname{Im}}       
\rndef{\Re}{\operatorname{Re}}       

\ndef{\prng}[1]{\mathrm R_{#1}} 
\ndef{\pker}[1]{\mathrm N_{#1}} 
\ndef{\rprng}[2]{\mathrm R_{#1}^{#2}}           
\ndef{\rpker}[2]{\mathrm N_{#1}^{#2}}           
\ndef{\rsupp}[1]{\supp_r(#1)}
\ndef{\lsupp}[1]{\supp_l(#1)}
\ndef{\rslv}[1]{R_z(#1)}      
\ndef{\HH}{H}                 
\ndef{\tHH}{\tilde \HH}       
\ndef{\VV}{V}                 
\ndef{\Rz}{R_z}               
\ndef{\tRz}{\tR_z}            
\ndef{\psif}[1]{#1^{[1]}} 
\ndef{\WPlus}[1]{W_{#1}(\mbR)} 

\ndef{\bndl}{\xi}                         
\ndef{\bndlA}{\eta}                       
\ndef{\GlueMap}{\varphi}                  
\ndef{\ChartMap}{h}                       
\ndef{\chern}{\ensuremath{\mathrm{ch}}}
\ndef{\hilb}{\clH}                     
\ndef{\hilba}{\clH^{(a)}}                    
\ndef{\hilbs}{\clH^{(s)}}                    
   \ndef{\hilbasargument}{(\hilb)} 
\ndef{\LpH}[1]{\clL_{#1}\hilbasargument}       
\ndef{\saLpH}[1]{\clL_{sa}^{#1}\hilbasargument}       
\ndef{\clBH}{\clB\hilbasargument}              
\ndef{\ubBH}{\clB_1\hilbasargument}            
\ndef{\clCH}{\clC\hilbasargument}              
\ndef{\clKH}{\clK\hilbasargument}              
\ndef{\clFH}{\clF\hilbasargument}              
\ndef{\clUH}{\clU\hilbasargument}              
\ndef{\clCFH}{{\clC\clF}\hilbasargument}       
\ndef{\saBH}{\clB_{sa}\hilbasargument}         
\ndef{\saCH}{\clC_{sa}\hilbasargument}         
\ndef{\saFH}{\clF_{sa}\hilbasargument}         
\ndef{\saKH}{\clK_{sa}\hilbasargument}         
\ndef{\saCFH}{\clC\clF_{sa}\hilbasargument}    
\ndef{\clUFH}{\clU\clF\hilbasargument}         
\ndef{\Uinj}{\clU_{inj}\hilbasargument}        
\ndef{\UFinj}{\clU\clF_{inj}\hilbasargument}   

\ndef{\spproj}[2]{E^{#1}_{#2}}                      
\ndef{\spprojb}[2]{E^{#2}_{#1}}                     

\ndef{\LpN}[1]{\clL^{#1}(\clN,\tau)}     
\ndef{\saLpN}[1]{\clL^{#1}_{sa}(\clN,\tau)} 
\ndef{\rLpN}[1]{L^{#1}(\clN,\tau)}       
\ndef{\clAND}{(\clA,\clN,D)}             
\ndef{\clBA}{{\clB(\clA)}}
\ndef{\saKN}{{\clK_{sa}(\clN,\tau)}}          
\ndef{\clKN}{{\clK(\clN,\tau)}}          
\ndef{\clKtN}{{\clK(\tilde\clN,\tau)}}   
\ndef{\clFN}{{\clF(\clN,\tau)}}          
\ndef{\saFN}{{\clF_{sa}(\clN,\tau)}}     
\ndef{\clPN}{\clP(\clN)}                 
\ndef{\clQN}{\clQ(\clN,\tau)}            
\ndef{\infPN}{{\clP_\tau^\infty(\clN)}}  
\ndef{\clOF}[2]{\clF_{#1\mbox{-}#2}(\clN,\tau)}         
\ndef{\oind}[2]{{\rm \tau\mbox{-}ind}_{#1\mbox{-}#2}}   
\ndef{\tind}{\tau\mbox{-}\ind}                  
\ndef{\DInd}{\ind_{\clD,\tau}}           
\ndef{\BF}{Breuer-Fredholm}              
\ndef{\skewfred}[2]{$(#1\cdot #2)$ $\tau$\tire Fredholm}   
\ndef{\affl}{\eta}                       
\ndef{\vNa}{von Neumann algebra}         
\ndef{\nsf}{faithful normal semifinite } 
\ndef{\taubrs}[1]{\tau\brackets{#1}}     
\ndef{\sqbrs}[1]{[#1]}        
\ndef{\Sqbrs}[1]{\big[#1\big]}        
\ndef{\SqBrs}[1]{\Big[#1\Big]}        

\ndef{\domd}{\bigcap\limits_{n\ge 0} \dom\;\delta^n}         
\ndef{\DiffOP}{{\rm \clD}}
\ndef{\ADA}{\clA \cup [\clD,\clA]}
\ndef{\DixIdeal}[1]{\LpH{#1,\infty}}               
\ndef{\dixideal}{\ell^{1,\infty}}                  
\ndef{\WDixIdeal}{\LpH{1,\mathrm w}}               
\ndef{\DixIdealPos}[1]{\DixIdeal{#1}_+}            
\ndef{\DixIdealN}[1]{\LpN{#1,\infty}}              
\ndef{\DixIdealNPar}[2]{\clL^{#1,\infty}_{#2}(\clN,\tau)}    
\ndef{\DixIdealNPos}[1]{\LpN{#1,\infty}_+}                   
\ndef{\TrD}{\Tr_\omega}                                      
\ndef{\tauD}{{\tau_\omega}}                                  
\ndef{\ILogN}{\frac 1{\log(1+N)}}
\ndef{\DixNorm}[1]{\norm{#1}_{(1,\infty)}}                   
\ndef{\DixInt}[1]{\ints 0^t \mu_s(#1)\,ds}
\ndef{\DixIntL}[1]{\ints 0^{\lambda_{1/t}(#1)}\mu_s(#1)\,ds}
    \ndef{\SmallIdeal}{{\clL^{1, \mathrm w}}}
    \ndef{\SmallIdealMeas}{{\clL^{1, \mathrm w}_m}}
    \ndef{\DixIntII}[1]{\int_0^t \mu_s(#1)\,ds}
    \ndef{\DixIntf}[1]{\Phi_t(#1)}
    \ndef{\DixIntg}[1]{\Psi_t(#1)}

\ndef{\lpi}{\clL^{1,\pi}(\clN,\tau)}

\ndef{\strl}[1]{\stackrel \longrightarrow {#1}}
\ndef{\IIinfty}{$\mathrm{II}_\infty$\ }

\ndef{\fourier}[1]{\clF(#1)}          
\ndef{\HaarMeasBohrs}{\nu}            
\ndef{\BrownsMeas}{\mu}               
\ndef{\BohrCont}[1]{\tilde{#1}}       
\ndef{\APMean}{{M}}                   
\ndef{\CDSS}{{\clA_B}}                
\ndef{\matr}{{\rm Mat}}               
\ndef{\seque}[1]{\ensuremath{\{#1_n\}_{n=1}^\infty}}    
\ndef{\sequen}[2]{\ensuremath{\{#1_#2\}_{#2=1}^\infty}}    
\ndef{\Seque}[1]{\ensuremath{\left(#1_0,#1_1,#1_2,\dots\right)}}    
\ndef{\Cesaro}{H}                           
\ndef{\CesaroRPlus}{M}                      
\ndef{\Dilation}{D}                         
\ndef{\Shift}{T}                            

\ndef{\norm}[1]{\left\Vert#1\right\Vert}    
\ndef{\TrNorm}[1]{\norm{#1}_1}              
\ndef{\HSNorm}[1]{\norm{#1}_2}              
\ndef{\InftyNorm}[1]{\norm{#1}_\infty}      
\ndef{\normQN}[1]{\norm{#1}_{\clQN}}        
\ndef{\clLpnorm}[2]{\norm{#2}_{\clL^{#1}}}    
\ndef{\clLnorm}[1]{\clLpnorm{1}{#1}}    

\ndef{\ccurve}{\gamma}                      

\ndef{\abs}[1]{\left\lvert#1\right\rvert}   
\ndef{\set}[1]{\left\{#1\right\}}           
\ndef{\brackets}[1]{\left(#1\right)}        
\ndef{\brs}[1]{\brackets{#1}}               
\ndef{\Brs}[1]{\big(#1\big)}                
\ndef{\BRS}[1]{\Big(#1\Big)}                
\ndef{\scal}[2]{\left\langle #1,#2\right\rangle}               
\ndef{\precprec}{\prec\!\!\!\prec}
\ndef{\qeq}{\stackrel?=}
\ndef{\spectrum}[1]{\sigma_{#1}} 
\ndef{\spectruma}[1]{\sigma^{(a)}_{#1}} 
\ndef{\numrange}[1]{\mathrm{W}(#1)}                         
\rndef{\emptyset}{\varnothing}                              
\ndef{\csupp}{c}                           
\ndef{\closure}[1]{\overline{#1}}
\ndef{\linspan}[1]{\mathrm{span}\ {#1}}
\ndef{\bddborel}[1]{B(#1)}                 
\ndef{\charfunc}{\chi}
\ndef{\FrDer}{\euD}                        
\ndef{\LieDer}[1]{\pounds_{#1}\,}          
\ndef{\dds}{\left.\frac d{ds} \right|_{s = 0}}
\ndef{\ortcmp}[1]{#1^{\scriptscriptstyle \perp}}            
\ndef{\Laplace}{\Delta}                    

\ndef{\matrPQ}[3]
{
    \left(
      \begin{array}{cc}
        #1_{11} & #1_{12} \\
        #1_{21} & #1_{22}
      \end{array}
    \right)_{[#2,#3]}
}

\ndef{\margOK}{\marginpar{\bf \small OK}}

\newcounter{margcomcount}
\setcounter{margcomcount}{0}
\ndef{\margcom}[1]{\marginpar{\bf \small #1} \addtocounter{margcomcount}{1}
   \index{\indexcom{{\bf COMMENT: #1}}}}

\newcounter{margproof}
\setcounter{margproof}{0}
\ndef{\margproof}{\marginpar{\bf \small PROOF} \addtocounter{margproof}{1}
  \index{**** \indexcom{{\bf PROOF}}}}

\newcounter{margdetails}
\setcounter{margdetails}{0}
\ndef{\margdetails}{\marginpar{\bf Details} \addtocounter{margdetails}{1}
  \index{**** \indexcom{{\bf DETAILS}}}}

\newcounter{margproofb}
\setcounter{margproofb}{0}
\ndef{\margproofb}[1]{\marginpar{\bf \small Proof(B) #1} \addtocounter{margproofb}{1}
  \index{**** \indexcom{{\bf PROOF(B): #1}}}}

\newcounter{margdetailsb}
\setcounter{margdetailsb}{0}
\ndef{\margdetailsb}[1]{\marginpar{\bf \small Details(B)} \addtocounter{margdetailsb}{1}
  \index{**** \indexcom{{\bf DETAILS(B): \\ #1}}}}

\newcounter{margdetailsc}
\setcounter{margdetailsc}{0}
\ndef{\margdetailsc}[1]{\marginpar{\bf \small Details(C)} \addtocounter{margdetailsc}{1}
  \index{**** \indexcom{{\bf DETAILS(C): \\ #1}}}}

\newcounter{margcomcountb}
\setcounter{margcomcountb}{0}
\ndef{\margcomb}[1]{\marginpar{\bf \small #1} \addtocounter{margcomcountb}{1}
   \index{\indexcom{{\bf COMMENT(B): \\ #1}}}}


\ndef{\mytimes}{\!\times\!}
\ndef{\sss}[1]{\subsubsection{}\label{#1}}

\rndef{\phi}{\varphi} \ndef{\OpenUnitDisk}{D}
\ndef{\RHS}{RHS}                            
\ndef{\LHS}{LHS} 
\ndef{\ttt}{\Leftrightarrow}
\ndef{\then}{\Rightarrow}
\ndef{\tto}{\longrightarrow}
\ndef{\nno}{\nonumber\\}
\ndef{\newn}[1]{\index{#1} {\bfseries #1}}       
\ndef{\la}{\langle}
\ndef{\ra}{\rangle}
\ndef{\dbar}{{\;\bar{\phantom{o}} \!\!\!\! d}}
\ndef{\stl}[1]{\stackrel{\vbox to 0pt{\vss\hbox{$\scriptstyle #1$}}}}
\ndef{\mathcomment}[1]{{\hfill \qquad\qquad\qquad\text{by (#1)}}}        
\ndef{\mathcomm}[1]{{\hfill \qquad\qquad\qquad\qquad\qquad\text{#1}}}        
\ndef{\details}[1]{\smallskip\begin{center} {\bf Here:}
#1\end{center}\medskip} \ndef{\indexcom}[1]{ --- #1}
\ndef{\longsim}{\ \sim \ }              
\ndef{\tire}{-}              
\ndef{\intinfinf}{\int_{-\infty}^\infty}
\ndef{\refnsftrace}{\cite[V.\,2.\,1]{TakI}} 
\ndef{\refaffloper}{\cite[IV.\,5, Exercise 3]{TakI}} 
\ndef{\refsemifinvNa}{\cite[V.\,1.\,21]{TakI}} 
\ndef{\reftaumeasurable}{\cite[Definition 1.2]{FK86PJM}} 
\ndef{\reftautraceclassaffl}{\cite[V.2, p.\,320]{TakI}} 
\ndef{\refinvoperideal}{\cite[Appendix A.2]{CP2}} 
\ndef{\reftautracenorm}{\cite[V.2, p.\,320]{TakI}} 
\ndef{\reftaucompact}{\cite{}} 
\ndef{\reftauFredholm}{\cite[Appendix B]{PR94JFA}} 

     \ndef{\npartial}{\slash\!\!\!\partial}
     \ndef{\Heis}{\operatorname{Heis}}
     \ndef{\Solv}{\operatorname{Solv}}
     \ndef{\Spin}{\operatorname{Spin}}
     \ndef{\SO}{\operatorname{SO}}
     \ndef{\Index}{\operatorname{index}}

             \ndef{\p}{\partial}
             \ndef{\dd}{|\clD|}
             \ndef{\n}{\parallel}


     \setlength{\parskip}{.3cm}
     \ndef{\gf}[2]{\genfrac{}{}{0pt}{}{#1}{#2}}
     \ndef{\ta}{\widetilde{\alpha}}
     \ndef{\tb}{\widetilde{\beta}}
     \ndef{\txi}{\widetilde{\xi}}
     \ndef{\tk}{\widetilde{K}}
     \ndef{\CGh}{\widetilde{\CG}}
     \ndef{\boe}{{\bf e}}\ndef{\bt}{{\bf t}}
     \ndef{\vth}{\vartheta}
     \ndef{\db}{\overline{\partial}}
     \ndef{\hV}{\hat{V}}
     \ndef{\cag}{{\clA^\Gamma}}
     \ndef{\sind}{\sigma{\rm -ind}}

\newcounter{slidecount}
\setcounter{slidecount}{0}

\catcode`@=11
\newcommand{\slide}[2]{
  \newpage
  \addtocounter{slidecount}{1}
  \renewcommand{\@oddhead}{{\small Advanced Mathematics 1A -- 2009 \hfil Slide #1-\arabic{slidecount}}}
  \begin{center} \bf #2 \end{center}
}
\catcode`@=12
\let\LatexCite=\cite  

\let\ifnumref\iffalse 

\ndef{\ifuncited}[4]{\expandafter\ifx\csname used#4\endcsname\relax}

\ndef{\ifcited}[4]{\expandafter\ifx\csname used#4\endcsname\relax\else}



%
  \ndef{\papertitle}[1]{ \emph{#1}, }
  \ndef{\paperauthor}[2]{#2}  
  \ndef{\pbbi}[9]{%
      \ifcited{#1}{#2}{#3}{#5}%
        \ifnumref%
          \bibitem{#5}\paperauthor{#1}{#6},\papertitle{#7}#8.%
        \else%
          \advance #9 by 1%
          \ifnum#9<1%
            \bibitem[#4]{#5}\paperauthor{#1}{#6}, \papertitle{#7}#8.%
          \else%
            \bibitem[#4$_{\the#9}$]{#5}\paperauthor{#1}{#6},\papertitle{#7}#8.%
          \fi%
        \fi%
      \fi%
  }
  \ndef{\mbbi}[8]{%
     \ifcited{#1}{#2}{#3}{#5}%
        \ifnumref%
          \bibitem{#5}\paperauthor{#1}{#6},\papertitle{#7}#8.%
        \else%
          \bibitem[#4]{#5}\paperauthor{#1}{#6},\papertitle{#7}#8.%
        \fi%
     \fi%
  }

\ndef{\AddCite}[1]{%
   \ifuncited{0}{0}{0}{#1}%
     \expandafter\gdef\csname used#1\endcsname {}%
   \fi%
}

\def\ProcessCite#1,{%
     \ifx\relax#1%
         \let\next=\relax%
     \else%
         \AddCite{#1}%
         \let\next=\ProcessCite%
     \fi%
     \next%
}

\ndef{\AddCites}[1]{\ProcessCite#1,\relax,}

\ndef{\CiteWithoutExtension}[1]{%
   \AddCites{#1}%
   \LatexCite{#1}%
}

\def\CiteWithExtension[#1]#2{%
   \AddCites{#2}%
   \LatexCite[#1]{#2}%
}

\ndef{\CleverCite}{%
    \ifx\NChar[ %
       \let\MyCite=\CiteWithExtension %
    \else %
       \let\MyCite=\CiteWithoutExtension %
    \fi %
    \MyCite%
}

\renewcommand{\cite}{\futurelet\NChar\CleverCite}

      \ndef{\volume}[1]{{\bf #1}}
      \ndef{\VolYearPP}[3]{\ifnum#2=0 (to appear)\else\volume{#1} (#2), #3\fi}
      \ndef{\VolNoYearPP}[4]{\ifnum#3=0 (to appear)\else\volume{#1} #2 (#3), #4\fi}
      \ndef{\libcode}[1]{}

\ndef{\jnActaMath}[3]{Acta Math. \VolYearPP{#1}{#2}{#3}}                       
\ndef{\jnAdvMath}[3]{Adv. in~Math. \VolYearPP{#1}{#2}{#3}}                     
\ndef{\jnAlgAnal}[3]{Algebra i~Analiz \VolYearPP{#1}{#2}{#3}}
\ndef{\jnAmerJMath}[3]{Amer. J. Math. \VolYearPP{#1}{#2}{#3}}                  
\ndef{\jnAmerMathMonth}[3]{Amer. Math. Monthly \VolYearPP{#1}{#2}{#3}}         
\ndef{\jnAnnMath}[4]{Ann. of~Math. \VolNoYearPP{#1}{#2}{#3}{#4}}               
\ndef{\jnAnalMath}[3]{J. Anal. Math. \VolYearPP{#1}{#2}{#3}}                   
\ndef{\jnArchRatMechAnal}[3]{Arch. Rational Mech. Anal. \VolYearPP{#1}{#2}{#3}}                   
\ndef{\jnBullLondMathSoc}[3]{Bull. London Math. Soc. \VolYearPP{#1}{#2}{#3}}   
\ndef{\jnBullAMS}[3]{Bull. Amer. Math. Soc. \VolYearPP{#1}{#2}{#3}}   
\ndef{\jnCanMathBull}[3]{Canad. Math. Bull. \VolYearPP{#1}{#2}{#3}}            
\ndef{\jnCanMath}[3]{Canad. J.~Math. \VolYearPP{#1}{#2}{#3}}             
\ndef{\jnCommMathPhys}[3]{Comm. Math. Phys. \VolYearPP{#1}{#2}{#3}}             
\ndef{\jnCommPDE}[3]{Comm. Partial Differential Equations \VolYearPP{#1}{#2}{#3}}             
\ndef{\jnComptRendue}[3]{C.\,R.~Acad. Sci. Paris S\'er. A-B \VolYearPP{#1}{#2}{#3}}      
\ndef{\jnContMath}[3]{Contemporary Math. \VolYearPP{#1}{#2}{#3}}               %
\ndef{\jnDukeMJ}[3]{Duke Math. J. \VolYearPP{#1}{#2}{#3}}
\ndef{\jnDiffGeom}[3]{J.~Diff. Geom. \VolYearPP{#1}{#2}{#3}}                   
\ndef{\jnErgodicTheory}[3]{Ergodic Theory and Dynamical Systems \VolYearPP{#1}{#2}{#3}} 
\ndef{\jnFuncAnal}[3]{J.~Functional Analysis \VolYearPP{#1}{#2}{#3}}           
\ndef{\jnFunkAnalPril}[4]{Funct. Anal. Appl. \VolNoYearPP{#1}{#2}{#3}{#4}}  
\ndef{\jnGAFA}[3]{GAFA \VolYearPP{#1}{#2}{#3}}                                 
\ndef{\jnIHES}[3]{IHES Publ. Math. (Paris) \VolYearPP{#1}{#2}{#3}}             
\ndef{\jnIEOT}[3]{Integral Equations Operator Theory   \VolYearPP{#1}{#2}{#3}} 
\ndef{\jnIsrMath}[3]{Israel J.~Math. \VolYearPP{#1}{#2}{#3}}                   
\ndef{\jnKTheory}[3]{K-Theory \VolYearPP{#1}{#2}{#3}}                          
\ndef{\jnLetMathPhys}[3]{Lett. Math. Phys. \VolYearPP{#1}{#2}{#3}}             
\ndef{\jnMathAnn}[3]{Math. Ann. \VolYearPP{#1}{#2}{#3}}                        
\ndef{\jnMathAnalAppl}[3]{J.~Math. Anal. and Appl. \VolYearPP{#1}{#2}{#3}}     
\ndef{\jnMathNachr}[3]{Math. Nachr. \VolYearPP{#1}{#2}{#3}}
\ndef{\jnMathPhys}[3]{J. Math. Phys. \VolYearPP{#1}{#2}{#3}}
\ndef{\jnMathSocJap}[3]{J. Math. Soc. Japan \VolYearPP{#1}{#2}{#3}}
\ndef{\jnOperTheory}[3]{J.~Operator Theory \VolYearPP{#1}{#2}{#3}}             
\ndef{\jnPacJMath}[3]{Pacific J.~Math. \VolYearPP{#1}{#2}{#3}}                  
\ndef{\jnPositivity}[3]{Positivity \VolYearPP{#1}{#2}{#3}}
\ndef{\jnProcAmerMS}[3]{Proc. Amer. Math. Soc. \VolYearPP{#1}{#2}{#3}}         
\ndef{\jnProcCambPhilSoc}[3]{Math. Proc. Camb. Phil. Soc. \VolYearPP{#1}{#2}{#3}}
\ndef{\jnReineAngew}[3]{J.~Reine Angew. Math. \VolYearPP{#1}{#2}{#3}}          
\ndef{\jnTokyoMath}[3]{Tokyo J.~Math. \VolYearPP{#1}{#2}{#3}}
\ndef{\jnTopology}[3]{Topology \VolYearPP{#1}{#2}{#3}}
\ndef{\jnTransAmerMathSoc}[3]{Trans. Amer. Math. Soc. \VolYearPP{#1}{#2}{#3}}
\ndef{\jnIzvANSSSR}[3]{Izv. Akad. Nauk SSSR, Ser. Mat. \VolYearPP{#1}{#2}{#3}}
\ndef{\jnIzvVyshUchZav}[3]{Izv. Vyssh. Uch. Zav., Mat. \VolYearPP{#1}{#2}{#3} (Russian)}
\ndef{\jnIzdatLenUniv}[2]{Izdat. Leningrad. Univ., Leningrad, (#1), #2 (Russian)}
\ndef{\jnFieldsInsComm}[3]{Fields Inst. Comm. \VolYearPP{#1}{#2}{#3}}
\ndef{\jnDoklANSSSR}[3]{Dokl. Akad. Nauk SSSR \VolYearPP{#1}{#2}{#3}}
\ndef{\jnMatZametki}[3]{Matem. zametki \VolYearPP{#1}{#2}{#3}}
\ndef{\jnRussMathSurvey}[3]{Russian Math. Surveys \VolYearPP{#1}{#2}{#3}}
\ndef{\jnSibMathJ}[3]{Sib. Math.~J. \VolYearPP{#1}{#2}{#3}}
\ndef{\jnSovMath}[3]{J.~Soviet math. \VolYearPP{#1}{#2}{#3}}
\ndef{\jnTransMoscMathSoc}[3]{Trans. Moscow Math. Soc. \VolYearPP{#1}{#2}{#3}}
\ndef{\jnUMN}[3]{Uspekhi Mat. Nauk \VolYearPP{#1}{#2}{#3}}

\ndef{\bkTransMathMon}[2]{Trans. Math. Monographs, AMS, \volume{#1}, #2}

\ndef{\pbBirkhauser}[1]{Birkh\"auser, Boston, #1}
\ndef{\pbFactorial}[1]{Moscow, Factorial, #1}
\ndef{\pbGauthier}[1]{Gauthier-Villars, Paris, #1}
\ndef{\pbNauka}[1]{Moscow, Nauka, #1 (Russian)}
\ndef{\pbNaukaR}[1]{Москва, Наука, #1}
\ndef{\pbPrinceton}[1]{Princeton University Press, Princeton, New Jersey, #1}
\ndef{\pbPublPerish}[1]{Publish or Perish Inc., Berkeley, #1}
\ndef{\pbSpringer}[1]{Springer-Verlag, #1}

\ndef{\myauthor}[1]{\mbox{#1}}

\ndef{\Agmon}{\myauthor{Sh.\,Agmon}}
\ndef{\Ahiezer}{\myauthor{N.\,I.\,Ahiezer}}
\ndef{\Arazy}{\myauthor{J.\,Arazy}}
\ndef{\Aronszajn}{\myauthor{N.\,Aronszajn}}
\ndef{\Astashkin}{\myauthor{S.\,V.\,Astashkin}}
\ndef{\Atiyah}{\myauthor{M.\,Atiyah}}
\ndef{\Avron}{\myauthor{J.\,E.\,Avron}}
\ndef{\Azamov}{\myauthor{N.\,A.\,Azamov}}
\ndef{\Banach}{\myauthor{S.\,Banach}}
\ndef{\Benameur}{\myauthor{M-T.\,Benameur}}
\ndef{\Bennett}{\myauthor{C.\,Bennett}}
\ndef{\Berezin}{\myauthor{F.\,A.\,Berezin}}
\ndef{\Berline}{\myauthor{N.\,Berline}}
\ndef{\Birman}{\myauthor{M.\,Sh.\,Birman}}
\ndef{\Blackadar}{\myauthor{B.\,Blackadar}}
\ndef{\Bogolyubov}{\myauthor{N.\,N.\,Bogolyubov}}
\ndef{\Bonsall}{\myauthor{F.\,F.\,Bonsall}}
\ndef{\Bony}{\myauthor{J.\,F.\,Bony}}
\ndef{\BoosBavnbek}{\myauthor{B.\,Boo$\beta$-Bavnbek}}
\ndef{\Bott}{\myauthor{R.\,Bott}}
\ndef{\Branges}{\myauthor{L.\,de Branges}}
\ndef{\Bratteli}{\myauthor{O.\,Bratteli}}
\ndef{\Bredon}{\myauthor{G.\,E.\,Bredon}}
\ndef{\Breuer}{\myauthor{M.\,Breuer}}
\ndef{\Brown}{\myauthor{L.\,G.\,Brown}}
\ndef{\Bruneau}{\myauthor{V.\,Bruneau}}
\ndef{\Buslaev}{\myauthor{V.\,S.\,Buslaev}}
\ndef{\Carey}{\myauthor{A.\,L.\,Carey}}
\ndef{\CareyRW}{\myauthor{R.\,W.\,Carey}} 
\ndef{\Cartan}{\myauthor{H.\,Cartan}}
\ndef{\Chilin}{\myauthor{V.\,I.\,Chilin}}
\ndef{\Coburn}{\myauthor{L.\,A.\,Coburn}}
\ndef{\Connes}{\myauthor{A.\,Connes}}
\ndef{\Cornfeld}{\myauthor{I.\,P.\,Cornfeld}}
\ndef{\Daletskii}{\myauthor{Yu.\,L.\,Daletski\u\i}}   
\ndef{\Dixmier}{\myauthor{J.\,Dixmier}}
\ndef{\DoddsPG}{\myauthor{P.\,G.\,Dodds}}
\ndef{\DoddsTK}{\myauthor{T.\,K.\,Dodds}}
\ndef{\Douglas}{\myauthor{R.\,G.\,Douglas}}
\ndef{\Dubrovin}{\myauthor{B.\,A.\,Dubrovin}}
\ndef{\Dugundji}{\myauthor{J.\,Dugundji}}
\ndef{\Duncan}{\myauthor{J.\,Duncan}}
\ndef{\Dunford}{\myauthor{N.\,Dunford}}
\ndef{\Dykema}{\myauthor{K.\,J.\,Dykema}}
\ndef{\Edwards}{\myauthor{R.\,E.\,Edwards}}
\ndef{\Eilenberg}{\myauthor{S.\,Eilenberg}}
\ndef{\Entina}{\myauthor{S.\,B.\,\`Entina}}
\ndef{\Fack}{\myauthor{T.\,Fack}} 
\ndef{\Faddeev}{\myauthor{L.\,D.\,Faddeev}}
\ndef{\Farber}{\myauthor{M.\,Farber}}
\ndef{\Farforovskaya}{\myauthor{Yu.\,B.\,Farforovskaya}}
\ndef{\Federer}{\myauthor{H.\,Federer}}
\ndef{\Fedosov}{\myauthor{B.\,V.\,Fedosov}}
\ndef{\Figiel}{\myauthor{T.\,Figiel}} 
\ndef{\Figueroa}{\myauthor{H.\,Figueroa}}
\ndef{\Fillmore}{\myauthor{P.\,A.\,Fillmore}}
\ndef{\Fomenko}{\myauthor{A.\,T.\,Fomenko}} 
\ndef{\Fomin}{\myauthor{S.\,V.\,Fomin}}
\ndef{\Frohlich}{\myauthor{J.\,Fr\"ohlich}}
\ndef{\Fuglede}{\myauthor{B.\,Fuglede}}
\ndef{\Furutani}{\myauthor{K.\,Furutani}}
\ndef{\Gelfand}{\myauthor{I.\,M.\,Gelfand}}
\ndef{\Gesztesy}{\myauthor{F.\,Gesztesy}}     
\ndef{\Getzler}{\myauthor{E.\,Getzler}} 
\ndef{\Gilkey}{\myauthor{P.\,B.\,Gilkey}}
\ndef{\Gitler}{\myauthor{S.\,Gitler}}
\ndef{\Glazman}{\myauthor{I.\,M.\,Glazman}}
\ndef{\Glimm}{\myauthor{J.\,Glimm}}
\ndef{\Gohberg}{\myauthor{I.\,C.\,Gohberg}}
\ndef{\Goldshtein}{\myauthor{Ya.\,Goldshtein}}
\ndef{\Golze}{\myauthor{F.\,Golze}}
\ndef{\GraciaBondia}{\myauthor{J.\,M.\,Gracia-Bond\'{i}a}}
\ndef{\Greenleaf}{\myauthor{F.\,P.\,Greenleaf}}
\ndef{\Gromov}{\myauthor{M.\,Gromov}}
\ndef{\Gunning}{\myauthor{R.\,C.\,Gunning}}
\ndef{\Haagerup}{\myauthor{U.\,Haagerup}}
\ndef{\Haag}{\myauthor{R.\,Haag}}
\ndef{\Halmos}{\myauthor{P.\,R.\,Halmos}}
\ndef{\Hardy}{\myauthor{G.\,H.\,Hardy}}
\ndef{\Herbst}{\myauthor{I.\,W.\,Herbst}}
\ndef{\Higson}{\myauthor{N.\,Higson}}  
\ndef{\Hoermander}{\myauthor{L.\,Hoermander}} 
\ndef{\Hoffman}{\myauthor{K.\,Hoffman}} 
\ndef{\Ito}{\myauthor{K.\,Ito}}
\ndef{\Ikebe}{\myauthor{T.\,Ikebe}}
\ndef{\Jaffe}{\myauthor{A.\,Jaffe}}
\ndef{\James}{\myauthor{I.\,M.\,James}}
\ndef{\Javrjan}{\myauthor{V.\,A.\,Javrjan}}
\ndef{\Jitomirskaya}{\myauthor{S.\,Jitomirskaya}}
\ndef{\Kadison}{\myauthor{R.\,V.\,Kadison}}
\ndef{\Kalton}{\myauthor{N.\,J.\,Kalton}} 
\ndef{\Kato}{\myauthor{T.\,Kato}} 
\ndef{\Kobayashi}{\myauthor{S.\,Kobayashi}}
\ndef{\Koplienko}{\myauthor{L.\,S.\,Koplienko}}
\ndef{\Korotyaev}{\myauthor{E.\,Korotyaev}}
\ndef{\Kosaki}{\myauthor{H.\,Kosaki}}
\ndef{\Kostrykin}{\myauthor{V.\,Kostrykin}}
\ndef{\Kotani}{\myauthor{S.\,Kotani}}
\ndef{\Krein}{\myauthor{Kre\u\i n}}
\ndef{\KreinMG}{\myauthor{M.\,G.\,Kre\u\i n}}
\ndef{\KreinSG}{\myauthor{S.\,G.\,Kre\u\i n}}
\ndef{\Kuroda}{\myauthor{S.\,T.\,Kuroda}}
\ndef{\Leichtnam}{\myauthor{E.\,Leichtnam}}
\ndef{\Lesch}{\myauthor{M.\,Lesch}}
\ndef{\Lesniewski}{\myauthor{A.\,Lesniewski}}
\ndef{\Levitan}{\myauthor{B.\,M.\,Levitan}}
\ndef{\Lidskii}{\myauthor{V.\,B.\,Lidskii}}
\ndef{\Lifshits}{\myauthor{I.\,M.\,Lifshits}}
\ndef{\Lindenstrauss}{\myauthor{J.\,Lindenstrauss}}
\ndef{\Loday}{\myauthor{J.-L.\,Loday}}
\ndef{\Lord}{\myauthor{S.\,Lord}}      
\ndef{\Lorentz}{\myauthor{G.\,Lorentz}}
\ndef{\Magnus}{\myauthor{W.\,Magnus}}
\ndef{\Makarov}{\myauthor{K.\,A.\,Makarov}}
\ndef{\MakarovN}{\myauthor{N.\,Makarov}}
\ndef{\Mathai}{\myauthor{V.\,Mathai}}         
\ndef{\McKean}{\myauthor{H.\,P.\,McKean}}
\ndef{\Mishchenko}{\myauthor{A.\,S.\,Mishchenko}}
\ndef{\Molchanov}{\myauthor{S.\,A.\,Molchanov}}
\ndef{\Moore}{\myauthor{C.\,C.\,Moore}}
\ndef{\Moscovici}{\myauthor{H.\,Moscovici}}  
\ndef{\Motovilov}{\myauthor{A.\,K.\,Motovilov}}
\ndef{\Moyer}{\myauthor{R.\,D.\,Moyer}}
\ndef{\Naboko}{\myauthor{S.\,N.\,Naboko}}
\ndef{\Narasimhan}{\myauthor{R.\,Narasimhan}}
\ndef{\Nomizu}{\myauthor{K.\,Nomizu}}
\ndef{\Novikov}{\myauthor{S.\,P.\,Novikov}}
\ndef{\Osterwalder}{\myauthor{K.\,Osterwalder}}
\ndef{\Patodi}{\myauthor{V.\,Patodi}}
\ndef{\Pagter}{\myauthor{B.\,de~Pagter}}  
\ndef{\Pastur}{\myauthor{L.\,A.\,Pastur}}  
\ndef{\Pavlov}{\myauthor{B.\,S.\,Pavlov}}
\ndef{\Pedersen}{\myauthor{G.\,K.\,Pedersen}}
\ndef{\Peller}{\myauthor{V.\,V.\,Peller}}
\ndef{\Perera}{\myauthor{V.\,S.\,Perera}}
\ndef{\Petunin}{\myauthor{Ju.\,I.\,Petunin}}
\ndef{\Phillips}{\myauthor{J.\,Phillips}}  
\ndef{\Piazza}{\myauthor{P.\,Piazza}}   
\ndef{\Pincus}{\myauthor{J.\,D.\,Pincus}}   
\ndef{\Poincare}{Poincar\'e}
\ndef{\Postnikov}{\myauthor{M.\,M.\,Postnikov}} 
\ndef{\Povzner}{\myauthor{A.\,Ya.\,Povzner}}
\ndef{\Prinzis}{\myauthor{R.\,Prinzis}}
\ndef{\Privalov}{\myauthor{I.\,I.\,Privalov}}
\ndef{\Pushnitski}{\myauthor{A.\,B.\,Pushnitski}} 
\ndef{\Raeburn}{\myauthor{I.\,Raeburn}}
\ndef{\Raikov}{\myauthor{G.\,Raikov}}
\ndef{\Reed}{\myauthor{M.\,Reed}}
\ndef{\Rennie}{\myauthor{A.\,Rennie}}
\ndef{\Rickart}{\myauthor{C.\,E.\,Rickart}}
\ndef{\Riesz}{\myauthor{F.\,Riesz}}
\ndef{\Ringrose}{\myauthor{J.\,Ringrose}}
\ndef{\Rio}{\myauthor{R.\,del Rio}}
\ndef{\Robinson}{\myauthor{D.\,Robinson}}
\ndef{\Rossi}{\myauthor{H.\,Rossi}}
\ndef{\Rudin}{\myauthor{W.\,Rudin}}
\ndef{\Ruelle}{\myauthor{D.\,Ruelle}}
\ndef{\Ruzhansky}{\myauthor{M.\,Ruzhansky}}
\ndef{\Sakai}{\myauthor{Sh.\,Sakai}}
\ndef{\Sargsjan}{\myauthor{I.\,S.\,Sargsjan}}
\ndef{\Sato}{\myauthor{H.\,Sato}}
\ndef{\Schaeffer}{\myauthor{D.\,G.\,Schaeffer}}
\ndef{\Schluchtermann}{\myauthor{G.\,Schluchtermann}}
\ndef{\Schochet}{\myauthor{C.\,Schochet}}
\ndef{\SchroedingerE}{\myauthor{E.\,Schr\"odinger}}
\ndef{\Schroedinger}{\myauthor{Schr\"odinger}}
\ndef{\Schrohe}{\myauthor{E.\,Schrohe}}
\ndef{\Schwartz}{\myauthor{J.\,T.\,Schwartz}}
\ndef{\Sedaev}{\myauthor{A.\,A.\,Sedaev}}
\ndef{\Seiler}{\myauthor{R.\,Seiler}}
\ndef{\Semenov}{\myauthor{E.\,M.\,Semenov}}
\ndef{\Shabat}{\myauthor{B.\,V.\,Shabat}}
\ndef{\Shafarevich}{\myauthor{I.\,R.\,Shafarevich}}
\ndef{\Sharpley}{\myauthor{R.\,Sharpley}}
\ndef{\Shilov}{\myauthor{G.\,E.\,Shilov}}
\ndef{\Shirkov}{\myauthor{D.\,V.\,Shirkov}}
\ndef{\Shubin}{\myauthor{M.\,A.\,Shubin}}
\ndef{\Silverman}{\myauthor{H.\,Silverman}}
\ndef{\Simon}{\myauthor{B.\,Simon}}
\ndef{\Sinai}{\myauthor{Ya.\,G.\,Sinai}}
\ndef{\Singer}{\myauthor{I.\,M.\,Singer}}
\ndef{\Solomyak}{\myauthor{M.\,Z.\,Solomyak}}
\ndef{\Soloviev}{\myauthor{Yu.\,P.\,Soloviev}}
\ndef{\Spivak}{\myauthor{M.\,Spivak}}
\ndef{\Stein}{\myauthor{E.\,M.\,Stein}}
\ndef{\Stenkin}{\myauthor{V.\,V.\,Sten'kin}}
\ndef{\Stratila}{\myauthor{S.\,Stratila}}
\ndef{\Sucheston}{\myauthor{L.\,Sucheston}}
\ndef{\Sukochev}{\myauthor{F.\,A.\,Sukochev}}
\ndef{\Switzer}{\myauthor{R.\,M.\,Switzer}}
\ndef{\SzNagy}{\myauthor{B.\,Sz.-Nagy}}
\ndef{\Takesaki}{\myauthor{M.\,Takesaki}}
\ndef{\Taylor}{\myauthor{M.\,E.\,Taylor}}
\ndef{\Treves}{\myauthor{F.\,Treves}}
\ndef{\Troitsky}{\myauthor{E.\,V.\,Troitsky}}
\ndef{\Tzafriri}{\myauthor{L.\,Tzafriri}}
\ndef{\Varilly}{\myauthor{J.\,C.\,V\'{a}rilly}}
\ndef{\Vergne}{\myauthor{M.\,Vergne}}
\ndef{\Vladimirov}{\myauthor{V.\,S.\,Vladimirov}}
\ndef{\Voiculescu}{\myauthor{D.\,Voiculescu}}
\ndef{\Weiss}{\myauthor{G.\,Weiss}}
\ndef{\Wells}{\myauthor{R.\,O.\,Wells}}
\ndef{\Williams}{\myauthor{J.\,P.\,Williams}}
\ndef{\Winkler}{\myauthor{S.\,Winkler}}
\ndef{\Witten}{\myauthor{E.\,Witten}}
\ndef{\Wodzicki}{\myauthor{M.\,Wodzicki}}
\ndef{\Wojciechowski}{\myauthor{K.\,P.\,Wojciechowski}}
\ndef{\Yafaev}{\myauthor{D.\,R.\,Yafaev}}
\ndef{\Yosida}{\myauthor{K.\,Yosida}}
\ndef{\Zsido}{\myauthor{L.\,Zsido}}


\textwidth 16cm \textheight 24cm \topmargin -0.5cm  \oddsidemargin
0cm \evensidemargin 0cm

\ndef{\hlambda}{{\mathfrak h_\lambda}}

\begin{document}
\title[A constructive approach]{A constructive approach to stationary \\ scattering theory}
\author{\Azamov}
\address{School of Computer Science, Engineering and Mathematics
   \\ Flinders University
   \\ Bedford Park, 5042, SA Australia.}
\email{nurulla.azamov@flinders.edu.au}
\keywords{Wave matrix, scattering matrix, wave operator, scattering operator, sheaf of Hilbert spaces, rigged Hilbert space, Schr\"odinger operator}

\subjclass[2000]{ 
    47A55. 
}
\begin{abstract} In this paper we give a new and constructive approach to stationary
scattering theory for pairs of self-adjoint operators~$H_0$ and
$H_1$ on a Hilbert space~$\hilb$ which satisfy the following
conditions: (i) for any open bounded subset
$\Delta$ of $\mbR,$ the operators $F E_\Delta^{H_0}$ and $F
E_\Delta^{H_1}$ are Hilbert-Schmidt and
(ii) $V = H_1- H_0$ is bounded and admits decomposition
$V = F^*JF,$ where~$F$ is a bounded operator with trivial kernel
from~$\hilb$ to another Hilbert space~$\clK$ and $J$ is a bounded
self-adjoint operator on $\clK.$ An example of a pair of
operators which satisfy these conditions is the Schr\"odinger
operator $H_0 = -\Delta + V_0$ acting on $L^2(\mbR^\nu),$ where
$V_0$ is a potential of class $K_\nu$ (see B.\,Simon, {\it Schr\"odinger semigroups,} Bull. AMS \VolYearPP{7}{1982}{447--526}) and $H_1 = H_0
+ V_1,$ where $V_1 \in L^\infty(\mbR^\nu) \cap L^1(\mbR^\nu).$
Among results of this paper is a new proof of existence and completeness of wave operators $W_\pm(H_1,H_0)$
and a new constructive proof of stationary formula for the scattering matrix.
This approach to scattering theory is based on explicit diagonalization of a self-adjoint operator~$H$
on a sheaf of Hilbert spaces $\euS(H,F)$ associated with the pair $(H,F)$ and with subsequent construction and study of properties of wave matrices $w_\pm(\lambda; H_1,H_0)$
acting between fibers $\hlambda(H_0,F)$ and $\hlambda(H_1,F)$ of sheaves $\euS(H_0,F)$ and $\euS(H_1,F)$ respectively.
The wave operators  $W_\pm(H_1,H_0)$ are then defined as direct integrals of wave matrices and are proved to coincide with classical
time-dependent definition of wave operators.
\end{abstract}
\maketitle

\tableofcontents
\section{Introduction and preliminaries}
In this paper we develop a new approach to stationary scattering theory of abstract Schr\"odinger equation $\psi_t = -i H \psi.$
It is known that in some cases wave operators (see e.g. \cite{BW,RS3,TayST,Ya})
\begin{equation} \label{F: def of W(+-)}
  W_\pm(H_1,H_0) = \mbox{s}-\lim_{t \to \pm \infty} e^{itH_1}e^{-itH_0}P^{(a)}(H_0)
\end{equation}
of two self-adjoint operators~$H_0$ and~$H_1$ can be defined in terms of eigenfunction expansions of absolutely continuous parts of both the initial~$H_0$
and perturbed~$H_1$ operators, see e.g. formula (83) in \cite{RS3}. This relation between wave operators and eigenfunction expansions was used for instance by \Povzner\ \cite{Povz53,Povz55}
and \Ikebe\ \cite{Ikebe60} (see also e.g. \cite{Thoe67,ASch71,Agm,KuJMSJ73I,KuJMSJ73II,Kur} and \cite[\S XI.6]{RS3}) in the case of Schr\"odinger operators $-\Delta + V$ on $L^2(\mbR^\nu).$
The approach of this paper to scattering theory as well as that of \cite{Az3v6} combines the auxiliary space method (see for example \cite{Agm}, \cite[\S XI.6 Appendix]{RS3})
and definition of the wave matrix via eigenfunction expansions. In this regard, this method has something in common with potential scattering theory.
On the other hand, the premise of the approach presented in this paper is of abstract trace-class type and as such it belongs to the trace-class method of Birman-Kato theory,
as opposed to the smooth method of Kato \cite{Kato66,Kato68} which originated as an abstraction of potential scattering theory methods (see also \cite[Chapter 4]{Ya}).

Apart from wave operators~(\ref{F: def of W(+-)}), in scattering theory there are three other objects of prime importance,
the scattering operator
$$
  \bfS(H_1,H_0) = W^*_+(H_1,H_0)W_-(H_1,H_0),
$$
the wave matrices $w_\pm(\lambda; H_1,H_0)$ and the scattering matrix $S(\lambda; H_1,H_0),$ which are related by formulas
\begin{equation} \label{F: bfS and W(pm) = int}
  \bfS(H_1,H_0) = \int^\oplus_{\hat \sigma(H_0)} S(\lambda; H_1,H_0)\,d\rho(\lambda), \quad W_\pm(H_1,H_0) = \int^\oplus_{\hat \sigma(H_0)} w_\pm(\lambda; H_1,H_0)\,d\rho(\lambda),
\end{equation}
where the wave matrices $w_\pm(\lambda; H_1,H_0)$ are unitary operators $\hlambda(H_0) \to \hlambda(H_1)$ and the scattering matrix is a unitary operator
$\hlambda(H_0) \to \hlambda(H_0)$ for a.e. $\lambda \in \hat \sigma(H_0),$ $\hat \sigma(H_j)$ is a core of spectrum of $H_j$ and $\hlambda(H_j), j=0,1,$ are fiber Hilbert spaces from the direct integrals
\begin{equation} \label{F: euF(j)}
  \euF_j \colon \hilb^{(a)}(H_j) \cong  \int^\oplus _{\hat \sigma(H_0)} \hlambda(H_j)\,d\rho(\lambda),
\end{equation}
which diagonalize absolutely continuous parts of the operators $H_j, j=0,1,$ see for instance~\cite{Ya}.
For the scattering matrix $S(\lambda; H_1,H_0)$ there exist explicit stationary formulas, which are important in physics.
An advantage of the new approach to scattering theory of relatively trace-class perturbations considered in this paper is that
in this theory we first construct the operators $w_\pm(\lambda; H_1,H_0)$ on a pre-defined and explicitly described core of the absolutely continuous spectrum.
This circumstance has its advantages, since, for instance, in some questions it is necessary to consider a family of scattering matrices $\set{w_\pm(\lambda; H_r,H_0) \colon r \in [a,b]}$
for a.e. $\lambda \in \mbR.$ This is impossible unless at the very least for each $r \in [a,b]$ we know an explicit description (that is, not up to a null set of uncertain nature) of the set of values of $\lambda$
for which the wave matrices $w_\pm(\lambda; H_r,H_0)$ exist. In potential scattering theory for Schr\"odinger operators such an explicit description of the core exists, namely it is
the set $(0,\infty) \setminus e_+(H),$ where $e_+(H)$ is the discrete set of eigenvalues of the Schr\"odinger operator $H=-\Delta+V,$ see e.g. \cite{Agm}.
In conventional trace-class method such a description does not exist, since this method relies on abstract spectral theorem for arbitrary self-adjoint operators.
None of the numerous versions of the spectral theorem (see for instance \cite{RS1}) gives an explicit description of a core of spectrum
which can be used for this purpose; moreover it is known to be generally impossible.
Unlike the usual trace-class method in scattering theory, our approach addresses this issue by introducing an additional structure into the Hilbert space~$\hilb,$
on which operators~$H_0$ and~$H_1$ act, in the form of a fixed bounded operator~$F$ with trivial kernel from~$\hilb$ to possibly another Hilbert space $\clK.$
This additional structure which we call rigging generates
the auxiliary Hilbert spaces~$\hilb_{\pm 1}$ which serve as analogous of the weighted $L^2$-spaces $L^{2,s}(\mbR^\nu)$ used in potential scattering (see e.g. \cite{Agm},\cite[\S XI.6, Appendix]{RS3})
and, more importantly, it allows to construct an explicit diagonalization of absolutely continuous parts of both the initial~$H_0$ and perturbed~$H_1$ self-adjoint operators.
It may be worth mentioning that the diagonalization process works equally well for any self-adjoint operator which is compatible in a certain sense with the rigging~$F,$
unlike the situation in the potential scattering theory where the initial operator $H_0 = -\Delta$ is trivially diagonalized by the Fourier transform, while perturbed operator $-\Delta+V$ is very hard
to diagonalize. Once both operators~$H_0$ and~$H_1$ are diagonalized, the wave matrices can be defined and their properties studied. The core of spectrum for \emph{all} elements $\lambda$ of which the wave matrices are constructed
allows an explicit description, --- namely, it is the set of full Lebesgue measure for which abstract limiting absorption principle holds for both~$H_0$ and $H_1.$
Finally, we note that the new approach allows to avoid many technical difficulties of the trace-class method approach to stationary scattering theory, as it is given for example in \cite{Ya},
and therefore it is essentially simpler; moreover, some tools used in this approach such as an explicit diagonalization of a self-adjoint operator are of interest on their own.
Other objects of scattering theory such as wave operators, the scattering matrix and the scattering operator are defined via the wave matrices.
Trace-class version of this approach has found applications to the theory of spectral shift function, cf. \cite{Az3v6}.

\subsection{Hilbert-Schmidt rigging}
\label{SS: HS rigging}
In case of Hilbert-Schmidt rigging~$F$ this program was carried out in \cite{Az3v6}. Since the perturbation operator $V=H_1-H_0$ is assumed to admit decomposition $V=F^*JF$
with bounded operator $J$ on $\clK,$ the setting of \cite{Az3v6} covers only trace-class perturbations $V.$ In this paper we show that the condition ``$F$ is Hilbert-Schmidt''
can be replaced by the condition ``for any bounded Borel sets~$\Delta$ of real numbers the operators $FE^{H_0}_\Delta, FE^{H_1}_\Delta$ are Hilbert-Schmidt''.
Below we give a brief description of results of \cite{Az3v6}, both for the purpose of introduction and as a preliminary material used in this paper.

Assume that we are given a Hilbert space~$\hilb$ rigged with a Hilbert-Schmidt operator~$F$ with trivial kernel and dense range. Let
$$
  F = \sum_{j=1}^\infty \kappa_j \scal{\phi_j}{\cdot}\psi_j, \ \ F \colon \hilb \to \clK,
$$
be a fixed Schmidt representation of~$F,$ where $(\kappa_j)$ are $s$-numbers of~$F,$ $(\phi_j)$ is an orthonormal basis of~$\hilb,$
$(\psi_j)$ is an orthonormal basis of~$\clK$ (see e.g. \cite{RS1}). For any self-adjoint operator~$H$ which acts on such a rigged Hilbert space $(\hilb,F)$
we introduce the following objects which depend only on the pair $(H,F).$

I. A \underline{set $\Lambda(H,F)$} of real numbers, defined as follows: $\lambda \in \Lambda(H,F)$ if and only if (i) the trace-class operator
$$
  T_{\lambda + iy}(H) := F R_{\lambda + iy}(H) F^* := F \brs{H-\lambda - iy}^{-1} F^*
$$
has a limit, denoted by $T_{\lambda + i0}(H),$ in uniform operator norm as $y \to 0^+$
and (ii) the imaginary part
$$
  \Im T_{\lambda + iy}(H) = \frac {1}{2i} (T_{\lambda + iy}(H) - T_{\lambda - iy}(H))
$$
of $T_{\lambda + iy}(H)$ has a limit, denoted by $\Im T_{\lambda + i0}(H),$ in trace-class norm.

\begin{thm} \label{T: Abstract L.A.P.} (The abstract limiting absorption principle \cite{BE}, see also \cite[\S 6.1]{Ya})
\ The set $\Lambda(H,F)$ has full Lebesgue measure, that is, the Lebesgue measure of the set $\mbR \setminus \Lambda(H,F)$ is zero.
\end{thm}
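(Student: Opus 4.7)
The plan is to recast $T_z(H)$ as a Cauchy transform of an operator-valued Borel measure and then invoke a Banach-space-valued Lebesgue differentiation theorem. Define the positive trace-class-valued measure
\[
  \mu(B) := F E^H_B F^*, \qquad B \subseteq \mbR \text{ Borel}.
\]
Since $F$ is Hilbert--Schmidt, each $\mu(B)$ is a positive trace-class operator on $\clK$, and the total variation of $\mu$ in trace-class norm is bounded by $\operatorname{Tr}(FF^*) = \|F\|_2^2 < \infty$. The spectral theorem for $H$ yields $T_z(H) = \int_\mbR (x-z)^{-1}\,d\mu(x)$, and therefore
\[
  \Im T_{\lambda+iy}(H) = \int_\mbR \frac{y}{(x-\lambda)^2 + y^2}\,d\mu(x) = \pi\,(P_y \ast \mu)(\lambda),
\]
where $P_y(x) = \pi^{-1}y/(x^2+y^2)$ is the Poisson kernel.

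Next, I would apply the Lebesgue--Radon--Nikodym theorem for measures valued in the separable Banach space $\mathcal{L}^1(\clK)$, which has the Radon--Nikodym property as the separable dual of the compact operators. This produces a decomposition $\mu = \mu_{ac} + \mu_s$ and a Bochner-integrable density $\phi : \mbR \to \mathcal{L}^1_+(\clK)$ with $d\mu_{ac} = \phi\,d\lambda$. Let $\Omega$ consist of those $\lambda$ that are simultaneously trace-norm Lebesgue points of $\phi$ and points of zero symmetric trace-norm density of $\mu_s$, that is, $(2\epsilon)^{-1}\|\mu_s([\lambda-\epsilon,\lambda+\epsilon])\|_1 \to 0$. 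Then $\Omega$ has full Lebesgue measure by the vector-valued differentiation theorem for $\mathcal{L}^1$-valued measures of bounded variation.

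It remains to check $\Omega \subseteq \Lambda(H,F)$. For condition (ii), the classical approximate-identity argument transfers verbatim to the trace-norm setting, using only positivity and radial monotonicity of $P_y$, and gives $\Im T_{\lambda+iy}(H) \to \pi\,\phi(\lambda)$ in trace norm. For condition (i), I split the Cauchy integral at distance $\delta$:
\[
  T_{\lambda+iy}(H) = \int_{|x-\lambda|\geq\delta}\frac{d\mu(x)}{x-\lambda-iy} + \int_{|x-\lambda|<\delta}\frac{d\mu(x)}{x-\lambda-iy}.
\]
The first integral is trace-norm continuous in $y$ down to $y=0$ for fixed $\delta > 0$. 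In the second, the singular part is negligible by the zero-density condition; for the absolutely continuous part one decomposes $d\mu_{ac}(x) = \phi(\lambda)\,dx + (\phi(x)-\phi(\lambda))\,dx$, applies the scalar Plemelj--Sokhotski limit $\int_{-\delta}^\delta (x-iy)^{-1}\,dx \to i\pi$ to the constant piece, and controls the remainder using the Lebesgue-point property of $\phi$. Sending $y \to 0^+$ and then $\delta \to 0^+$ yields uniform-norm convergence of $T_{\lambda+iy}(H)$.

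The main obstacle is transferring the classical scalar differentiation and Cauchy-integral boundary-value theorems to the trace-norm setting. This rests on (a) the Radon--Nikodym property of $\mathcal{L}^1(\clK)$, and (b) the observation that the scalar maximal-function estimates use only positivity and integrability of the kernels, hence extend verbatim to positive trace-class-valued measures. Since the result is attributed to Birman--Entina \cite{BE} and treated in \cite[\S 6.1]{Ya}, the detailed verification of these upgrades is standard; the plan above isolates where the genuine measure-theoretic content sits.
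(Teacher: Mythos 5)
The paper itself offers no proof of this theorem: it is imported wholesale from Birman--Entina \cite{BE} (see also \cite[\S 6.1]{Ya}), so what you are really attempting is a reconstruction of that classical argument. Your setup is sound: $\mu(B)=FE^H_BF^*$ is a positive $\mathcal L_1(\clK)$-valued measure of finite total variation $\operatorname{Tr}(FF^*)$, trace class has the Radon--Nikodym property as a separable dual, and the approximate-identity argument for the Poisson kernel does transfer to the trace-norm setting at trace-norm Lebesgue points of the density and at points of vanishing symmetric derivative of $|\mu_s|$. That disposes of condition (ii).

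The gap is in condition (i), in both halves of your local term. First, for the absolutely continuous part, the real part of the kernel is $\frac{x-\lambda}{(x-\lambda)^2+y^2}$, so after subtracting $\phi(\lambda)$ you are asking for the existence of the truncated Hilbert transform $\mathrm{p.v.}\int_{|x-\lambda|<\delta}\frac{\phi(x)-\phi(\lambda)}{x-\lambda}\,dx$. The Lebesgue-point property does \emph{not} control this: it only gives $\frac{1}{2\eps}\int_{|x-\lambda|<\eps}\|\phi(x)-\phi(\lambda)\|\,dx\to 0$, and the integral $\int_{|x-\lambda|<\delta}\|\phi(x)-\phi(\lambda)\|\,|x-\lambda|^{-1}dx$ obtained by putting the norm inside can diverge (averages decaying like $1/\log(1/\eps)$ already defeat it). Almost-everywhere existence of the Hilbert transform of an $L^1$ density is a genuine cancellation theorem (weak-type $(1,1)$ for the maximal truncated Hilbert transform), and in the vector-valued setting it is norm-sensitive: it holds for $\mathcal L_2(\clK)$-valued densities because $\mathcal L_2$ is a Hilbert space, but the corresponding $\mathcal L_1(\clK)$-valued statement is not available ($\mathcal L_1$ is not UMD). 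This is precisely why \cite{BE} obtain the limit of $T_{\lambda+i0}$ only in Hilbert--Schmidt norm while $\Im T_{\lambda+i0}$ converges in trace norm --- the remark immediately after the theorem in the paper records exactly this asymmetry. Second, your claim that ``the singular part is negligible by the zero-density condition'' is also false for the real part of the Cauchy kernel: writing $N(r)=|\mu_s|([\lambda-r,\lambda+r])$ and integrating by parts, the bound $N(r)\le\eps r$ only yields $\int_{|x-\lambda|<\delta}\frac{|x-\lambda|}{(x-\lambda)^2+y^2}\,d|\mu_s|(x)\lesssim \eps\log(1+\delta^2/y^2)$, which blows up as $y\to0^+$. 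The conjugate Poisson integral of a singular measure does have finite limits a.e., but that is the Plessner--Privalov theorem on nontangential boundary values of Cauchy integrals, not a density argument. So the measure-theoretic content you have correctly isolated for $\Im T$ does not suffice for $T$ itself; to repair the proof you must invoke the a.e.\ boundary-value theory for Cauchy transforms (scalar case) together with the $\mathcal L_2(\clK)$-valued Calder\'on--Zygmund upgrade, after noting that $\|\phi(\cdot)\|_{\mathcal L_2}\le\|\phi(\cdot)\|_{\mathcal L_1}$ places the density in $L^1(\mbR;\mathcal L_2(\clK))$, and that $\mathcal L_2$-norm convergence implies the uniform-norm convergence required in the definition of $\Lambda(H,F)$.
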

\noindent In fact, the abstract limiting absorption principle asserts that the limit $T_{\lambda + i0}(H)$ exists for a.e. $\lambda$ in stronger Hilbert-Schmidt norm,
but we shall not need this. We prefer to use the qualifier ``abstract'' in the name of this limiting absorption principle to distinguish it from
the limiting absorption principle for differential operators, see e.g. \cite{Agm,Kur}. The set of full Lebesgue measure $\Lambda(H,F)$
will play the role of the core of spectrum $\hat \sigma(H)$ mentioned previously. Of course, the set $\Lambda(H,F)$ can be much larger than the spectrum of $H,$
but what is important for us is the following property of  $\Lambda(H,F).$

\begin{thm} \label{T: H E(H) is a.c.} The operator $H E^H_{\Lambda(H,F)}$ is absolutely continuous.
\end{thm}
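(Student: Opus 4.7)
The goal is to show that $\operatorname{Ran} E^H_{\Lambda(H,F)} \subseteq \hilb^{(a)}(H)$, which is exactly the statement that $H E^H_{\Lambda(H,F)}$ is absolutely continuous. Since the rigging operator $F$ has trivial kernel and dense range, $F^*\colon \clK \to \hilb$ has dense range as well, so it is enough to prove $E^H_{\Lambda(H,F)} F^* h \in \hilb^{(a)}(H)$ for every $h\in\clK$; the general case then follows by continuity of $E^H_{\Lambda(H,F)}$ together with the closedness of $\hilb^{(a)}(H)$.

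Fix $h\in\clK$ and let $\xi := F^*h$, with spectral measure $\mu_\xi$ of $H$ at $\xi$. Unfolding the sandwiched resolvent and using $T_{\lambda + iy}(H) = FR_{\lambda+iy}(H)F^*$ gives
$$
  \tfrac{1}{\pi}\Im \scal{T_{\lambda+iy}(H)h}{h}_\clK = \tfrac{1}{\pi}\Im\scal{R_{\lambda+iy}(H)\xi}{\xi}_\hilb = \int_\mbR \frac{y/\pi}{(t-\lambda)^2+y^2}\,d\mu_\xi(t),
$$
which is the Poisson integral $P_y\mu_\xi(\lambda)$ of the finite positive Borel measure $\mu_\xi$. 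By the very definition of $\Lambda(H,F)$, for each $\lambda\in\Lambda(H,F)$ the operator $\Im T_{\lambda+iy}(H)$ has a trace-norm limit as $y\to 0^+$, so in particular the scalar quantity $\Im\scal{T_{\lambda+iy}(H)h}{h}_\clK$ converges to a finite value for every such $\lambda$.

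Now invoke the classical Fatou/de la Vall\'ee Poussin boundary theorem for finite positive measures on $\mbR$: $P_y\mu(\lambda)\to +\infty$ at $\mu_s$-almost every $\lambda$, where $\mu_s$ is the singular part of $\mu$. Applied to $\mu_\xi$ this forces $\mu_\xi^{(s)}(\Lambda(H,F))=0$, i.e.\ $E^H_{\Lambda(H,F)}\xi\in\hilb^{(a)}(H)$. Combined with the density of $F^*\clK$ from the first paragraph, this yields the theorem.

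The only non-routine ingredient is the Fatou-type criterion used in the last step, namely that a finite radial boundary value of the Poisson integral on a set rules out singular spectral mass on that set; everything else reduces to identifying $\tfrac{1}{\pi}\Im\scal{T_{\lambda+iy}(H)h}{h}$ with a Poisson integral and exploiting the trivial-kernel/dense-range hypothesis on $F$. Since the argument is pointwise in $\lambda$, no uniformity in $y$ beyond the pointwise existence of the limit guaranteed by $\lambda\in\Lambda(H,F)$ is required.
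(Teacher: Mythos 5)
Your argument is correct: the reduction to vectors $\xi=F^*h$ via density of $\rng(F^*)$ (from $\ker F=\{0\}$) and closedness of $\hilb^{(a)}(H)$, the identification of $\Im\scal{T_{\lambda+iy}(H)h}{h}$ with the Poisson integral of $\mu_\xi$, and the de la Vall\'ee Poussin/Fatou fact that the singular part lives where the Poisson integral diverges, together give exactly the claim. The paper itself only cites \cite{Ya} and \cite{Az3v6} for this theorem, and your proof is essentially the standard argument found there, so no further comparison is needed.
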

\noindent
This theorem asserts that the set $\Lambda(H,F)$ cuts away from the set of all real numbers the singular part of the spectrum of $H.$
In particular, the null set $\mbR \setminus \Lambda(H,F)$ contains all eigenvalues of $H.$
Proof can be found in e.g. \cite{Ya}, see also \cite[Corollary 2.5.3]{Az3v6}.

II. Now for each $\lambda \in \Lambda(H,F)$ we introduce the \underline{fiber Hilbert space $\hlambda(H,F)$}
as a certain (closed linear) subspace of the Hilbert space $\ell_2.$
For any $\lambda \in \Lambda(H,F)$ and $y > 0$ let (see \cite[(2.10)]{Az3v6})
\begin{equation} \label{F: phi(l+iy)}
  \phi(\lambda+iy) = \frac 1\pi \brs{\scal{\psi_i}{\Im T_{\lambda + iy}(H) \psi_j}}_{i,j=1}^\infty;
\end{equation}
this is the matrix of the operator $\frac 1\pi\Im T_{\lambda + iy}(H)$ in the basis $(\psi_j)$ of $\clK,$ and therefore
$\phi(\lambda+iy)$ is a positive trace-class operator on $\ell_2.$ Let also
\begin{equation} \label{F: eta(l+iy)}
  \eta(\lambda+iy) = \sqrt{\phi(\lambda+iy)};
\end{equation}
this is a positive Hilbert-Schmidt operator on $\ell_2.$ By Theorem~\ref{T: Abstract L.A.P.}, there exist the trace-class limit $\phi(\lambda+i0)$
and the compact norm limit $\eta(\lambda+i0).$
We define the fiber Hilbert space $\hlambda = \hlambda(H,F)$ as a subspace of $\ell_2$ by
\begin{equation} \label{F: hlambda}
  \hlambda = \closure{\rng{\eta(\lambda+i0)}}.
\end{equation}
Dimension of $\hlambda$ can be zero too. If necessary we indicate depends of $\hlambda$ on $H,$ but we shall always omit~$F$
and write $\hlambda(H).$

III. With a Hilbert-Schmidt rigging~$F$ one can associate in a standard way two Hilbert spaces $\hilb_{\pm 1} = \hilb_{\pm 1}(F)$
with natural Hilbert-Schmidt inclusions (see e.g. \cite[XI.6 Appendix]{RS3}, \cite[\S 2.6]{Az3v6})
\begin{equation} \label{F: hilb(1) hilb hilb(-1)}
  \hilb_1(F) \subset \hilb \subset \hilb_{-1}(F).
\end{equation}
The Hilbert space $\hilb_{1}$ is a vector space $\rng(F^*)$ endowed with scalar product $\scal{\cdot}{\cdot}_{\hilb_1}$
defined by formula
$$
  \scal{F^*\psi'}{F^*\psi''}_{\hilb_1} = \scal{\psi'}{\psi''}_\clK.
$$
The Hilbert space $\hilb_{-1}$ is the closure of~$\hilb$ endowed with scalar product $\scal{\cdot}{\cdot}_{\hilb_{-1}}$
defined by formula
$$
  \scal{\phi'}{\phi''}_{\hilb_{-1}} = \scal{F\phi'}{F\phi''}_\clK.
$$
There exists a natural pairing $\scal{\cdot}{\cdot}_{1,-1} \colon \hilb_{1} \times \hilb_{-1} \to \mbC$
defined by formula: for all $\phi',\phi'' \in \rng(F^*)$
$$
  \scal{\phi'}{\phi''}_{1,-1} = \scal{\phi'}{\phi''}.
$$
IV. For all $\lambda \in \Lambda(H,F)$ we define the \underline{evaluation operator}
\begin{equation} \label{F: euE def of}
  \euE_\lambda = \euE_\lambda(H,F) \colon \hilb_1(F) \to \hlambda
\end{equation}
as follows. Any vector $f \in \hilb_1(F)$ considered as an element of~$\hilb$ can uniquely be written as
$$
  f = \sum_{j=1}^\infty \beta_j \kappa_j \phi_j,
$$
where $(\beta_j) \in \ell_2.$ The value of the operator $\euE_\lambda$ at $f \in \hilb_1(F)$ is defined by formula
\begin{equation} \label{F: euE(f)=sum beta(j)eta(j)}
  \euE_\lambda(f) = \sum_{j=1}^\infty \beta_j \eta_j(\lambda),
\end{equation}
where $\eta_j(\lambda)$ is the $j$-th column of the matrix $\eta(\lambda+i0)$ (see (\ref{F: eta(l+iy)})).
One can show that this correctly defines a Hilbert-Schmidt operator (\ref{F: euE def of}),
see \cite[\S 3.1]{Az3v6} for details.

V. We define a \underline{direct integral Hilbert space $\euH = \euH(H,F)$} by formula
$$
  \euH = \int^\oplus_{\Lambda(H,F)} \hlambda(H)\,d\lambda.
$$
As a measurability base of this direct integral one can take functions $\Lambda(H,F) \ni \lambda \mapsto \euE_\lambda(\phi_j) \in \hlambda,$ $j=1,2,\ldots.$
Recall that it is possible that $\dim \hlambda = 0;$ thus, the set $\Lambda(H,F)$ can be replaced by the set
$\set{\lambda \in \Lambda(H,F) \colon \dim \hlambda = 0},$ which is a core of absolutely continuous spectrum of $H;$ but we prefer to work with the set
$\Lambda(H,F).$ Further, the choice of Lebesgue measure $d\lambda$ in definition of $\euH$ is not necessary, but quite natural as we shall see.

VI. One can now consider the \underline{operator $\euE = \euE(H,F) \colon \hilb \to \euH$} defined for $f \in \hilb_1$ by formula
$$
  [\euE(f)](\lambda) = \euE_\lambda(f).
$$
\begin{thm} \label{T: euE} \cite[Proposition 3.2.1, Proposition 3.3.5, Theorem 3.4.2]{Az3v6}
The operator $\euE\colon \hilb \to \euH$ is bounded, it vanishes on the singular subspace $\hilb^{(s)}(H)$ of $H,$
it is isometric on the absolutely continuous subspace $\hilb^{(a)}(H)$ of $H,$ and it is onto, that is, $\rng(\euE) = \euH.$
Further, the operator $\euE$ diagonalizes the absolutely continuous part of $H,$ that is, for all $f \in \dom(H)$ and for a.e. $\lambda \in \Lambda(H,F),$
we have
\begin{equation} \label{F: [euE(Hf)(l)=...}
  [\euE(Hf)](\lambda) = \lambda \euE_\lambda(f).
\end{equation}
\end{thm}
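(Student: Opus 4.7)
The plan is to first establish a Parseval-type identity for $\euE$ on the dense subspace $\hilb_1 = \rng F^*$, then extend by continuity, with diagonalization and surjectivity following as consequences. For $f = F^*g \in \hilb_1$ and a bounded interval $(a,b)$, I would combine Stone's formula with the abstract limiting absorption principle. Since $F$ is Hilbert--Schmidt, sandwiching Stone's formula with $F$ and $F^*$ upgrades the strong operator convergence to trace-norm convergence, yielding
$$
  \frac{1}{\pi} \int_a^b \Im T_{\lambda+iy}(H)\,d\lambda \; \longrightarrow \; F E^H_{(a,b)} F^* \qquad (y \to 0^+),
$$
modulo endpoint atoms. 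On the other hand, Theorem~\ref{T: Abstract L.A.P.} provides $\Im T_{\lambda+iy}(H) \to \pi\,\phi(\lambda+i0)$ in trace norm for a.e.\ $\lambda$; together with uniform integrability (from trace-norm boundedness of the displayed integral in $y$), Vitali's theorem yields $\frac{1}{\pi}\int_a^b \Im T_{\lambda+iy}(H)\,d\lambda \to \int_a^b \phi(\lambda+i0)\,d\lambda$. Theorem~\ref{T: H E(H) is a.c.} confines the singular spectrum of $H$ to the null set $\mbR \setminus \Lambda(H,F)$, so the two limits agree on the absolutely continuous part, giving $\int_a^b \phi(\lambda+i0)\,d\lambda = F E^{H,(a)}_{(a,b)} F^*$. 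Pairing with $g$ and using $\euE_\lambda(F^*g) = \eta(\lambda+i0)g$ from~(\ref{F: euE(f)=sum beta(j)eta(j)}) produces the Parseval identity
$$
  \| E^{H,(a)}_{(a,b)} f \|^2_{\hilb} = \int_a^b \|\euE_\lambda f\|^2_{\hlambda}\,d\lambda, \qquad f \in \hilb_1.
$$

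Letting $(a,b) \nearrow \mbR$ gives $\|\euE f\|_{\euH} = \|f^{(a)}\|_{\hilb} \le \|f\|_{\hilb}$ on $\hilb_1$. Since $\hilb_1 = \rng F^*$ is dense in $\hilb$ (as $F$ is injective) and $f \mapsto f^{(a)} = E^H_{\Lambda(H,F)} f$ is a bounded projection by Theorem~\ref{T: H E(H) is a.c.}, the identity $\|\euE f\|_{\euH} = \|f^{(a)}\|_{\hilb}$ extends to all of $\hilb$. This delivers boundedness of $\euE$, vanishing on $\hilb^{(s)}(H)$, and isometry on $\hilb^{(a)}(H)$ in one stroke. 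For the diagonalization, polarizing the Parseval identity produces
$$
  \scal{f}{E^{H,(a)}_{(a,b)} g}_{\hilb} = \int_a^b \scal{\euE_\lambda f}{\euE_\lambda g}_{\hlambda}\,d\lambda, \qquad f,g \in \hilb_1,
$$
and this extends by continuity to all $f,g \in \hilb$. For $f \in \dom(H)$, the spectral theorem rewrites $\scal{Hf}{E^{H,(a)}_{(a,b)} g}_{\hilb}$ as $\int_a^b \lambda\,\scal{\euE_\lambda f}{\euE_\lambda g}_{\hlambda}\,d\lambda$, while the extended polarized identity expresses the same quantity as $\int_a^b \scal{\euE_\lambda(Hf)}{\euE_\lambda g}_{\hlambda}\,d\lambda$. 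Subtracting, varying $(a,b)$, and letting $g$ range over a countable subset of $\hilb_1$ whose $\euE_\lambda$-images are dense in $\hlambda$ for a.e.\ $\lambda$ forces $\euE(Hf)(\lambda) = \lambda\,\euE_\lambda f$ a.e., which is~(\ref{F: [euE(Hf)(l)=...}).

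For surjectivity, the diagonalization and the functional calculus give $\euE(\varphi(H) f)(\lambda) = \varphi(\lambda)\,\euE_\lambda f$ for every bounded Borel $\varphi$, so $\rng \euE$ is invariant under pointwise multiplication by bounded Borel functions. If $\xi \in \euH$ is orthogonal to $\rng \euE$, then taking $f = F^*\psi_j = \kappa_j \phi_j \in \hilb_1$ (for which $\euE_\lambda f = \eta_j(\lambda)$) yields $\int_\mbR \varphi(\lambda)\,\scal{\xi(\lambda)}{\eta_j(\lambda)}_{\hlambda}\,d\lambda = 0$ for every bounded Borel $\varphi$ and every $j$, whence $\scal{\xi(\lambda)}{\eta_j(\lambda)}_{\hlambda} = 0$ a.e.\ for each $j$. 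Since $\{\eta_j(\lambda)\}_{j \ge 1}$ spans $\rng \eta(\lambda+i0)$, which is dense in $\hlambda$ by~(\ref{F: hlambda}), this forces $\xi = 0$ in $\euH$, so $\rng \euE = \euH$. I expect the main obstacle to be the Parseval identity of the first paragraph: reconciling the Stone-formula limit, which captures the full projection $E^H_{(a,b)}$ with atoms appearing as delta-like spikes in $\Im T_{\lambda+iy}$ that disappear under pointwise a.e.\ convergence, with the pointwise abstract-LAP limit. This is precisely the step where Theorem~\ref{T: H E(H) is a.c.} intervenes to identify the missing mass as the singular part of the spectrum.
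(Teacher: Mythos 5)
The paper itself does not prove Theorem~\ref{T: euE}; it is imported verbatim from \cite{Az3v6} as preliminary material, so there is no in-text proof to set yours against. That said, your outline is the natural (and, in substance, the cited source's) argument: a Parseval identity $\norm{P^{(a)}E^H_{(a,b)}f}^2=\int_a^b\norm{\euE_\lambda f}^2\,d\lambda$ on the dense subspace $\hilb_1=\rng(F^*)$ obtained by comparing Stone's formula with the a.e.\ boundary values of $\frac1\pi\Im T_{\lambda+iy}(H)$, extension by density to get the partial-isometry structure, polarization plus a countable total family ($g=\phi_j$, whose images $\kappa_j^{-1}\eta_j(\lambda)$ are a.e.\ total in $\hlambda$ by~(\ref{F: hlambda})) for the intertwining relation, and density plus closedness of the range of a partial isometry for surjectivity. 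All of those closing steps are sound.

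The one step that does not survive scrutiny as written is the appeal to Vitali's theorem. Trace-norm boundedness in $y$ of $\int_a^b\Im T_{\lambda+iy}(H)\,d\lambda$ does \emph{not} yield uniform integrability of the family $\lambda\mapsto\Im T_{\lambda+iy}(H)$: if $H$ has an eigenvalue $\lambda_0\in(a,b)$ whose eigenvector is not annihilated by $F,$ then $\scal{g}{\Im T_{\lambda+iy}(H)g}$ contains a term $c\,y/((\lambda-\lambda_0)^2+y^2)$ whose integral over any neighbourhood of $\lambda_0$ tends to $\pi c$ as $y\to0^+$ while the pointwise limit is $0$ a.e.; the family is bounded in $L^1(a,b)$ but not uniformly integrable, and the interchange $\lim_y\int_a^b=\int_a^b\lim_y$ genuinely fails, the defect being exactly the singular mass in $(a,b).$ You sense this in your closing remark, but the repair is not to invoke Theorem~\ref{T: H E(H) is a.c.} after a false interchange; it is to bypass the interchange altogether. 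For fixed $g,$ apply the classical Fatou--de la Vall\'ee Poussin theorem to the finite measure $\mu_g=\scal{F^*g}{E^H_{(\cdot)}F^*g}$: the a.e.\ boundary value $\frac1\pi\lim_{y\to0^+}\scal{g}{\Im T_{\lambda+iy}(H)g}=\scal{g}{\phi(\lambda+i0)g}$ is the Radon--Nikodym density of the absolutely continuous part of $\mu_g,$ whence $\int_a^b\scal{g}{\phi(\lambda+i0)g}\,d\lambda=\mu_g^{(a)}((a,b))=\norm{P^{(a)}E^H_{(a,b)}F^*g}^2$ directly. With that substitution the Parseval identity, and with it the rest of your proof, goes through.
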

\noindent
This theorem shows that the operator $\euE$ is a version of the operator $\euF_j$ from~(\ref{F: euF(j)}).
But unlike~(\ref{F: euF(j)}), the set $\Lambda(H,F),$ the family of Hilbert spaces $\set{\hlambda \colon \lambda \in \Lambda(H,F)}$ and the operator $\euE$
are explicitly constructed without a.e. ambiguity. Further, if $f \in \hilb_1$ then the right hand side of~(\ref{F: [euE(Hf)(l)=...})
is defined for all $\lambda \in \Lambda(H,F).$

VII. Now assume that we are given two self-adjoint operators~$H_0$ and~$H_1$ on a rigged Hilbert space $(\hilb,F),$ such that the perturbation $V = H_1-H_0$
admits decomposition $F^*JF,$ where $J$ is a bounded self-adjoint operator on $\clK.$ Since on one hand~$F$ can be treated as an isomorphism of Hilbert spaces $\hilb_{-1}$ and~$\clK$
and on the other hand $F^*$ can be treated as isomorphism of~$\clK$ and $\hilb_1,$ it follows that the operator $V = F^*JF$ can be treated as a bounded operator $V \colon \hilb_{-1} \to \hilb_1.$
The operator $V$ considered as acting from~$\hilb$ to~$\hilb$ is a composition (in appropriate order) of the bounded operator $V \colon \hilb_{-1} \to \hilb_1$
with two Hilbert-Schmidt inclusions~(\ref{F: hilb(1) hilb hilb(-1)}).
Further, Theorem~\ref{T: Abstract L.A.P.} shows that for every $\lambda \in \Lambda(H_0,F) \cap \Lambda(H_1,F)$ the limits $R_{\lambda+i0}(H_j),$ $j=0,1,$
exist in uniform operator norm, if the operators $R_{\lambda+iy}(H_j)$ are considered as follows
$$
  R_{\lambda+iy}(H_j) \colon \hilb_1(F) \to \hilb_{-1}(F).
$$
Similarly, the operators $\Im R_{\lambda+i0}(H_j)$ can be treated as a trace-class operator $\hilb_1(F) \to \hilb_{-1}(F).$
Therefore, we can define a trace class operator
$$
  \mathfrak a_\pm(\lambda; H_1,H_0) \colon \hilb_1(F) \to \hilb_{-1}(F)
$$
by formula (compare with \cite[(2.7.4)]{Ya})
$$
  \mathfrak a_\pm(\lambda; H_1,H_0) = [1- R_{\lambda+i0}(H_1)V] \cdot \frac 1\pi \Im R_{\lambda+i0}(H_0).
$$
\begin{thm} \cite[\S 5.3]{Az3v6} For all $\lambda \in \Lambda(H_0,F) \cap \Lambda(H_1,F)$ there exists a unique (for each sign) bounded operator
\begin{equation} \label{F: w(+-) intro}
  w_\pm(\lambda; H_1,H_0) \colon \hlambda(H_0) \to \hlambda(H_1),
\end{equation}
such that for all $f,g \in \hilb_1$ there holds the equality
\begin{equation} \label{F: def of w(+-) from intro}
  \scal{\euE_\lambda(H_1) f}{w_\pm(\lambda; H_1,H_0) \euE_\lambda(H_0) g} = \scal{f}{\mathfrak a_\pm(\lambda;H_1,H_0)g}_{1,-1}.
\end{equation}
The operators~(\ref{F: w(+-) intro}), thus defined, have the following properties:
\begin{enumerate}
  \item[(a)] the operators~(\ref{F: w(+-) intro}) are unitary.
  \item[(b)] $w_\pm(\lambda; H_0,H_0) = 1_\hlambda$ and $w_\pm^*(\lambda; H_1,H_0) = w_\pm(\lambda; H_0,H_1).$
  \item[(c)] for any three self-adjoint operators $H_0,$~$H_1$ and $H_2$ such that $H_2-H_1$ and $H_1-H_0$
  admit decompositions $F^*J_1F$ and $F^*J_0F$ with bounded $J_0,J_1 \colon \clK \to \clK$ there hold the equalities
  $$
    w_\pm(\lambda; H_2,H_0) = w_\pm(\lambda; H_2,H_1)w_\pm(\lambda; H_1,H_0).
  $$
\end{enumerate}
\end{thm}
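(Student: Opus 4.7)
The plan is to reduce the defining identity~(\ref{F: def of w(+-) from intro}) to a sesquilinear form inequality whose Riesz representative is the desired wave matrix. Two ingredients do the essential work. The first is the identification
\begin{equation*}
  \scal{f}{\tfrac{1}{\pi}\Im R_{\lambda+i0}(H)\,g}_{1,-1} = \scal{\euE_\lambda(H) f}{\euE_\lambda(H) g}_{\hlambda(H)}, \qquad f, g \in \hilb_1,
\end{equation*}
which is immediate from construction~(IV): the matrix $\phi(\lambda+i0) = \eta(\lambda+i0)^2$ represents $\tfrac{1}{\pi}\Im T_{\lambda+i0}(H)$ in the basis $(\psi_j)$, and by~(\ref{F: euE(f)=sum beta(j)eta(j)}) the operator $\euE_\lambda$ acts as multiplication by $\eta(\lambda+i0)$ on the coefficients $(\beta_j)$ of $f = \sum \beta_j \kappa_j \phi_j$. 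The second ingredient is the factorisation
\begin{equation*}
  \tfrac{1}{\pi}\Im R_{\lambda+i0}(H_1) = [1 - R_{\lambda+i0}(H_1)V]\,\tfrac{1}{\pi}\Im R_{\lambda+i0}(H_0)\,[1 - V R_{\lambda-i0}(H_1)],
\end{equation*}
which follows from the second resolvent identity applied to $R_1 - R_1^*$ after tracking the cross term $(w - z) R_0(z) R_1(w)$ produced when $z = \lambda + iy$, $w = \lambda - iy$, $y \downarrow 0$.

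For existence and uniqueness I would push $[1 - R_{\lambda+i0}(H_1)V]$ across the pairing onto the $f$-argument (using $V^* = V$ and $R(z)^* = R(\bar z)$) and then apply the first ingredient to obtain
\begin{equation*}
  \scal{f}{\mathfrak{a}_\pm(\lambda; H_1, H_0)\,g}_{1,-1} = \scal{\euE_\lambda(H_0)\brs{[1 - V R_{\lambda-i0}(H_1)]\,f}}{\euE_\lambda(H_0)\,g}_{\hlambda(H_0)}.
\end{equation*}
Cauchy--Schwarz bounds this by a product of $\hlambda(H_0)$-norms; the second ingredient together with the first then yields the norm equality
\begin{equation*}
  \|\euE_\lambda(H_0)[1 - V R_{\lambda-i0}(H_1)]\,f\|_{\hlambda(H_0)} = \|\euE_\lambda(H_1)\,f\|_{\hlambda(H_1)}.
\end{equation*}
Riesz representation applied on the dense subspaces $\euE_\lambda(H_j)(\hilb_1) \subset \hlambda(H_j)$, $j=0,1$, produces a unique bounded operator $w_\pm(\lambda; H_1, H_0)$ of norm at most one. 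Moreover, the reading $w_\pm^* \euE_\lambda(H_1) f = \euE_\lambda(H_0)[1 - V R_{\lambda-i0}(H_1)]f$ together with the displayed norm equality shows that $w_\pm^*$ is isometric on a dense set, so $w_\pm$ is a coisometry.

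Properties (b) and (c) are then derived by routine manipulation of the defining form: (b) by swapping $H_0 \leftrightarrow H_1$ and using the symmetric counterpart of the factorisation, and (c) by expanding $\mathfrak{a}_\pm(\lambda; H_2, H_0)$ through two applications of the second resolvent identity to introduce $H_1$, and then checking that the composition $w_\pm(\lambda; H_2, H_1) w_\pm(\lambda; H_1, H_0)$ satisfies the same defining equation as $w_\pm(\lambda; H_2, H_0)$; uniqueness forces equality. Unitarity~(a) follows at once from (b) and (c) specialised to $H_2 = H_0$:
\begin{equation*}
  w_\pm^*(\lambda; H_1, H_0)\, w_\pm(\lambda; H_1, H_0) = w_\pm(\lambda; H_0, H_1)\, w_\pm(\lambda; H_1, H_0) = w_\pm(\lambda; H_0, H_0) = 1_{\hlambda(H_0)},
\end{equation*}
and symmetrically for $w_\pm w_\pm^*$; the coisometry $w_\pm$ is therefore unitary.

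The main obstacle will be the factorisation of $\Im R_{\lambda+i0}(H_1)$. Algebraically one has $R_0(z) V R_1(w) = R_0(z) - R_1(w) + (w - z) R_0(z) R_1(w)$, so one is tempted to discard the factor $w - z = -2iy$ in the limit $y \downarrow 0$; but although $R_0(z) R_1(w)$ diverges in $\hilb$-operator norm like $y^{-2}$, it is well behaved as an operator $\hilb_1 \to \hilb_{-1}$ under Theorem~\ref{T: Abstract L.A.P.}, and the surviving nontrivial cross term is precisely what drives the cancellations needed for the factorisation. Every step of the argument must therefore be interpreted consistently in the pairing $\hilb_1 \times \hilb_{-1}$ rather than in $\hilb$.
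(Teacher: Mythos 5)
Your proposal is correct, and it reaches the same endpoints as the paper's own treatment (the paper proves the analogous statements in the generalized setting in Proposition~\ref{P: w(pm) is well-defined}, Theorem~\ref{T: mult property of w(pm)} and Corollary~\ref{C: w(pm) is unitary}, and cites \cite{Az3v6} for the Hilbert--Schmidt case quoted here), but the route differs in one genuine respect. The paper gets boundedness $\norm{w_\pm}\le 1$ from a Cauchy--Schwarz estimate on $\frac y\pi\scal{f}{R_{\lambda\mp iy}(H_1)R_{\lambda\pm iy}(H_0)g}$ \emph{before} passing to the boundary, and it postpones all of unitarity to the chain rule, which it proves by inserting an orthonormal basis $(\euE_\lambda(H_1)b_j)$ of $\hlambda(H_1)$ between the two factors (Lemma~\ref{L: Lemma X}) and combining with the factorization of $\mathfrak a_\pm(\lambda;H_2,H_0)$ through $H_1$ (Lemma~\ref{L: Lemma Y}). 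You instead specialize that factorization to the triple $(H_1,H_0,H_1)$, obtaining $\frac1\pi\Im R_{\lambda+i0}(H_1)=[1-R_{\lambda\mp i0}(H_1)V]\,\frac1\pi\Im R_{\lambda+i0}(H_0)\,[1-VR_{\lambda\pm i0}(H_1)]$, and use it together with $\frac1\pi\Im R_{\lambda+i0}(H)=\euE_\lambda^\diamondsuit\euE_\lambda$ (Proposition~\ref{P: delta(H)=euE*euE}) to convert the defining form into an exact inner product in $\hlambda(H_0)$; this yields the explicit action $w_\pm^*\euE_\lambda(H_1)f=\euE_\lambda(H_0)[1-VR_{\lambda\pm i0}(H_1)]f$ and the norm identity $\norm{\euE_\lambda(H_0)[1-VR_{\lambda\pm i0}(H_1)]f}=\norm{\euE_\lambda(H_1)f}$, so you get the coisometry property at the moment of construction rather than a posteriori. (The paper records the mirror formula $w_\pm(\lambda;H_1,H_0)\euE_\lambda(H_0)=\euE_\lambda(H_1)[1+VR_{\lambda\pm i0}(H_0)]$ only as a by-product of the stationary formula in Section~\ref{S: scat matrix}.) Two points to tighten. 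First, in (c) the phrase ``routine manipulation'' hides the one nontrivial step: the defining relation for $w_\pm(\lambda;H_2,H_1)$ gives matrix elements only against vectors $\euE_\lambda(H_1)h$ with $h\in\hilb_1$, so to evaluate the composition on $w_\pm(\lambda;H_1,H_0)\euE_\lambda(H_0)g$ you must either use the mirrored action formula (after which the chain rule reduces to the boundary-value resolvent identity $R_{\lambda\pm i0}(H_1)[1+V_1R_{\lambda\pm i0}(H_0)]=R_{\lambda\pm i0}(H_0)$ in $\clB(\hilb_1,\hilb_{-1})$) or run the paper's orthonormal-basis summation of Lemma~\ref{L: Lemma X}; your adjoint formula does supply the former, but say so. Second, keep the signs matched: for $\mathfrak a_+$ the outer factors are $[1-R_{\lambda- i0}(H_1)V]$ and its dual $[1-VR_{\lambda+i0}(H_1)]$, while the pair you displayed is the one belonging to $\mathfrak a_-$.
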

\noindent
Once the operators $w_\pm(\lambda; H_1,H_0)$ have been constructed, one defines the wave operators $W_\pm(H_1,H_0)$
and the scattering operator $\bfS(H_1,H_0)$ by formulas~(\ref{F: bfS and W(pm) = int}), where the scattering matrix $S(\lambda; H_1,H_0)$ is defined as an operator
$w_+^*(\lambda; H_1,H_0)w_-(\lambda; H_1,H_0).$ It is shown in \cite{Az3v6} that thus defined objects of scattering theory
possess well-known properties such as multiplicative property, the stationary formula and agreement with classical time-dependent definitions.

\subsection{Non-compact rigging}
It turns out that the method discussed above can be adjusted for pairs $H,F$ of operators such that $F E_\Delta$ is Hilbert-Schmidt,
where $E_\Delta$ is a spectral projection of $H$ and~$\Delta$ is a bounded measurable set.
Here we describe briefly the main idea.
Let $F \colon \hilb \to \clK$ be a bounded operator with trivial kernel and co-kernel, considered as a rigging in Hilbert space~$\hilb,$
and let~$H$ be a self-adjoint operator on~$\hilb,$ such that
\begin{equation} \label{I: FE(D) is H.S.}
  F E^H_\Delta \ \text{is Hilbert-Schmidt}
\end{equation}
for all bounded open intervals~$\Delta.$
Since $F E^H_\Delta$ is Hilbert-Schmidt, we define the set $\Lambda(H,F)$ as
$$
  \Lambda(H,F) = \bigcup_\Delta (\Lambda(HE_\Delta, FE_\Delta) \cap \Delta),
$$
where the union is taken over all bounded open sets $\Delta \subset \mbR.$
If $\lambda \in \Lambda(H,F)\cap \Delta$ then for the operator $HE_\Delta$ on the Hilbert
space $E_\Delta \hilb$ one can construct the fiber Hilbert spaces $\hlambda(\Delta),$ the evaluation operator $\euE_\lambda^\Delta,$ etc,
using the operator $F E_\Delta \colon E_\Delta \hilb \to \clK$ as a Hilbert-Schmidt rigging.
A difficulty here is that these objects depend on a choice of an interval~$\Delta.$ It is shown that the Hilbert
spaces $\hlambda(\Delta)$ for different bounded open sets~$\Delta,$ containing $\lambda,$ are naturally isomorphic and
the evaluation operators $\euE_\lambda^\Delta(H)$ can be naturally identified via the unitary operators used to identify
the Hilbert spaces $\hlambda(\Delta).$ The collection of Hilbert spaces
$$
  \hlambda(H,F) = \set{\hlambda(\Delta) \colon \Delta \ \text{is bounded, open and } \lambda \in \Delta}
$$
form a sheaf of fiber Hilbert spaces and the collection of operators
$$
  \euE_\lambda(H) = \set{\euE_\lambda^\Delta(H) \colon \Delta \ \text{is bounded, open and } \lambda \in \Delta}
$$
can be considered as an operator
$$
  \euE_\lambda(H) \colon \hilb_1(F) \to \hlambda(H,F),
$$
which is a generalization of the evaluation operator (\ref{F: euE def of}),
where $\hilb_{\pm 1}(F)$ are Hilbert spaces generated by the rigging $F.$
It is shown (Theorem~\ref{T: euE}) that extension of the operator $\euE_\lambda(H)$ to the Hilbert space associated with the direct integral
$$
  \euS(H,F) := \int_{\Lambda(H,F)}^\oplus \hlambda(H,F)\,d\lambda
$$
diagonalizes the operator $H.$ Once this is done, the rest of the theory is constructed similarly to the case of Hilbert-Schmidt rigging~$F.$
For example, the stationary formula (Theorem \ref{T: stationary formula}) for the scattering matrix takes the form
$$
  S(\lambda; H_1, H_0) = 1_\hlambda - 2\pi i \euE_\lambda(H_0) V (1 + R_{\lambda+i0}(H_0)V)^{-1}\euE_\lambda^\diamondsuit(H_0),
$$
where the operators in the right hand side are understood as follows
$$
   \hlambda(H,F) \stackrel {\euE_\lambda(H_0)} {\longleftarrow\!\!\!-\!\!-} \hilb_{1}(F)
   \stackrel{V}{\longleftarrow} \hilb_{-1}(F)
   \stackrel{R_{\lambda+i0}(H_0)}{\longleftarrow\!\!\!-\!\!-\!\!-\!\!-} \hilb_{1}(F)
   \stackrel{V}{\longleftarrow}  \hilb_{-1}(F)
   \stackrel {\euE_\lambda^\diamondsuit(H_0)} {\longleftarrow\!\!\!-\!\!-} \hlambda(H,F).
$$
Here $\euE_\lambda^\diamondsuit(H_0)$ is a modified conjugate of $\euE_\lambda(H_0) \colon \hilb_1(F) \to \hlambda(H,F)$ defined by equality
$$
  \scal{\euE_\lambda(H_0) f}{g}_{\hlambda(H,F)} = \scal{f}{\euE_\lambda^\diamondsuit(H_0) g}_{1,-1} \ \ \forall \ f \in \hilb_1(F), \ g \in \hlambda(H,F).
$$

\subsection{Description of sections}
In section~\ref{S: Sheaves of Hilbert spaces} we give an exposition of sheaves of Hilbert spaces.
In section~\ref{S: s.a. op-rs on rigged Hilbert spaces} we study self-adjoint operators~$H$ on rigged Hilbert spaces $(\hilb,F)$
which are compatible with the rigging~$F$ in the sense that the condition (\ref{I: FE(D) is H.S.}) holds;
in particular we construct a sheaf $\euS(H,F)$ of Hilbert spaces over an explicitly defined set $\Lambda(H,F)$ of full Lebesgue measure,
associated with a compatible pair $(H,F)$ and we show that the sheaf $\euS(H,F)$ gives a natural diagonalization of the operator~$H$
(Theorem~\ref{T: op-r euE}). In section~\ref{S: wave matrix} we give new definitions of wave matrices (\ref{F: def of w +-}) and wave operators
(\ref{F: def of W(pm)}), prove unitarity
(Corollary~\ref{C: w(pm) is unitary}) and the multiplicative property (Theorem~\ref{T: mult property of w(pm)})
of the wave matrices and show that these definitions coincide with classical time-dependent definitions (Theorem~\ref{T: class-l def of W(pm)}).
In section~\ref{S: scat matrix} we give new definitions of the scattering matrix (\ref{F: def of S(lambda)}) and the scattering operator
(\ref{F: def of bfS}) and give a new proof of the stationary formula for the scattering matrix (Theorem~\ref{T: stationary formula}).
We also show that thus introduced notions of scattering theory possess many other well-known properties, e.g. Theorems~\ref{T: properties of W(pm)} and~\ref{T: properties of S(lambda)}.
Finally, in Section~\ref{S: example} we give an example of a class of Schr\"odinger operators, to which the results of Sections~\ref{S: wave matrix}
and~\ref{S: scat matrix} are applied.

\section{Sheaves of Hilbert spaces}
\label{S: Sheaves of Hilbert spaces}
The notion of a sheaf was introduced by A.\,Grothendieck with the aim to give a general coordinate-independent definition of algebraic variety.
Sheaves of rings and vectors spaces are used extensively in topology and geometry, see for instance \cite{Sh} and \cite{Bred67}.
Our approach to scattering theory uses sheaves of Hilbert spaces. Since I was not able to find an appropriate reference on sheaves of Hilbert spaces,
which would satisfy needs of this paper, this section is devoted to an exposition of this notion.

Before proceeding to this exposition we note that reasons for using sheaves in geometry and in this paper
are different. In topology sheaves are used because of and for the study of non-trivial homotopical and homological structure of underlying topological space.
In this paper we use sheaves of Hilbert spaces over a certain subset $\Lambda$ of~$\mbR$ which has full Lebesgue measure and topological structure
of this set is not of interest.
The need in such sheaves arises here since we are able to construct certain direct integrals of fiber Hilbert spaces
over all bounded open subsets of~$\Lambda,$ but not over all~$\Lambda$ and therefore we need to ``glue'' together the direct integrals over intersecting bounded subsets.

\subsection{A sheaf of fiber Hilbert spaces}
\label{SS: sheaf of f.H.s-es at lambda}
Let~$\Lambda$ be a topological space with a fixed base~$\euB$ of
topology. In addition, later we assume that~$\euB$ contains intersection of any two sets from~$\euB$ as long as this intersection
is not empty and that the space $\Lambda$ is a union of an increasing family of sets $\Delta_1 \subset \Delta_2 \subset \ldots$
from~$\euB.$

Let $\lambda \in \Lambda.$ By $\euB_\lambda$ we denote
the subset $\set{\Delta \in \euB \colon \lambda \in \Delta}$
of~$\euB.$ A \emph{sheaf of fiber Hilbert spaces} (or fiber of a sheaf of Hilbert spaces) $\hlambda$
at~$\lambda$ is a collection of Hilbert spaces
$$
  \hlambda := \set{\hlambda(\Delta) \colon \Delta \in \euB_\lambda}
$$
and a collection of unitary isomorphisms
$$
  \set{U_{\Delta_2,\Delta_1}(\lambda) \colon \hlambda(\Delta_1) \stackrel \sim\longrightarrow \hlambda(\Delta_2) \mid (\Delta_1,\Delta_2) \in \euB_\lambda^2}
$$
such that for any three, not necessarily distinct, open subsets $\Delta_1, \Delta_2$ and $\Delta_3$ from $\euB_\lambda$
we have the equality
\begin{equation} \label{F: gluing property for U}
  U_{\Delta_3,\Delta_1}(\lambda) = U_{\Delta_3,\Delta_2}(\lambda)U_{\Delta_2,\Delta_1}(\lambda).
\end{equation}
It follows from this that for any $\Delta_1, \Delta_2,
\Delta \in \euB_\lambda$ the equalities
$$
  U_{\Delta_2,\Delta_1}(\lambda)^* = U_{\Delta_1,\Delta_2}(\lambda) \ \text{and} \ U_{\Delta,\Delta}(\lambda) = 1_{\hlambda(\Delta)}
$$
hold, where $1_{\hlambda(\Delta)}$ is the identity operator on $\hlambda(\Delta).$ An element of
the sheaf $\hlambda$ is a collection of vectors
$$
  f(\lambda) = \set{f_\Delta(\lambda) \in \hlambda(\Delta) \colon \Delta \in \euB_\lambda},
$$
such that for any $\Delta_1, \Delta_2 \in \euB_\lambda$ there holds the
equality
\begin{equation} \label{F: glue f(Delta)}
  U_{\Delta_2,\Delta_1}(\lambda) f_{\Delta_1}(\lambda) = f_{\Delta_2}(\lambda).
\end{equation}
A sheaf $\hlambda$ of fiber Hilbert spaces at $\lambda \in
\Lambda$ is a vector space with scalar product
\begin{equation} \label{F: scal prod in sheaf}
  \scal{f(\lambda)}{g(\lambda)}_\hlambda = \scal{f_\Delta(\lambda)}{g_\Delta(\lambda)}_{\hlambda(\Delta)},
\end{equation}
where~$\Delta$ is any element of $\euB_\lambda.$ Plainly, this
scalar product does not depend on the choice of~$\Delta.$ It is
equally obvious that $\hlambda$ is a Hilbert space. In what
follows, the Hilbert spaces $\hlambda(\Delta)$ will usually be
subspaces of a single Hilbert space~$\mathfrak h,$ namely
$\mathfrak h = \ell_2.$ In this case we say that $\hlambda$ is a
sheaf of fiber Hilbert spaces in $\mathfrak h.$

\subsubsection{Operators acting on a sheaf of fiber Hilbert spaces}
Let~$\clK$ be a Hilbert space. An operator $T$ from a Hilbert
space~$\clK$ to $\hlambda$ is a family of operators $
  \set{T_\Delta\colon \clK \to \hlambda(\Delta), \Delta \in \euB_\lambda}
$ such that for any $\Delta_1, \Delta_2 \in \euB_\lambda$ the
equality
\begin{equation} \label{F: T(D2)=U(D2,D1)T(D1)}
  T_{\Delta_2} = U_{\Delta_2,\Delta_1}(\lambda)T_{\Delta_1}
\end{equation}
holds. An operator $T$ from $\hlambda$ to a Hilbert space~$\clK$
is a family of operators $$
  \set{T_\Delta\colon \hlambda(\Delta) \to \clK, \Delta \in \euB_\lambda}
$$ such that for any $\Delta_1, \Delta_2 \in \euB_\lambda$ and any
$f(\lambda) \in \hlambda$ there holds the equality
$$
  T_{\Delta_2} (f_{\Delta_2}(\lambda)) = T_{\Delta_1}(f_{\Delta_1}(\lambda)).
$$
This condition is equivalent to this one:
$$
  T_{\Delta_2}U_{\Delta_2,\Delta_1}(\lambda) = T_{\Delta_1}.
$$

\subsection{A sheaf of Hilbert spaces}
Let $\set{\hilb_\lambda, \lambda \in \Delta}$ be a family of
subspaces of a Hilbert space~$\hilb.$ We say that this family is
\emph{measurable}, if the family of orthogonal projection
$P_\lambda$ onto $\hilb_\lambda$ is measurable, that is, if for
any $f,g \in \hilb,$ the function $\Delta \ni \lambda \mapsto \scal{f}{P_\lambda g}$ is
measurable. A~\emph{measurable section} of this family is a
measurable function $f \colon \Delta \to \hilb$ such that
$f(\lambda) \in \hilb_\lambda$ for all $\lambda \in \Delta.$

Let $\mathfrak h$ be a Hilbert space, let~$\Lambda$ be a Hausdorff
topological space with a fixed base of topology $\euB,$ such that
if $\Delta_1, \Delta_2 \in \euB$ then $\Delta_1 \cap \Delta_2 \in
\euB,$ and let $\rho$ be a Borel measure in~$\Lambda.$ A~\emph{sheaf of Hilbert spaces} $\euS$ over~$\Lambda$ is a family
of sheaves of fiber Hilbert spaces $\set{\hlambda \colon \lambda
\in \Lambda}$ in $\mathfrak h,$ such that (1) for every $\Delta
\in \euB$ the family of Hilbert spaces $\set{\hlambda(\Delta)
\colon \lambda \in \Delta}$ is measurable and (2) for any
$\Delta_1, \Delta_2 \in \euB$ the family
$$
  U_{\Delta_2,\Delta_1} = \set{U_{\Delta_2,\Delta_1}(\lambda) , \lambda \in \Delta_1 \cap \Delta_2}
$$
is also measurable, that is, it maps measurable sections of
$\set{\hlambda(\Delta_1) \colon \lambda \in \Delta_1 \cap
\Delta_2}$ to measurable sections of $\set{\hlambda(\Delta_2)
\colon \lambda \in \Delta_1 \cap \Delta_2}.$ It follows that for
every $\Delta \in \euB$ the dimension function
$$
  \Delta \ni \lambda \mapsto \dim(\hlambda(\Delta)) \in \set{0,1,2,\ldots,\infty}
$$
is measurable. Further, to every $\Delta \in \euB$ we
can assign a Hilbert space
$$
  \euH(\Delta) := \int_\Delta^\oplus \hlambda(\Delta)\,\rho(d\lambda),
$$
--- the direct integral of fiber Hilbert spaces $\hlambda(\Delta).$
Elements of $\euH(\Delta)$ are measurable square integrable
sections of the family $\set{\hlambda(\Delta) \colon \lambda \in
\Delta},$ where two sections are identified if they coincide for $\rho$-a.e. $\lambda \in \Lambda.$ Definition of the scalar product in $\euH(\Delta)$ is
obvious. 
The operator
$$
  U_{\Delta_2,\Delta_1} = \int_{\Delta_1\cap \Delta_2}^\oplus U_{\Delta_2,\Delta_1}(\lambda)\,\rho(d\lambda)
$$
is a unitary isomorphism of Hilbert spaces $\euH(\Delta_1)
\big|_{\Delta_1 \cap \Delta_2}$ and $\euH(\Delta_2)
\big|_{\Delta_1 \cap \Delta_2},$ where
$$
  \euH(\Delta_j) \big|_{\Delta_1 \cap \Delta_2} = \int_{\Delta_1 \cap \Delta_2}^\oplus \hlambda(\Delta_j)\,\rho(d\lambda), \quad j = 1,2.
$$

A sheaf $\euS$ of Hilbert spaces can be given a structure of
Hilbert space, the construction of which follows. A
\emph{measurable section} $f$ of the sheaf $\euS$ is a family
$f(\lambda)$ of elements of fiber Hilbert spaces $\hlambda,$ given
for $\rho$-a.e.~$\lambda,$ such that for any $\Delta \in \euB$ the
section $f_\Delta(\cdot)$ of the family  $\set{\hlambda(\Delta)
\colon \lambda \in \Delta}$ is measurable. A section $f$ is
\emph{square integrable}, if the number
\begin{equation} \label{F: S(H,F) norm}
  \norm{f}_{\euS} := \lim_{n \to \infty} \norm{f_{\Delta_n}}_{\euH\brs{\Delta_n}}
  = \sup_{\Delta \in \euB} \norm{f_{\Delta}}_{\euH\brs{\Delta}}
\end{equation}
is finite, where $\Delta_1 \subset \Delta_2 \subset \ldots$ is an
increasing sequence of elements of~$\euB$ such that
$\cup_{n=1}^\infty \Delta_n = \Lambda.$ Obviously, this definition
does not depend on the choice of the sequence $(\Delta_n)_{n=1}^\infty.$
The set of all square integrable sections of $\euS$ is a vector space
in an obvious way, where as usual we identify two sections which
coincide for $\rho$-a.e. $\lambda \in \Lambda.$ We denote this vector space
by the same symbol $\euS.$
We also denote an element $\set{f(\lambda) \colon \lambda \in \Lambda}$ of $\euS$ by
\begin{equation} \label{F: int f(l)rho(dl) in euS}
  \int_{\Lambda}^\oplus f(\lambda)\,\rho(d\lambda).
\end{equation}
\noindent
The scalar product of two square
integrable sections $f,g \in \euS$ is defined by the formula
$$
  \scal{f}{g}_{\euS} := \lim_{n \to \infty}
  \scal{f_{\Delta_n}}{g_{\Delta_n}}_{\euH\brs{\Delta_n}},
$$
where $\Delta_1 \subset \Delta_2 \subset \ldots$ is a sequence as
above. This scalar product is well-defined in the sense that it
does not depend on the choice of the sequence $\Delta_1 \subset
\Delta_2 \subset \ldots$ of elements of~$\euB.$
\begin{thm} $\euS$ is a Hilbert space.
\end{thm}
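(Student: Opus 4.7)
The plan is to verify, in order, (i) well-definedness of the scalar product, (ii) the Hilbert-space axioms of scalar product and norm, and (iii) completeness; the first two are short, and completeness is the real content.

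For well-definedness, the key observation is monotonicity: if $\Delta \subset \Delta'$ are in $\euB$, then restriction to $\Delta$ of an element of $\euH(\Delta')$ (via the unitaries $U_{\Delta,\Delta'\cap\Delta}$ applied fiberwise, noting $\Delta\cap\Delta'=\Delta$) is norm-decreasing; hence for any exhausting sequence $(\Delta_n)$ the sequence $\norm{f_{\Delta_n}}_{\euH(\Delta_n)}^2$ is non-decreasing, and its limit equals $\sup_{\Delta\in\euB}\norm{f_\Delta}_{\euH(\Delta)}^2$ by cofinality of $(\Delta_n)$ in $\euB$. Independence from the choice of exhaustion follows by a standard interleaving argument. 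The same reasoning applied via polarization shows that $\scal{f}{g}_\euS$ is well-defined, sesquilinear, conjugate symmetric, and positive definite (if $\norm{f}_\euS=0$ then $f_\Delta=0$ in $\euH(\Delta)$ for every $\Delta\in\euB$, hence $f(\lambda)=0$ for $\rho$-a.e.\ $\lambda$).

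For completeness, let $(f^{(n)})$ be a Cauchy sequence in $\euS$. For every fixed $\Delta\in\euB$, the monotonicity above gives $\norm{f^{(n)}_\Delta-f^{(m)}_\Delta}_{\euH(\Delta)}\le\norm{f^{(n)}-f^{(m)}}_\euS$, so $(f^{(n)}_\Delta)$ is Cauchy in the Hilbert space $\euH(\Delta)$ and converges to some $g_\Delta\in\euH(\Delta)$. I would then check the compatibility condition (\ref{F: glue f(Delta)}) for the candidate $g$: for $\Delta_1,\Delta_2\in\euB$ with $\lambda\in\Delta_1\cap\Delta_2$, the sections $f^{(n)}$ satisfy $U_{\Delta_2,\Delta_1}(\lambda)f^{(n)}_{\Delta_1}(\lambda)=f^{(n)}_{\Delta_2}(\lambda)$ $\rho$-a.e.\ on $\Delta_1\cap\Delta_2$; passing to $\rho$-a.e.\ convergent subsequences on $\Delta_1\cap\Delta_2$ (using that $L^2$-convergence in $\euH(\Delta_1\cap\Delta_2)$ yields an a.e.\ convergent subsequence) and using that $U_{\Delta_2,\Delta_1}$ is a measurable family of unitaries, the identity passes to the limit, giving $U_{\Delta_2,\Delta_1}(\lambda)g_{\Delta_1}(\lambda)=g_{\Delta_2}(\lambda)$ for $\rho$-a.e.\ $\lambda\in\Delta_1\cap\Delta_2$. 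After redefining $g_\Delta$ on a $\rho$-null set (which does not change it as an element of $\euH(\Delta)$) this gives a bona fide measurable section $g$ of $\euS$.

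It remains to show $g\in\euS$ and $f^{(n)}\to g$ in $\euS$. Cauchyness gives a bound $M=\sup_n\norm{f^{(n)}}_\euS<\infty$, and for every $\Delta\in\euB$ the lower semicontinuity of the norm under $\euH(\Delta)$-limits yields $\norm{g_\Delta}_{\euH(\Delta)}\le M$; taking sup over $\Delta$ gives $\norm{g}_\euS\le M$, so $g\in\euS$. Finally, fix $\varepsilon>0$ and choose $N$ with $\norm{f^{(n)}-f^{(m)}}_\euS<\varepsilon$ for $n,m\ge N$; for any $\Delta\in\euB$, sending $m\to\infty$ in $\euH(\Delta)$ gives $\norm{f^{(n)}_\Delta-g_\Delta}_{\euH(\Delta)}\le\varepsilon$, and then taking the supremum over $\Delta$ yields $\norm{f^{(n)}-g}_\euS\le\varepsilon$ for $n\ge N$.

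The main obstacle I expect is the coherence step: ensuring that the Hilbert-space limits $g_\Delta\in\euH(\Delta)$, obtained independently for each $\Delta$, assemble into a genuine section of the sheaf satisfying (\ref{F: glue f(Delta)}) pointwise $\rho$-a.e. This is where one must carefully manage null sets (of which there are countably many when restricted to an exhausting sequence, hence harmless) and use the measurability of the family $U_{\Delta_2,\Delta_1}$ together with the extraction of a.e.\ convergent subsequences; everything else is standard direct-integral bookkeeping.
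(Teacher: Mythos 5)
Your proposal is correct and follows essentially the same strategy as the paper's proof: restrict a Cauchy sequence to each $\Delta\in\euB$, use completeness of $\euH(\Delta)$ to obtain limits $g_\Delta$, glue them into a section via the unitaries $U_{\Delta_2,\Delta_1}(\lambda)$, and verify the norm bound and convergence. The only differences are cosmetic but to your advantage: your final step (letting $m\to\infty$ in the Cauchy estimate on each fixed $\Delta$ and then taking the supremum over $\Delta$) is more direct than the paper's detour through $L^2$-convergence of the functions $\lambda\mapsto\norm{f_n(\lambda)}_\hlambda$ and a tightness argument, and your extraction of $\rho$-a.e.\ convergent subsequences to justify passing the gluing identity to the limit is more careful than the paper's bare assertion of a.e.\ convergence.
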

\begin{proof}
Let $f_1, f_2, \ldots \in \euS$ be a Cauchy sequence. Plainly, for
any open set $\Delta \in \euB$ the sequence $f_{1,\Delta},
f_{2,\Delta}, \ldots$ is also Cauchy. Since $\euH(\Delta)$ is
complete, the last sequence converges to some $f_\Delta \in
\euH(\Delta).$
Since $U_{\Delta_2,\Delta_1}(\lambda)$ is continuous, we have for a.e. $\lambda$
\begin{equation*}
  \begin{split}
   U_{\Delta_2,\Delta_1}(\lambda) f_{\Delta_1}(\lambda)
    & = U_{\Delta_2,\Delta_1}(\lambda) \lim_{n \to \infty} f_{n,\Delta_1}(\lambda)
    \\ & = \lim_{n \to \infty} U_{\Delta_2,\Delta_1}(\lambda) f_{n,\Delta_1}(\lambda)
    = \lim_{n \to \infty} f_{n,\Delta_2}(\lambda)
    = f_{\Delta_2}(\lambda).
  \end{split}
\end{equation*}
It follows that $f = \set{f_\Delta}$ defines a measurable section of $\euS.$
We have
\begin{equation*}
  \begin{split}
  \norm{f}^2_{\euS} & = \sup_{k\in\mbN} \norm{f_{\Delta_k}}^2_{\euH\brs{\Delta_k}}
    = \sup_{k\in\mbN} \lim_{n \to \infty} \norm{f_{n,\Delta_k}}^2_{\euH\brs{\Delta_k}}
  \\ & \leq \lim_{n \to \infty} \sup_{k\in\mbN} \norm{f_{n,\Delta_k}}^2_{\euH\brs{\Delta_k}} = \lim_{n \to \infty} \norm{f_n}_\euS^2 < \infty,
  \end{split}
\end{equation*}
where the last inequality follows from $(f_n)$ being Cauchy.
It follows that $f \in \euS.$ Now, we show that $f_n$ converges to
$f.$ Since $(f_n)$ is Cauchy, it follows from $\abs{\norm{f_n(\lambda)} - \norm{f(\lambda)}}
\leq \norm{f_n(\lambda) - f(\lambda)}$ that
the sequence of functions $\lambda \mapsto (\norm{f_n(\lambda)}_\hlambda)$ is
Cauchy in $L^2(\Lambda,d\rho).$ By construction,
$\norm{f_n(\lambda)-f(\lambda)}_{\hlambda} \to 0$ for $\rho$-a.e.
$\lambda \in \Lambda.$ It follows that
$\norm{f_n(\lambda)}_{\hlambda} \to \norm{f(\lambda)}_{\hlambda}$
as functions of~$\lambda$ in $L^2(\Lambda,d\rho).$ This implies
that for any $\eps>0$ there exists an open set $\Delta \subset \Lambda$
and a number $N_1$ such that for all $n\geq N_1$ we have
$$
  \int_{\Lambda\setminus \Delta} \norm{f_n(\lambda)}^2\,d\rho(\lambda) < \eps/4 \quad \text{and} \quad
  \int_{\Lambda\setminus \Delta} \norm{f(\lambda)}^2\,d\rho(\lambda) < \eps/4.
$$
Further, for some $N_2$ and all $n\geq N_2$ we have $\norm{f_{n,\Delta}-f_\Delta} < \eps/2.$
It follows that $f_n$ converges to $f$ in~$\euS.$
\end{proof}
The \emph{support} of an element $f$ of the Hilbert space $\euS$
is defined by formula $\supp f = \bigcup_{\Delta} \supp f_\Delta,$
where the union is over all open sets $\Delta \in \euB.$ Elements
of the sheaf Hilbert space $\euS$ can be represented by either a
measurable square integrable family (\ref{F: int f(l)rho(dl) in euS})
or by a family of vectors $f_\Delta \in \euH(\Delta)$ which satisfy the gluing property~(\ref{F: glue
f(Delta)}) for all $\Delta_1$ and $\Delta_2$ and for a.e. $\lambda \in \Delta_1 \cap \Delta_2$ and such that the supremum
in (\ref{F: S(H,F) norm}) is finite.

\subsubsection{Operators acting on a sheaf of Hilbert spaces}
\label{SSS: op-rs on sheaf euS}
Let~$\clK$ be a Hilbert space.
As long as definition of sheaf Hilbert space $\euS$ is given, a standard definition of an operator acting from $\clK$ to $\euS$
applies. But in practice there are several equivalent ways to define such an operator.
In order to define an operator $T$
from~$\clK$ to the sheaf Hilbert space $\euS$ with domain $\euD \subset \clK$ one can present for $\rho$-a.e. $\lambda \in \Lambda$
and for all $\Delta \in \euB_\lambda$ an operator
$$
  T_\Delta(\lambda) \colon \euD \to \hlambda(\Delta),
$$
such that for $\rho$-a.e. $\lambda \in \Lambda,$ for any $\Delta_1, \Delta_2 \in \euB_\lambda$ and for any $f \in \euD$ there holds the equality
$$
  U_{\Delta_2,\Delta_1}(\lambda)T_{\Delta_1}(\lambda)f = T_{\Delta_2}(\lambda)f,
$$
and such that the mapping $\Delta \ni \lambda \mapsto T_\Delta(\lambda)f \in \hlambda(\Delta)$ is measurable.
Another way to define an operator $\clK \to \euS$ is to assign
to every $f \in \clK$ an element $Tf = \set{(Tf)_\Delta \in \euH(\Delta) \colon \Delta \in \euB}$ of~$\euS.$
A family of operators $T(\lambda)\colon \clK \to \hlambda$
is measurable, if for any $f \in \clK$ the section $T(\lambda)f$
of the family $\set{\hlambda}$ is measurable. Given a measurable
family of operators $T(\lambda)\colon \clK \to \hlambda,$ one can
define an operator
$$
  T = \int_\Lambda^\oplus T(\lambda)\,\rho(d\lambda) \colon \clK \to \euS.
$$

\section{Self-adjoint operators on rigged Hilbert spaces}
\label{S: s.a. op-rs on rigged Hilbert spaces}
Given a self-adjoint operator~$H_0$ on a Hilbert space~$\hilb$ and
a self-adjoint perturbation~$V,$ our aim is to construct
explicitly the wave matrix $w_\pm(\lambda; H_0+V,H_0)$ for all
real numbers~$\lambda$ from some explicitly given set of full
Lebesgue measure $\Lambda \subset \mbR.$ In order to do this, we
need to impose some additional structure. In case of trace-class
perturbations~$V$ this additional structure is a Hilbert-Schmidt
rigging operator~$F$ (see \cite{Az3v6}). Given a Hilbert-Schmidt
rigging~$F,$ for any self-adjoint operator~$H_0$ one can define the
set of full Lebesgue measure $\Lambda(H_0,F),$ such that for all
$\lambda \in \Lambda(H_0,F) \cap \Lambda(H_0+V,F)$ it is possible
to define the fiber Hilbert space $\hlambda,$ the wave matrices
$w_\pm(\lambda; H_0+V,H_0)$ etc (see \cite{Az3v6}).

In case of non-compact perturbations $V,$ we need to generalize
the notion of the Hilbert-Schmidt rigging operator~$F.$ In the
trace-class case, the perturbation~$V$ admits the factorization $V
= F^*JF,$ where $J$ is any bounded operator. We keep this
factorization in the generalization of the trace-class theory
which covers the case of non-compact perturbations $V.$ We also
assume that in the factorization $V = F^*JF$ the operator $J$ is
still allowed to be any bounded operator on the auxiliary Hilbert
space~$\clK.$ This implies that the rigging operator~$F$
can no longer be assumed to be compact. Further, since the wave matrix
cannot exist for general pairs $(H_0,H_0+V),$ this also means
that restrictions, which ensure existence of the wave matrices, shift from the perturbation~$V$ to the
rigging operator~$F.$ Thus, we need to impose some conditions on the pairs $(H_j,F),$ $j=0,1.$ It
turns out that we need only one condition: the operators
$FE_\Delta^{H_j},$ $j=0,1,$ are Hilbert-Schmidt for any bounded open set
$\Delta.$ In this section we study the pairs $(H,F)$ of operators which satisfy this condition.

\subsection{Generalized rigging}
\label{S: generalized rigging} Let~$\hilb$ be a Hilbert space (all
Hilbert spaces in this paper are complex and separable, but not
necessarily infinite dimensional). The Hilbert space~$\hilb$ is
the main Hilbert space on which operators act. In addition
to~$\hilb,$ we use an auxiliary Hilbert space~$\clK.$ By $\dom(T)$
we denote domain of an operator~$T.$

\begin{defn} A (generalized) rigging operator~$F$ on~$\hilb$ is a
bounded operator $F \colon \hilb \to \clK$ with trivial kernel and dense range.
\end{defn}
\begin{rems} \rm That the range of~$F$ is dense in~$\clK$ is not essential: one can always replace~$\clK$
by the closure of the image of~$F.$ But this is convenient; otherwise, we would need for instance to write $g \in \overline{\rng{F}}$
instead of $g \in \clK.$ Sometimes riggings with not dense ranges appear naturally; in such cases we assume that the auxiliary Hilbert space~$\clK$
changes appropriately.
\end{rems}
Our aim is to study operators~$H$ which act on~$\hilb$ with a
predefined rigging~$F.$ Firstly, we shall consider objects which already can be associated
with the pair $(\hilb,F)$ (all of them are well-known).
A rigging operator~$F$ admits polar decomposition $F = U\abs{F},$
where the self-adjoint operator $\abs{F}$ has trivial kernel and (therefore)
dense range, so that the inverse operator $\abs{F}^{-1}$ exists as an unbounded operator.
Therefore, we have a scale of Hilbert spaces $\hilb_\alpha(F),$ $\alpha \in \mbR,$ introduced as follows: let
$$
  \hilb_\infty(F) = \set{f \in \hilb \colon f \in \dom(\abs{F}^k) \ \text{for all} \   k \in \mbZ},
$$
then $\hilb_\alpha$ is the completion of $\hilb_\infty$ endowed
with scalar product
$$
  \scal{f}{g}_{\alpha} =
  \scal{\abs{F}^{-\alpha}f}{\abs{F}^{-\alpha}g}.
$$
Since we have assumed~$F$ to be a bounded operator, for $\alpha > \beta$
there is a natural inclusion $\hilb_\alpha(F) \hookrightarrow
\hilb_\beta(F).$ Hilbert spaces $\hilb_\alpha$ and $\hilb_{\beta}$
are isomorphic, and the operator $\abs{F}^{\alpha-\beta} \colon
\hilb_\beta(F) \to \hilb_\alpha(F)$ is an isomorphism. Hence, the
expression $\abs{F}$ can be understood in two different ways: as
an operator on~$\hilb,$ or as an isomorphism of Hilbert spaces
$\hilb_\alpha$ and $\hilb_{\alpha+1}.$ Since the operator $U$ from
the polar decomposition of~$F$ is unitary (assuming that $\clK = \overline{\rng{F}}$), the operator~$F$ can
also be treated as a natural isomorphism of $\hilb_{-1}$ and
$\clK,$ and the operator $F^*$ can be treated as a natural
isomorphism of~$\clK$ and $\hilb_1.$ The same remark relates to
any power of $\abs{F}.$

For any real number $\alpha$ there is a natural pairing
$\scal{\cdot}{\cdot}_{\alpha,-\alpha} \colon \hilb_{\alpha} \times
\hilb_{-\alpha} \to \mbC,$ such that for any $f,g \in
\hilb_{\alpha} \cap \hilb_{-\alpha}$ we have
$$
  \scal{f}{g}_{\alpha,-\alpha} = \scal{f}{g}.
$$
In the sequel we shall need only Hilbert spaces $\hilb_1(F)$ and
$\hilb_{-1}(F).$ For this reason, further definitions are given
only for this case of $\alpha = \pm 1.$ Elements of the Hilbert
space $\hilb_1(F)$ are to be considered as smooth or regular
vectors of the Hilbert space~$\hilb,$ while elements of
$\hilb_{-1}(F)$ are considered as singular (improper) vectors, which may not
lie in the main Hilbert space~$\hilb.$
If $\tilde \hilb$ is another Hilbert space and $A \colon \hilb_1
\to \tilde \hilb$ is a bounded operator, then there is a unique bounded operator
$A^\diamondsuit \colon \tilde \hilb \to \hilb_{-1},$ such that for any
$f \in \hilb_{1}$ and $g \in \tilde \hilb$ there holds the equality
\begin{equation} \label{F: def of diamond}
  \scal{A f}{g}_{\tilde \hilb} = \scal{f}{A^\diamondsuit g}_{-1,1}.
\end{equation}
The operator $A^\diamondsuit$ can also be defined by formula $A =
\abs{F}^{-2}A^*,$ where $\abs{F}^{-2}$ is understood as an
isomorphism of $\hilb_1(F)$ and $\hilb_{-1}(F).$
We shall have an opportunity to use the following formula which holds for any $\alpha \in \mbR:$
\begin{equation} \label{F: norm(f)(-alpha)}
  \norm{f}_{\hilb_{-\alpha}(F)} = \sup_{\norm{g}_{\hilb_\alpha(F)}=1} \abs{\scal{f}{g}_{-\alpha,\alpha}}.
\end{equation}

\subsection{Compatible pairs $(H,F)$}
Throughout this paper we shall assume that $F$ is a bounded operator from $\hilb$ to $\clK$
with trivial kernel. We shall assume that $F$ also has dense range though it is not necessary.
We shall call $F$ a rigging operator.
\begin{defn} \label{D: compatible (H,F)} Let~$H$ be a self-adjoint operator on the Hilbert
space~$\hilb$ with rigging~$F.$ The operator~$H$ is
\emph{compatible} with rigging~$F,$ if
for all bounded open subsets~$\Delta$ of~$\mbR$ the operator
\begin{equation} \label{F: F(Delta)}
  F_\Delta := F E_\Delta \ \ \text{is Hilbert-Schmidt.}
\end{equation}
\end{defn}
\noindent
Proof of the following lemma is standard and well-known, but we give it for completeness.
\begin{lemma} \label{L: FRF* is compact} If $H$ is a self-adjoint operator compatible with a rigging~$F,$
then for all non-real $z$ the operator $FR_z(H)F^*$ is compact.
\end{lemma}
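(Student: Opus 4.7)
The plan is to approximate $FR_z(H)F^*$ in operator norm by compact operators, using the spectral decomposition of $H$ to split off the part supported on a large bounded interval (where compatibility gives a Hilbert-Schmidt factor) from the tail (which is small in norm by the resolvent estimate).

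Concretely, for each positive integer $n$ let $\Delta_n = (-n,n)$ and write
\[
  FR_z(H)F^* = FR_z(H)E^H_{\Delta_n}F^* + FR_z(H)E^H_{\mathbb{R}\setminus\Delta_n}F^*.
\]
For the first term, since $E^H_{\Delta_n}$ commutes with $R_z(H)$ and is a projection, we have $FR_z(H)E^H_{\Delta_n} = (FE^H_{\Delta_n})R_z(H)E^H_{\Delta_n} = F_{\Delta_n}R_z(H)E^H_{\Delta_n}$. Compatibility of $(H,F)$ makes $F_{\Delta_n}$ Hilbert-Schmidt, and $R_z(H)E^H_{\Delta_n}$ is bounded (with norm at most $|\Im z|^{-1}$), so the product is Hilbert-Schmidt; multiplying on the right by the bounded operator $F^*$ preserves compactness. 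Thus $FR_z(H)E^H_{\Delta_n}F^*$ is compact for every $n$.

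For the tail, the functional calculus gives
\[
  \bigl\| R_z(H)E^H_{\mathbb{R}\setminus\Delta_n} \bigr\| = \sup_{|t|\geq n} \frac{1}{|t-z|} \longrightarrow 0
\]
as $n\to\infty$, and therefore
\[
  \bigl\| FR_z(H)E^H_{\mathbb{R}\setminus\Delta_n}F^* \bigr\| \leq \|F\|^2 \cdot \sup_{|t|\geq n} \frac{1}{|t-z|} \longrightarrow 0.
\]
Hence $FR_z(H)F^*$ is a norm limit of compact operators and is itself compact.

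No step looks genuinely hard; the one subtlety to be careful about is the commutation of $E^H_{\Delta_n}$ with $R_z(H)$ (immediate from the spectral theorem) and the fact that compactness is preserved under left and right multiplication by bounded operators and under norm limits. The whole argument hinges on having the single Hilbert-Schmidt input $FE^H_{\Delta_n}$, which is exactly the content of Definition of compatibility.
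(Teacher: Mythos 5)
Your proof is correct and follows the same route as the paper's (very terse) argument: split off $FR_z(H)E^H_{\Delta_n}F^*$, which is compact because $FE^H_{\Delta_n}$ is Hilbert--Schmidt by compatibility, and observe that the remainder tends to zero in norm since $\|R_z(H)E^H_{\mathbb{R}\setminus\Delta_n}\|=\sup_{|t|\geq n}|t-z|^{-1}\to 0$. You have merely spelled out the details the paper leaves implicit.
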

\begin{proof} The operator $FR_z(H)F^*$ can be approximated in uniform norm arbitrarily well by a compact operator $FE^H_{[-a,a]} R_z(H)F^*$
where $a$ is large enough number.
\end{proof}

\noindent Given a compatible pair $(H,F)$ of operators one can introduce a sheaf of
Hilbert spaces $\euS(H,F)$ over some measurable set $\Lambda(H,F) \subset \mbR$
of full Lebesgue measure, naturally associated with the pair
$(H,F);$ the main property of this sheaf is that it diagonalizes the
operator $H.$ The rest of this section is devoted to this construction.

\subsection{Hilbert spaces $\hilb_{\pm 1}(\Delta)$}

Given a self-adjoint operator~$H$ on a rigged Hilbert space
$(\hilb,F),$ such that~$H$ is compatible with the rigging~$F,$ we
can further study some properties of Hilbert spaces $\hilb_1(F)$
and $\hilb_{-1}(F),$ introduced in Section~\ref{S: generalized
rigging}. Given a bounded open set~$\Delta,$ we can consider the
Hilbert-Schmidt operator $F_\Delta = FE_\Delta$ as a
Hilbert-Schmidt rigging in the Hilbert space $E_\Delta \hilb.$ Corresponding Hilbert spaces~$\hilb_{\pm 1}$ of
regular and singular vectors will be denoted by~$\hilb_{\pm 1}(\Delta).$ Let
\begin{equation} \label{F: F(Delta) = Schimidt ...}
  F_\Delta = \sum_{j=1}^\infty \kappa_j^\Delta \scal{\phi_j^\Delta}{\cdot}\psi_j^\Delta
\end{equation}
be a fixed Schmidt representation of~$F_\Delta.$ Elements of
$\hilb_1(\Delta)$ have the form
\begin{equation} \label{F: elements of hilb1(Delta)}
  f = \sum_{j=1}^\infty \beta_j \kappa_j^\Delta \phi_j^\Delta
\end{equation}
where $\beta=(\beta_j) \in \ell_2$ and
\begin{equation} \label{F: norm(f)hilb1=norm(beta)ell2}
  \norm{f}_{\hilb_1(\Delta)} = \norm{\beta}_{\ell_2}.
\end{equation}
The one-to-one correspondence between elements $f$ of $\hilb_1(\Delta)$ and $\ell_2$-vectors
$\beta=(\beta_j)$ given by~(\ref{F: elements of hilb1(Delta)}) is a unitary operator.
For this reason, for an element $f$ of $\hilb_1(\Delta)$ we use notation $\beta^\Delta_j(f)$ so that, by definition,
\begin{equation} \label{F: elements of hilb1(Delta)TWO}
  f = \sum_{j=1}^\infty \beta^\Delta_j(f) \kappa_j^\Delta \phi_j^\Delta, \ \ \beta^\Delta(f) \in \ell_2.
\end{equation}
We have from~(\ref{F: F(Delta) = Schimidt ...})
\begin{equation} \label{F: F(Delta)*=Schmidt series}
  F_\Delta^* = \sum_{j=1}^\infty \kappa_j^\Delta \scal{\psi_j^\Delta}{\cdot}\phi_j^\Delta.
\end{equation}
It follows that the range of $F_\Delta^*$ consists of all vectors of the
form~(\ref{F: elements of hilb1(Delta)}), that is,
\begin{equation} \label{F: hilb1(Delta)=ran(F*)}
  \hilb_1(\Delta) = \rng(F_\Delta^*).
\end{equation}
The inclusion $E_\Delta f \in \hilb_1(\Delta)$ means that restriction $E_\Delta f$ of
a vector $f \in \hilb$ to a bounded Borel set~$\Delta$ is regular in a
certain sense. It is natural to expect that if a vector $f$ is
regular on some Borel set~$\Delta,$ then $f$ is to be regular also
on any Borel subset $\Delta_0$ of the set~$\Delta.$
One could compare this situation to continuity property of a mapping: if a mapping is continuous on a set, then it is also continuous on a subset of the set.
The following proposition asserts that regularity of a vector $f$ on~$\Delta$ is inherited by subsets of~$\Delta,$
but this is where the analogy with continuity of functions ends, since the restriction mapping turns out to be surjective.

\begin{prop} \label{P: E(Delta) is contraction} Let $\Delta_1 \subset \Delta_2$ be two open bounded subsets of $\mbR.$
If $f \in \hilb_1(\Delta_2),$ then $E_{\Delta_1} f \in
\hilb_1(\Delta_1).$ Moreover, the mapping $E_{\Delta_1} \colon
\hilb_1(\Delta_2) \to \hilb_1(\Delta_1)$ is a contraction and its
image coincides with $\hilb_1(\Delta_1).$
\end{prop}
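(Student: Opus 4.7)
The plan is to leverage the range characterization $\hilb_1(\Delta) = \rng(F_\Delta^*)$ from (\ref{F: hilb1(Delta)=ran(F*)}), together with the elementary factorization
\[
  F_{\Delta_1} \;=\; F E_{\Delta_1} \;=\; F E_{\Delta_2} E_{\Delta_1} \;=\; F_{\Delta_2} E_{\Delta_1},
\]
(valid because $E_{\Delta_1} = E_{\Delta_1} E_{\Delta_2}$ when $\Delta_1 \subset \Delta_2$), whose adjoint is $F_{\Delta_1}^* = E_{\Delta_1} F_{\Delta_2}^*$. All three assertions of the proposition will follow from this single identity.

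Given $f \in \hilb_1(\Delta_2)$, I would pick $\psi \in \clK$ with $f = F_{\Delta_2}^* \psi$ and simply compute
\[
  E_{\Delta_1} f \;=\; E_{\Delta_1} F_{\Delta_2}^* \psi \;=\; F_{\Delta_1}^* \psi \;\in\; \rng(F_{\Delta_1}^*) \;=\; \hilb_1(\Delta_1),
\]
which gives the inclusion $E_{\Delta_1}(\hilb_1(\Delta_2)) \subset \hilb_1(\Delta_1)$. Surjectivity is the same computation in reverse: every $g \in \hilb_1(\Delta_1)$ has the form $g = F_{\Delta_1}^* \psi$ for some $\psi \in \clK$, and then $f := F_{\Delta_2}^* \psi \in \hilb_1(\Delta_2)$ satisfies $E_{\Delta_1} f = g$ automatically.

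For the contraction bound, recall that the $\hilb_1(\Delta)$-norm of $F_\Delta^* \psi$ equals $\norm{\pi_\Delta \psi}_\clK$, where $\pi_\Delta$ denotes the orthogonal projection of $\clK$ onto $\overline{\rng(F_\Delta)} = (\ker F_\Delta^*)^\perp$. This is immediate from the defining scalar product $\scal{F_\Delta^*\psi'}{F_\Delta^*\psi''}_{\hilb_1(\Delta)} = \scal{\psi'}{\psi''}_\clK$ after discarding the $\ker F_\Delta^*$-component of $\psi$. The factorization above yields $\rng(F_{\Delta_1}) \subset \rng(F_{\Delta_2})$, and hence the operator inequality $\pi_{\Delta_1} \leq \pi_{\Delta_2}$. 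Choosing the canonical representative $\psi \in (\ker F_{\Delta_2}^*)^\perp$, one therefore obtains
\[
  \norm{E_{\Delta_1} f}_{\hilb_1(\Delta_1)} \;=\; \norm{\pi_{\Delta_1} \psi}_\clK \;\leq\; \norm{\pi_{\Delta_2} \psi}_\clK \;=\; \norm{\psi}_\clK \;=\; \norm{f}_{\hilb_1(\Delta_2)}.
\]

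The only real point of vigilance is that $F_\Delta$ generally fails to have dense range in $\clK$ even though $F$ does, so $\ker F_\Delta^*$ and the projections $\pi_\Delta$ cannot be ignored. This is precisely the situation anticipated by the auxiliary-space convention adopted in Section~\ref{S: generalized rigging}, and once the projections are book-kept properly the argument is pure linear algebra on the two range representations.
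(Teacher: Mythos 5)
Your proof is correct and follows essentially the same route as the paper: the inclusion and surjectivity are obtained verbatim from the factorization $F_{\Delta_1}^*=E_{\Delta_1}F_{\Delta_2}^*$ and the range characterization~(\ref{F: hilb1(Delta)=ran(F*)}), and your contraction step is the coordinate-free form of the paper's argument, since the matrix $\Psi^{\Delta_1,\Delta_2}$ there is exactly the matrix of your projection $\pi_{\Delta_1}$ restricted to $\overline{\rng(F_{\Delta_2})}$ in the Schmidt bases $(\psi_j^{\Delta_2})$, $(\psi_j^{\Delta_1})$. Your explicit bookkeeping of $\ker F_\Delta^*$ is a welcome clarification of a point the paper leaves implicit.
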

\begin{proof} If $f \in \hilb_1(\Delta_2),$ then, by~(\ref{F: hilb1(Delta)=ran(F*)}), there exists $g \in \clK$ such that $f = E_{\Delta_2}F^*g.$
It follows that
$$
  E_{\Delta_1} f = E_{\Delta_1}F^*g = F_{\Delta_1}^* g \in \hilb_1(\Delta_1),
$$
where the last inclusion follows from~(\ref{F: hilb1(Delta)=ran(F*)}). Since, again by~(\ref{F: hilb1(Delta)=ran(F*)}), the set $\hilb_1(\Delta)$ is the image of
the operator $F_\Delta^* = E_\Delta F^*,$ it follows that
$E_{\Delta_1} \hilb_1(\Delta_2)$ coincides with
$\hilb_1(\Delta_1).$

Direct calculation shows that for any $\Delta_1 \subset \Delta_2$ 
the following formula holds
\begin{equation} \label{F: beta(D)(Ef)=}
  \beta_j^{\Delta_1}\brs{E_{\Delta_1} f} = \sum_{k=1}^{d(\Delta_2)}  \scal{\psi_j^{\Delta_1}}{\psi_k^{\Delta_2}} \beta^{\Delta_2}_k(f),
\end{equation}
where $d(\Delta_2)$ is the dimension of the Hilbert space $E_{\Delta_2}\hilb.$ Letting
$$
  \Psi^{\Delta_1,\Delta_2} = \brs{\scal{\psi_j^{\Delta_1}}{\psi_k^{\Delta_2}}} _{j=1,k=1}^{d(\Delta_1), d(\Delta_2)},
$$
we can rewrite (\ref{F: beta(D)(Ef)=}) as
\begin{equation} \label{F: boxed formula}
  \beta^{\Delta_1}\brs{E_{\Delta_1} f} = \Psi^{\Delta_1,\Delta_2} \beta^{\Delta_2}\brs{f}.
\end{equation}
The matrix $\Psi^{\Delta_1,\Delta_2}$ is a contraction. Indeed,
$\Psi^{\Delta_1,\Delta_2}$ is the matrix of the orthogonal
projection of the linear span of $(\psi_j^{\Delta_2})$ onto the
linear span of $(\psi_j^{\Delta_1}),$ with the orthonormal bases
$(\psi_j^{\Delta_2})_{j=1}^{d(\Delta_2)}$ and
$(\psi_j^{\Delta_1})_{j=1}^{d(\Delta_1)}$ respectively. That the
operator $E_{\Delta_1} \colon \hilb_1(\Delta_2) \to
\hilb_1(\Delta_1)$ is a contraction now follows from~(\ref{F:
boxed formula}), (\ref{F: norm(f)hilb1=norm(beta)ell2}) and from the fact that the matrix
$\Psi^{\Delta_1,\Delta_2}$ is a contraction.
\end{proof}
Clearly, for any $\Delta_1 \subset \Delta_2 \subset \Delta_3,$ we
have the equality
$$
  \Psi^{\Delta_1,\Delta_3} = \Psi^{\Delta_1,\Delta_2} \Psi^{\Delta_2,\Delta_3}.
$$
The formula~(\ref{F: boxed formula}) can be rewritten as
\begin{equation} \label{F: E(Delta0)=Psi(Delta1,Delta0)E(Delta0)}
  E_{\Delta_1} \brs{\sum_{k=1}^{d(\Delta_2)}  \beta^{\Delta_2}_k\kappa^{\Delta_2}_k \phi^{\Delta_2}_k}
  = \brs{\sum_{j=1}^{d(\Delta_1)}  \brs{\Psi^{\Delta_1,\Delta_2}\beta^{\Delta_2} }_j \kappa^{\Delta_1}_j \phi^{\Delta_1}_j}.
\end{equation}

\begin{lemma} \label{L: norm(f)=lim norm(E(Delta)f}
For any $f \in \hilb_1(F)$ we have \ $E_\Delta f \in \hilb_1(\Delta)$ and $$\norm{f}_{\hilb_1(F)} =
\lim_{\Delta \to \mbR} \norm{E_\Delta f}_{\hilb_1(\Delta)}.$$
\end{lemma}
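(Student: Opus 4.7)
The plan is to identify both $\hilb_1(F)$ and $\hilb_1(\Delta)$ with ranges of the adjoint riggings $F^*$ and $F_\Delta^*$, and to transport the comparison of norms into the single auxiliary Hilbert space $\clK$. Since $F$ has trivial kernel and dense range, the discussion in Section~\ref{S: generalized rigging} gives $\hilb_1(F) = \rng F^*$ with $\norm{F^* g}_{\hilb_1(F)} = \norm{g}_\clK$, so every $f \in \hilb_1(F)$ can be written uniquely as $f = F^* g$ for some $g \in \clK$.

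First I would check the inclusion. Because $F_\Delta^* = (FE_\Delta)^* = E_\Delta F^*$, one has $E_\Delta f = E_\Delta F^* g = F_\Delta^* g \in \rng F_\Delta^* = \hilb_1(\Delta)$ by~(\ref{F: hilb1(Delta)=ran(F*)}). Applying the Schmidt series~(\ref{F: F(Delta)*=Schmidt series}) to $g$ yields $\beta^\Delta_j(E_\Delta f) = \scal{\psi_j^\Delta}{g}$, and combining with the isometry~(\ref{F: norm(f)hilb1=norm(beta)ell2}) gives
$$
  \norm{E_\Delta f}_{\hilb_1(\Delta)}^2 \;=\; \sum_{j=1}^{\infty} \abs{\scal{\psi_j^\Delta}{g}}^2 \;=\; \norm{P_\Delta g}_\clK^2,
$$
where $P_\Delta$ is the orthogonal projection in $\clK$ onto the closed span of $(\psi_j^\Delta)_j$, which equals $\closure{\rng F_\Delta}$.

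Next, I would fix an increasing sequence $\Delta_1 \subset \Delta_2 \subset \ldots$ of bounded open sets with $\bigcup_n \Delta_n = \mbR$. Since the ranges $\rng F_{\Delta_n}$ form an increasing chain, so do the projections $P_{\Delta_n}$. The key step is to show $P_{\Delta_n} \to 1$ strongly in $\clK$, and for this it suffices to show that $\bigcup_n \rng F_{\Delta_n}$ is dense in $\clK$. Given any $\phi \in \hilb$, the strong convergence $E_{\Delta_n} \to 1$ on $\hilb$ ($\sigma$-additivity of the spectral measure) yields $F_{\Delta_n}\phi = FE_{\Delta_n}\phi \to F\phi$ in $\clK$; hence $\rng F$ lies in the closure of $\bigcup_n \rng F_{\Delta_n}$, and density of $\rng F$ in $\clK$ finishes this step. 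Therefore
$$
  \norm{E_{\Delta_n}f}_{\hilb_1(\Delta_n)} = \norm{P_{\Delta_n}g}_\clK \;\nearrow\; \norm{g}_\clK = \norm{f}_{\hilb_1(F)}.
$$

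To see that the limit is independent of the exhausting sequence I would invoke Proposition~\ref{P: E(Delta) is contraction}: for $\Delta \subset \Delta'$ one has $E_\Delta f = E_\Delta(E_{\Delta'} f)$, and the contraction property yields $\norm{E_\Delta f}_{\hilb_1(\Delta)} \leq \norm{E_{\Delta'} f}_{\hilb_1(\Delta')}$, so the family $\{\norm{E_\Delta f}_{\hilb_1(\Delta)}\}$ is monotone in~$\Delta$ and its supremum is independent of the chosen chain. The only genuinely delicate point is the strong convergence $P_{\Delta_n} \to 1$, which rests essentially on the density of $\rng F$ in $\clK$; if this density were dropped, the limit would equal $\norm{P g}_\clK$ with $P$ the projection onto $\closure{\rng F}$ rather than $\norm{g}_\clK$, which is precisely why the hypothesis of dense range is retained throughout.
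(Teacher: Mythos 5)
Your argument is correct and follows essentially the same route as the paper's proof: write $f=F^*g$, observe $E_\Delta f=F_\Delta^*g$, reduce the norm computation to $\norm{P_\Delta g}_\clK$ where $P_\Delta$ projects onto $\closure{\rng F_\Delta}$, and let $P_\Delta\nearrow 1_\clK$ strongly. You merely spell out in more detail the density argument behind $P_\Delta\to 1$ (via $FE_{\Delta_n}\phi\to F\phi$ and density of $\rng F$) and the monotonicity in $\Delta$, both of which the paper leaves implicit.
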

\begin{proof} Since $\hilb_1(F)$ and $\rng{F^*}$ coincide as sets,
by (\ref{F: hilb1(Delta)=ran(F*)}) the inclusion $f \in \hilb_1(F)$ implies
that $E_\Delta f \in \hilb_1(\Delta).$ In particular, the equality to be proved makes
sense. Further, there exists a unique vector $g \in \clK$ such that $f =
F^*g,$ and, by definition,
\begin{equation} \label{F: norm(f)hilb1=norm(g)clK}
  \norm{f}_{\hilb_1(F)} = \norm{g}_{\clK}.
\end{equation}
Hence, $E_\Delta f = F_\Delta^* g.$ Since the operator
$F_\Delta^*$ has the form~(\ref{F: F(Delta)*=Schmidt series}), it
follows that
$$
  E_\Delta f = F_\Delta^* g = \sum_{j=1}^\infty \kappa_j^\Delta \scal{\psi_j^\Delta}{g}\phi_j^\Delta.
$$
This and (\ref{F: norm(f)hilb1=norm(beta)ell2}) imply that
$$
  \norm{E_\Delta f}_{\hilb_1(\Delta)} = \norm{\brs{\scal{\psi_j^\Delta}{g}}}_{\ell_2}.
$$
If we denote by $E_\Delta^\psi$ the projection onto the closed
linear span of $(\psi_j^\Delta),$ we can rewrite the previous
equality as
\begin{equation} \label{F: norm 1(E(Delta)f)=...}
  \norm{E_\Delta f}_{\hilb_1(\Delta)} = \norm{E_\Delta^\psi g}.
\end{equation}
Since $E_\Delta^\psi \nearrow 1_\clK$ as $\Delta \to \mbR$
(indeed, $E_\Delta^\psi$ projects onto the image of
$E_\Delta\hilb$ under the mapping~$F$), it follows that
$E_\Delta^\psi  \to 1_\clK$ in the strong operator topology, so that $
  \norm{E_\Delta^\psi g} \to \norm{g}
$ as $\Delta \to \mbR.$ Combining this with~(\ref{F:
norm(f)hilb1=norm(g)clK}) and~(\ref{F: norm 1(E(Delta)f)=...})
completes the proof.
\end{proof}

\begin{lemma} \label{L: E(Delta) is a contraction}
  The operator $E_\Delta,$ considered as an operator from $\hilb_1(F)$ to $\hilb_1(\Delta),$ is a contraction.
  The image of this contraction coincides with $\hilb_1(\Delta).$
\end{lemma}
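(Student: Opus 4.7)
The plan is to reduce everything to the representation $\hilb_1(F)=\rng(F^*)$ and the parallel representation $\hilb_1(\Delta)=\rng(F^*_\Delta)$ established in~(\ref{F: hilb1(Delta)=ran(F*)}), together with the computation already performed in the proof of Lemma~\ref{L: norm(f)=lim norm(E(Delta)f}.

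First I would take an arbitrary $f\in\hilb_1(F)$ and write $f=F^*g$ for the unique $g\in\clK$ given by surjectivity of $F^*\colon\clK\to\hilb_1(F)$; by definition $\norm{f}_{\hilb_1(F)}=\norm{g}_\clK$. Applying $E_\Delta$ gives $E_\Delta f=E_\Delta F^*g=F_\Delta^* g$, so $E_\Delta f\in\hilb_1(\Delta)$, which already verifies the claim that $E_\Delta$ maps $\hilb_1(F)$ into $\hilb_1(\Delta)$. The contraction bound then follows instantly from formula~(\ref{F: norm 1(E(Delta)f)=...}) in the previous proof, which says
\[
  \norm{E_\Delta f}_{\hilb_1(\Delta)}=\norm{E_\Delta^\psi g}_\clK\le\norm{g}_\clK=\norm{f}_{\hilb_1(F)},
\]
since $E_\Delta^\psi$ is an orthogonal projection on $\clK$.

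For surjectivity onto $\hilb_1(\Delta)$, take any $h\in\hilb_1(\Delta)$. By~(\ref{F: hilb1(Delta)=ran(F*)}) there is some $g\in\clK$ with $h=F_\Delta^* g=E_\Delta F^* g$. Setting $f:=F^* g$, we have $f\in\rng(F^*)=\hilb_1(F)$ and $E_\Delta f=h$. Hence every element of $\hilb_1(\Delta)$ lies in the image of the contraction $E_\Delta\colon\hilb_1(F)\to\hilb_1(\Delta)$, which is the required equality of sets.

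There is no real obstacle here: the lemma is essentially a repackaging of Lemma~\ref{L: norm(f)=lim norm(E(Delta)f} together with the identification of both $\hilb_1$-spaces as ranges of the respective adjoint rigging operators. The only point to watch is that $g$ in the decomposition $f=F^*g$ is uniquely determined (because $F^*$ is injective on $\clK$, as $F$ has dense range), so that the norm identity $\norm{f}_{\hilb_1(F)}=\norm{g}_\clK$ is unambiguous before one invokes the projection inequality.
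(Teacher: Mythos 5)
Your proof is correct, but it takes a more direct route than the paper. The paper first proves the nested-interval contraction property (Proposition~\ref{P: E(Delta) is contraction}, via the matrices $\Psi^{\Delta_1,\Delta_2}$) and the limit identity of Lemma~\ref{L: norm(f)=lim norm(E(Delta)f}, and then deduces the present lemma by an $\eps$-argument: choose $\Delta_2\supset\Delta$ with $\norm{E_{\Delta_2}f}_{\hilb_1(\Delta_2)}<1+\eps$ and factor $E_\Delta=E_\Delta E_{\Delta_2}$. You instead go straight to the identity $\norm{E_\Delta f}_{\hilb_1(\Delta)}=\norm{E_\Delta^\psi g}_\clK$ (formula~(\ref{F: norm 1(E(Delta)f)=...}), established inside the proof of Lemma~\ref{L: norm(f)=lim norm(E(Delta)f}) and observe that $E_\Delta^\psi$ is an orthogonal projection, so the contraction bound is immediate from~(\ref{F: norm(f)hilb1=norm(g)clK}); this bypasses both the limiting argument and Proposition~\ref{P: E(Delta) is contraction}. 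The surjectivity argument via $\hilb_1(\Delta)=\rng(F_\Delta^*)=\rng(E_\Delta F^*)$ is the same as the paper's. Your parenthetical care about the uniqueness of $g$ in $f=F^*g$ (injectivity of $F^*$ from dense range of $F$) is well placed; the only price of your shortcut is that it quotes an intermediate formula from another proof rather than a stated lemma, which is harmless here since that formula is exactly what the cited proof establishes.
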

\begin{proof} Let $f$ be a unit vector from $\hilb_1(F)$
and let $\eps>0.$ By Lemma~\ref{L: norm(f)=lim norm(E(Delta)f} there
exists large enough $\Delta_2 \supset \Delta$ such that $
  \norm{E_{\Delta_2} f}_{\hilb_1(\Delta_2)} < 1 + \eps.
$ Since $E_{\Delta} f = E_{\Delta} E_{\Delta_2} f,$ and since, by
Proposition~\ref{P: E(Delta) is contraction}, $E_{\Delta}$ is a
contraction from $\hilb_1\brs{\Delta_2}$ to $\hilb_1\brs{\Delta},$ it
follows that
$$
  \norm{E_{\Delta} f}_{\hilb_1(\Delta)} < 1 + \eps.
$$
Since $\eps>0$ is arbitrary, it follows that $E_\Delta$ is a
contraction. Further, by (\ref{F: hilb1(Delta)=ran(F*)}) the set $\hilb_1(\Delta)$ is the image of
the operator $F_\Delta^* = E_\Delta F^*,$ while $\hilb_1(F)$ is
the image of $F^*.$ It follows that $E_\Delta \hilb_1(F)
= \hilb_1(\Delta).$
\end{proof}

\begin{prop} \label{P: H(-1)(Delta) incl H(-1)} Let $\Delta_1 \subset \Delta_2$ be two open bounded subsets of $\mbR.$
A natural inclusion
$$
  \hilb_{-1}(\Delta_1) \hookrightarrow \hilb_{-1}(\Delta_2), \ \ \ f_{\Delta_1} \mapsto f_{\Delta_2},
$$
given by formula
$$
  \scal{f_{\Delta_2}}{k_{\Delta_2}}_{\hilb_{-1}(\Delta_2),\hilb_{1}(\Delta_2)} = \scal{f_{\Delta_1}}{E_{\Delta_1} k_{\Delta_2}}_{\hilb_{-1}(\Delta_1),\hilb_{1}(\Delta_1)},
$$
where $k_{\Delta_2} \in \hilb_{1}(\Delta_2),$ is a contraction.
Further, for any bounded open $\Delta \subset \mbR$ the natural
inclusion $
  \hilb_{-1}(\Delta) \hookrightarrow \hilb_{-1}(F), \ f_{\Delta} \mapsto f,
$ given by formula
$$
  \scal{f}{k}_{-1,1} = \scal{f_{\Delta}}{E_{\Delta} k}_{-1,1},
$$
where $k \in \hilb_{1}(F),$ is also a contraction. Moreover, the
union $
  \bigcup_{\Delta} \hilb_{-1}(\Delta),
$ taken over all bounded open sets $\Delta \subset \mbR,$ is dense
in $\hilb_{-1}(F).$
\end{prop}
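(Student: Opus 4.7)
The plan is to prove all three assertions via duality, reducing each to the corresponding contraction property of the restriction operators on the $\hilb_1$-side that has already been established in Proposition~\ref{P: E(Delta) is contraction} and Lemma~\ref{L: E(Delta) is a contraction}. The key tool is the dual characterization (\ref{F: norm(f)(-alpha)}) of the $\hilb_{-1}$-norms as suprema of the pairing over unit vectors in the corresponding $\hilb_1$-space.

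For the first contraction $\hilb_{-1}(\Delta_1) \hookrightarrow \hilb_{-1}(\Delta_2),$ I would first verify that the displayed formula defines a bounded conjugate-linear functional on $\hilb_1(\Delta_2).$ This is immediate: given $k_{\Delta_2} \in \hilb_1(\Delta_2),$ Proposition~\ref{P: E(Delta) is contraction} yields $E_{\Delta_1} k_{\Delta_2} \in \hilb_1(\Delta_1)$ with $\norm{E_{\Delta_1} k_{\Delta_2}}_{\hilb_1(\Delta_1)} \leq \norm{k_{\Delta_2}}_{\hilb_1(\Delta_2)},$ so the right hand side is bounded by $\norm{f_{\Delta_1}}_{\hilb_{-1}(\Delta_1)} \norm{k_{\Delta_2}}_{\hilb_1(\Delta_2)}.$ Applying (\ref{F: norm(f)(-alpha)}) inside $\hilb_{-1}(\Delta_2)$ then gives
\begin{equation*}
  \norm{f_{\Delta_2}}_{\hilb_{-1}(\Delta_2)} = \sup_{\norm{k_{\Delta_2}}_{\hilb_1(\Delta_2)} = 1} \abs{\scal{f_{\Delta_1}}{E_{\Delta_1} k_{\Delta_2}}_{-1,1}} \leq \norm{f_{\Delta_1}}_{\hilb_{-1}(\Delta_1)}.
\end{equation*}
The second contraction $\hilb_{-1}(\Delta) \hookrightarrow \hilb_{-1}(F)$ follows by an identical argument, with Lemma~\ref{L: E(Delta) is a contraction} taking the place of Proposition~\ref{P: E(Delta) is contraction}.

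For density of $\bigcup_\Delta \hilb_{-1}(\Delta)$ in $\hilb_{-1}(F),$ I would first use the fact that $\hilb$ is dense in $\hilb_{-1}(F)$ by construction, and then reduce via a three-$\eps$ argument to approximating a fixed $f \in \hilb$ by vectors from $\bigcup_\Delta \hilb_{-1}(\Delta).$ The natural candidates are $E_\Delta f \in E_\Delta \hilb \subset \hilb_{-1}(\Delta),$ and the approximation is immediate from the definition of the $\hilb_{-1}(F)$-norm:
\begin{equation*}
  \norm{E_\Delta f - f}_{\hilb_{-1}(F)} = \norm{F(E_\Delta - 1)f}_\clK \to 0
\end{equation*}
as $\Delta \nearrow \mbR,$ since $E_\Delta \to 1$ strongly on $\hilb$ and $F$ is bounded.

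I do not foresee any essential obstacle; the mild point to keep track of is consistency of the two inclusions with the set-theoretic identification on $E_\Delta \hilb,$ which reduces to the trivial identity $\scal{f}{k} = \scal{f}{E_\Delta k}$ for $f = E_\Delta f,$ so that $E_\Delta f \in \hilb_{-1}(\Delta)$ is indeed sent by the inclusion to the same element $E_\Delta f$ viewed in $\hilb_{-1}(F).$ Once this consistency is noted, the entire proposition collapses into soft functional-analytic manipulations built on the two contraction lemmas.
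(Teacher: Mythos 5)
Your proof is correct and follows essentially the same route as the paper: both arguments use the dual characterization~(\ref{F: norm(f)(-alpha)}) of the $\hilb_{-1}$-norms to reduce each inclusion to the contraction property of $E_{\Delta_1}$ on the $\hilb_1$-side (Proposition~\ref{P: E(Delta) is contraction} for the first part, Lemma~\ref{L: E(Delta) is a contraction} for the second). The only difference is that the paper dismisses the density assertion as obvious, whereas you correctly spell it out via $\norm{F(E_\Delta-1)f}_\clK\to 0$; that added detail is sound.
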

\begin{proof}
We have, by~(\ref{F: norm(f)(-alpha)}),
\begin{equation*}
  \begin{split}
  \norm{f_{{\Delta_2}}}_{\hilb_{-1}({\Delta_2})}
     & = \sup_{g_{{\Delta_2}}\colon \norm{g_{{\Delta_2}}}_{\hilb_{-1}({\Delta_2})} \leq 1} \abs{\scal{f_{{\Delta_2}}}{g_{{\Delta_2}}}_{\hilb_{-1}({\Delta_2})}}
   \\ & = \sup_{g_{{\Delta_2}}\colon \norm{g_{{\Delta_2}}}_{\hilb_{-1}({\Delta_2})} \leq 1} \abs{\scal{f_{{\Delta_2}}}{\abs{F_{{\Delta_2}}}^2g_{{\Delta_2}}}_{\hilb_{-1}({\Delta_2}),\hilb_{1}({\Delta_2})}}.
  \end{split}
\end{equation*}
Let $k_{{\Delta_2}} = \abs{F_{{\Delta_2}}}^2g_{{\Delta_2}} \in
\hilb_1({\Delta_2}).$ Since, by definition,
$\norm{g_{{\Delta_2}}}_{\hilb_{-1}({\Delta_2})} =
\norm{k_{{\Delta_2}}}_{\hilb_{1}({\Delta_2})},$ it follows that
\begin{equation*}
  \begin{split}
  \norm{f_{{\Delta_2}}}_{\hilb_{-1}({\Delta_2})}
     & = \sup_{k_{{\Delta_2}}\colon \norm{k_{{\Delta_2}}}_{\hilb_{1}({\Delta_2})} \leq 1} \abs{\scal{f_{{\Delta_2}}}{k_{{\Delta_2}}}_{\hilb_{-1}({\Delta_2}),\hilb_{1}({\Delta_2})}}
    \\ & = \sup_{k_{{\Delta_2}}\colon \norm{k_{{\Delta_2}}}_{\hilb_{1}({\Delta_2})} \leq 1} \abs{\scal{f_{{\Delta_1}}}{E_{\Delta_1} k_{{\Delta_2}}}_{\hilb_{-1}({\Delta_1}),\hilb_{1}({\Delta_1})}},
  \end{split}
\end{equation*}
where the last equality follows from the definition of
$f_{\Delta_2}.$ Since, by Proposition~\ref{P: E(Delta) is
contraction}, the operator $E_{\Delta_1}\colon \hilb_1(\Delta_2)
\to \hilb_1(\Delta_1)$ is a contraction, it follows
from the last inequality and~(\ref{F: norm(f)(-alpha)})
that $\norm{f_{{\Delta_2}}}_{\hilb_{-1}({\Delta_2})} \leq
\norm{f_{{\Delta_1}}}_{\hilb_{-1}({\Delta_1})}.$

Proof of the second part is similar, but instead of Proposition
\ref{P: E(Delta) is contraction} one has to use Lemma~\ref{L:
E(Delta) is a contraction}. The last assertion is obvious.
\end{proof}

\subsection{The set $\Lambda(H,F)$}
Given a fixed Hilbert-Schmidt rigging operator~$F,$ to every
self-adjoint operator~$H$ we can assign a set of full Lebesgue
measure $\Lambda(H,F),$ so that for every number~$\lambda$ from
$\Lambda(H,F)$ one can define the fiber Hilbert space (\ref{F: hlambda})
and the evaluation operator (\ref{F: euE def of}) by formula (\ref{F: euE(f)=sum beta(j)eta(j)}).
The evaluation operator $\euE_\lambda$ is an important
tool in the approach to abstract scattering theory, discussed here. In this section
we define and study the set $\Lambda(H,F)$ for a rigging operator~$F,$ which is not necessarily compact.

\begin{defn} \label{D: Lambda(H,F)} Let a self-adjoint operator~$H$ on a Hilbert space~$\hilb$ be compatible with a rigging operator $F \colon \hilb \to \clK.$
The set $\Lambda(H,F)$ of~$H$-regular points (or just regular, if there is
no danger of confusion) for the pair $(H,F)$ consists of all those
non-zero real numbers~$\lambda,$ such that there exists the norm limit
$$
  FR_{\lambda+i0}(H)F^* = \lim_{y \to 0^+}FR_{\lambda+iy}(H)F^*
$$
and such that for some (and thus for any) bounded open set
$\Delta,$ containing the point~$\lambda,$ the limit
$$
  F E_\Delta \Im R_{\lambda+i0}(H) F^* = \lim_{y \to 0^+}F E_\Delta \Im R_{\lambda+iy}(H)F^*
$$
exists in trace-class norm.
\end{defn}
\noindent If~$F$ is a Hilbert-Schmidt operator, then this
definition coincides with that of part I of \S~\ref{SS: HS rigging} (cf. also \cite[Definition 2.4.1]{Az3v6}).
Definition of the set $\Lambda(H,F)$ does not depend on the choice
of an open bounded set~$\Delta$ containing~$\lambda.$ Indeed, if
$\Delta_2 \ni \lambda$ is another such set, then~$\lambda$ belongs
to the resolvent set of $E_{\Delta\setminus \Delta_2}
R_{\lambda+iy}(H),$ so that the norm limit $E_{\Delta\setminus
\Delta_2} R_{\lambda+i0}(H)$ exists even without sandwiching
by~$F$ and $F^*.$ Using notation~(\ref{F: F(Delta)}), the second
condition of Definition~\ref{D: Lambda(H,F)} is equivalent to the
existence of the trace class norm limit of the operator
$$
  F_\Delta \Im R_{\lambda+iy}(H) F^*_\Delta
$$
as $y \to 0^+.$ It is also not difficult to see that the first
condition of Definition~\ref{D: Lambda(H,F)} implies that for any
bounded open set~$\Delta,$ containing the point~$\lambda,$ the
norm limit
$$
  F_\Delta R_{\lambda+i0}(H)F_\Delta ^* = \lim_{y \to 0^+} F_\Delta R_{\lambda+iy}(H)F_\Delta ^*
$$
exists. That is, if~$\lambda$ belongs to the set $\Lambda(H,F)$ in
the sense of Definition~\ref{D: Lambda(H,F)}, then for any bounded
open set~$\Delta$ containing~$\lambda,$ the point~$\lambda$
belongs to the set $\Lambda(H E_\Delta,F_\Delta)$ in the sense of
part I of \S~\ref{SS: HS rigging}, where $H E_\Delta$ and $F_\Delta$
are considered as operators on the Hilbert space $E_\Delta \hilb.$
In fact, if $\lambda \in \Lambda(H E_\Delta, F_\Delta)$ for some
$\Delta \in \euB_\lambda,$ then
$\lambda \in \Lambda(H,F)$ in the sense of Definition~\ref{D: Lambda(H,F)}, so that
\begin{equation} \label{F: Lambda(H,F)=bigcup...}
  \Lambda(H,F) = \bigcup_{\Delta \in \euB} \brs{\Lambda(H E_\Delta, F_\Delta) \cap \Delta}.
\end{equation}
Thus, Definition~\ref{D: Lambda(H,F)} is a natural extension of
the definition of the set of regular points $\Lambda(H,F)$ to the
case of generalized rigging operators~$F.$

\begin{prop} The set $\Lambda(H,F)$ has full Lebesgue measure.
\end{prop}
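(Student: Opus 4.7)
The plan is to reduce the generalized-rigging statement to the already-established Hilbert-Schmidt version of the abstract limiting absorption principle (Theorem~\ref{T: Abstract L.A.P.}) by exhausting $\mbR$ with bounded open intervals on which $FE_\Delta$ is genuinely Hilbert-Schmidt.

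First I would fix an increasing sequence of bounded open intervals $\Delta_1 \subset \Delta_2 \subset \ldots$ from $\euB$ with $\bigcup_n \Delta_n = \mbR.$ By the compatibility assumption (\ref{F: F(Delta)}), each operator $F_{\Delta_n} = F E_{\Delta_n}$, viewed as a map from the Hilbert space $E_{\Delta_n}\hilb$ to $\clK,$ is Hilbert-Schmidt with trivial kernel (after restricting the codomain to the closure of its range, as noted in the remark after Definition~3.1). Hence the classical Hilbert-Schmidt rigging setup of \S\ref{SS: HS rigging} applies to the pair $(HE_{\Delta_n},F_{\Delta_n})$ on $E_{\Delta_n}\hilb.$

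Next I would invoke Theorem~\ref{T: Abstract L.A.P.} for each such pair, which yields that
\[
  \mbR \setminus \Lambda(HE_{\Delta_n},F_{\Delta_n})
\]
is a Lebesgue null set; in particular $\Delta_n \setminus \bigl(\Lambda(HE_{\Delta_n},F_{\Delta_n}) \cap \Delta_n\bigr)$ is null. By the already-established identity~(\ref{F: Lambda(H,F)=bigcup...}), every element of $\Lambda(HE_{\Delta_n},F_{\Delta_n}) \cap \Delta_n$ lies in $\Lambda(H,F),$ so
\[
  \Delta_n \setminus \Lambda(H,F) \subseteq \Delta_n \setminus \bigl(\Lambda(HE_{\Delta_n},F_{\Delta_n}) \cap \Delta_n\bigr)
\]
is also a null set.

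Finally, since $\mbR = \bigcup_n \Delta_n,$
\[
  \mbR \setminus \Lambda(H,F) = \bigcup_{n=1}^\infty \bigl(\Delta_n \setminus \Lambda(H,F)\bigr)
\]
is a countable union of null sets and hence null, which is the desired conclusion. There is no real obstacle here: the work has already been done in verifying (\ref{F: Lambda(H,F)=bigcup...}) and in quoting the abstract LAP for the Hilbert-Schmidt case; the only thing one must be mindful of is that the resolvent limits appearing in Definition~\ref{D: Lambda(H,F)} on $\hilb$ and the corresponding ones for $(HE_{\Delta_n},F_{\Delta_n})$ on $E_{\Delta_n}\hilb$ carry the same meaning on a neighbourhood of $\lambda$, which is precisely the content of the discussion preceding~(\ref{F: Lambda(H,F)=bigcup...}).
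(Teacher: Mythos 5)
Your proof is correct and follows essentially the same route as the paper: invoke the Hilbert--Schmidt abstract limiting absorption principle (Theorem~\ref{T: Abstract L.A.P.}) for each pair $(HE_{\Delta},F_{\Delta})$ and combine it with the representation~(\ref{F: Lambda(H,F)=bigcup...}) of $\Lambda(H,F)$ as a union over bounded open sets. The only difference is that you make the countable exhaustion of $\mbR$ by the sets $\Delta_n$ explicit, a step the paper leaves implicit.
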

\begin{proof}
By Theorem~\ref{T: Abstract L.A.P.}, the set $\Lambda(HE_\Delta, F_\Delta)$ has full Lebesgue measure in~$\Delta.$ Further, as it was discussed above, the inclusion
\begin{equation} \label{F: obvious inclusion}
  \Lambda(HE_{\Delta_1}, F_{\Delta_1})  \cap {\Delta_1} \subset \Lambda(HE_{\Delta_2}, F_{\Delta_2})  \cap {\Delta_2},
\end{equation}
holds for any two bounded open subsets $\Delta_1 \subset \Delta_2$ of $\mbR.$ 
\end{proof}

We call a null Borel set $Z$ a \emph{core} of the singular spectrum of~$H,$ if the operator $E_{\mbR \setminus Z}^H H$ is absolutely continuous.
\begin{prop} \label{P: Lambda is a core} The complement of $\Lambda(H,F)$ is a core of singular spectrum of $H_0.$
\end{prop}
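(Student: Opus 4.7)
The plan is to reduce the statement to the Hilbert--Schmidt rigging case already handled by Theorem~\ref{T: H E(H) is a.c.}, using the key identity~(\ref{F: Lambda(H,F)=bigcup...}) which expresses $\Lambda(H,F)$ as a union of sets of the form $\Lambda(HE_\Delta,F_\Delta)\cap\Delta$ with $\Delta\in\euB$. The preceding proposition already gives that $\mbR\setminus\Lambda(H,F)$ has Lebesgue measure zero; Borel measurability of $\Lambda(H,F)$ follows from a standard argument since the sets $\Lambda(HE_\Delta,F_\Delta)$ are defined by existence of norm and trace-norm limits of resolvents, and $\Lambda(H,F)$ is a countable union of Borel sets (it suffices to take $\Delta$ from a countable subbase). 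The substantive content is thus to show that $HE^H_{\Lambda(H,F)}$ is absolutely continuous.

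I would fix an increasing exhausting sequence $\Delta_1\subset\Delta_2\subset\ldots$ of bounded open subsets of~$\mbR$ with $\bigcup_n\Delta_n=\mbR$ and set $\Lambda_n:=\Lambda(HE_{\Delta_n},F_{\Delta_n})\cap\Delta_n$. Monotonicity~(\ref{F: obvious inclusion}) guarantees that the sequence $(\Lambda_n)$ is increasing, and~(\ref{F: Lambda(H,F)=bigcup...}) together with cofinality of $(\Delta_n)$ in~$\euB$ gives $\Lambda(H,F)=\bigcup_n\Lambda_n$. For each~$n$ the operator $F_{\Delta_n}=FE_{\Delta_n}$ is Hilbert--Schmidt by compatibility, hence a genuine Hilbert--Schmidt rigging of the Hilbert space $E_{\Delta_n}\hilb$ for the self-adjoint operator $\tilde H_n:=HE_{\Delta_n}\big|_{E_{\Delta_n}\hilb}$. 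Theorem~\ref{T: H E(H) is a.c.} then yields that $\tilde H_n$ restricted to the spectral subspace $E^{\tilde H_n}_{\Lambda(\tilde H_n,F_{\Delta_n})}E_{\Delta_n}\hilb$ is absolutely continuous.

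The main technical (but routine) point is transferring absolute continuity from $\tilde H_n$ back to $H$. Because $\Lambda_n\subseteq\Delta_n$, for any Borel $B\subseteq\Delta_n$ and any $f\in\hilb$ one has the identity $\|E^H_B f\|^2=\|E^{\tilde H_n}_B E_{\Delta_n}f\|^2$; specialising to $B\subseteq\Lambda_n$ shows that the spectral measure of~$H$ at $E^H_{\Lambda_n}f$ equals the spectral measure of $\tilde H_n$ at $E_{\Delta_n}f$ restricted to $\Lambda_n\subseteq\Lambda(\tilde H_n,F_{\Delta_n})$, which is absolutely continuous by the previous step. Hence $HE^H_{\Lambda_n}$ is absolutely continuous for every~$n$. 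Finally, since $E^H_{\Lambda_n}\to E^H_{\Lambda(H,F)}$ strongly as $n\to\infty$ and the absolutely continuous subspace $\hilb^{(a)}(H)$ is closed, we conclude that $E^H_{\Lambda(H,F)}\hilb\subseteq\hilb^{(a)}(H)$, so $HE^H_{\Lambda(H,F)}$ is absolutely continuous, completing the proof.
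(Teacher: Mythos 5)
Your proof is correct and follows essentially the same route as the paper's: both reduce, via the identity $\Lambda(H,F)=\bigcup_{\Delta}\bigl(\Lambda(HE_\Delta,F_\Delta)\cap\Delta\bigr)$ and the monotonicity~(\ref{F: obvious inclusion}), to Theorem~\ref{T: H E(H) is a.c.} applied to the Hilbert--Schmidt-rigged pairs $(HE_\Delta,F_\Delta)$ on $E_\Delta\hilb$. The paper's argument is a two-line sketch ending in ``it follows''; you have simply supplied the transfer-of-absolute-continuity identity and the strong-limit/closedness step that it leaves implicit.
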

\begin{proof} It follows from~(\ref{F: Lambda(H,F)=bigcup...}) and~(\ref{F: obvious inclusion}) that
$$
  \mbR \setminus \Lambda(H,F) = \bigcap_{\Delta \in \euB} \brs{\mbR \setminus \Lambda(HE_\Delta,F_\Delta)}.
$$
By Theorem~\ref{T: H E(H) is a.c.}, for any bounded open set~$\Delta$ the set $\mbR \setminus
\Lambda(HE_\Delta,F_\Delta)$ is a core of singular spectrum of
$HE_\Delta.$ It follows that $\mbR \setminus \Lambda(H,F)$ is a
core of the singular spectrum of $H.$
\end{proof}

\subsection{A sheaf of fiber Hilbert spaces $\hlambda(H,F)$}
Let~$\euB$ be the family of all bounded open subsets of~$\Lambda(H,F)$
and let $\euB_\lambda$ be the family of all those bounded open
sets which contain a point~$\lambda \in \Lambda(H,F).$
Our next task is to construct the fiber Hilbert space $\hlambda.$ In
the case of non-compact rigging~$F,$ there is no canonical choice of
the fiber Hilbert space $\hlambda.$ Using the construction of the
fiber Hilbert space from part II of \S~\ref{SS: HS rigging} for the case of
Hilbert-Schmidt rigging~$F$, once some bounded open set $\Delta \in
\euB_\lambda$ is fixed, one can define a fiber Hilbert space
$\hlambda(\Delta),$ using the fact that the operator $F_\Delta =
FE_\Delta$ is Hilbert-Schmidt. Thus, instead of one fiber Hilbert
space $\hlambda,$ we get a family of Hilbert spaces
$\hlambda(\Delta).$ Fortunately, it turns out that all
these fiber Hilbert spaces are naturally isomorphic; in fact, they
form a sheaf of fiber Hilbert spaces at~$\lambda,$ in the sense of
subsection~\ref{SS: sheaf of f.H.s-es at lambda}. In order to
make the construction of the fiber Hilbert space $\hlambda$
natural, we treat the Hilbert space $\hlambda$ as a
sheaf of fiber Hilbert spaces at a point. This subsection is devoted to
the construction of this sheaf of fiber Hilbert spaces.

\medskip

Let $(H,F)$ be a compatible pair (see Definition~\ref{D: compatible (H,F)}).
For any bounded open set~$\Delta,$ the operator $F_\Delta$ is
Hilbert-Schmidt (see (\ref{F: F(Delta)}) and (\ref{F: F(Delta) = Schimidt ...})) and has zero kernel in $E_\Delta \hilb.$ Hence, the operator $F_\Delta$
can be considered as a Hilbert-Schmidt rigging in $E_\Delta \hilb.$
For any real number~$\lambda$ from the set
$\Lambda(HE_{\Delta}^{H},F_\Delta) \cap \Delta,$ the rigging
$F_\Delta$ generates the evaluation operator (see part IV of \S~\ref{SS: HS rigging})
$$
  \euE^\Delta_\lambda \colon \hilb_1(\Delta) \to \ell_2,
$$
This operator acts by formula
\begin{equation} \label{F: euD(D)(l)f=...}
  \euE^\Delta_\lambda f = \sum_{j=1}^\infty \beta^\Delta_j \eta^\Delta_j(\lambda),
\end{equation}
where $f$ is an element of the Hilbert space $\hilb_1(\Delta),$
which is given by formula~(\ref{F: elements of hilb1(Delta)TWO}), and
$\eta^\Delta_j(\lambda)$ is the $j$-th column of the
Hilbert-Schmidt matrix
\begin{equation} \label{F: eta(Delta)(l)}
  \eta^\Delta(\lambda) = \sqrt{\phi^\Delta(\lambda)},
\end{equation}
where
\begin{equation} \label{F: phi(Delta)(l)}
 \begin{split}
  \phi^\Delta(\lambda) & = \pi^{-1} \brs{\kappa_j^\Delta\kappa_k^\Delta\scal{\phi_j^\Delta}{\Im R_{\lambda+i0}(E_\Delta^{H}H)\phi_k^\Delta}_{1,-1}}_{j,k=1}^\infty
   \\ & = \pi^{-1} \brs{\scal{\psi_j^\Delta}{F_\Delta \Im R_{\lambda+i0}(E_\Delta^{H}H)F_\Delta^* \psi_k^\Delta}_{1,-1}}_{j,k=1}^\infty.
 \end{split}
\end{equation}
(Here the first pair of brackets $\scal{\cdot}{\cdot}_{1,-1}$ is the pairing of $\hilb_1(\Delta)$ and $\hilb_{-1}(\Delta)$).
In particular, the value $\euE_\lambda^\Delta \phi_j^\Delta = \phi_j^\Delta(\lambda)$ of the
vector $\phi_j^\Delta$ at~$\lambda$ is defined by
$$
  \euE_\lambda^\Delta \phi_j^\Delta = \frac 1{k_j^\Delta}\eta_j^\Delta(\lambda).
$$
The Hilbert space $\hlambda(\Delta)$ is defined as the closure of the image of the
evaluation operator
\begin{equation} \label{F: hlambda(Delta)}
  \hlambda(\Delta) = \overline{\euE^\Delta_\lambda \hilb_1(\Delta)},
\end{equation}
where $\hilb_1(\Delta)$ is the Hilbert space $\hilb_1(F_\Delta)$
of regular vectors of the Hilbert space $E_\Delta \hilb$ with
respect to the rigging~$F_\Delta$ (see (\ref{F: elements of hilb1(Delta)}) and (\ref{F: hilb1(Delta)=ran(F*)})).

\subsubsection{Natural isomorphisms $U_{\Delta_2,\Delta_1}(\lambda)$}
Let $\lambda \in \Lambda(H,F)$ and let $\Delta_1, \Delta_2 \in
\euB_\lambda.$ We define the gluing unitary operators
$U_{\Delta_2,\Delta_1}(\lambda)$ by the formula
\begin{equation} \label{F: U(D2,D1) def of}
  U_{\Delta_2,\Delta_1}(\lambda) \euE^{\Delta_1}_\lambda E_{\Delta_1} f := \euE^{\Delta_2}_\lambda E_{\Delta_2} f,
\end{equation}
where $f \in \hilb_1(F).$ Note that by~(\ref{F:
hilb1(Delta)=ran(F*)}) vectors $E_{\Delta_j} f,$ $f \in
\hilb_1(F),$ exhaust the Hilbert space $\hilb_1(\Delta_j),$ and by
(\ref{F: hlambda(Delta)}) and (\ref{F: hilb1(Delta)=ran(F*)}),
the Hilbert space $\hlambda(\Delta_j)$ is the
completion of the linear manifold $$\euE^{\Delta_j}_\lambda E_{\Delta_j} \rng(F^*) = \euE_\lambda^{\Delta_j} \hilb_1(\Delta_j).$$
It follows that it is sufficient to define $U_{\Delta_2,\Delta_1}(\lambda)$ on this image.

\begin{prop} \label{P: U(Delta1,Delta2) is unitary} The operator $U_{\Delta_2,\Delta_1}(\lambda) \colon \hlambda(\Delta_1) \to \hlambda(\Delta_2),$
defined by formula~(\ref{F: U(D2,D1) def of}), is unitary.
\end{prop}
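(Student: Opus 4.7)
The plan is to reduce everything to showing that $\|\euE^\Delta_\lambda E_\Delta f\|^2_{\hlambda(\Delta)}$ does not depend on the choice of $\Delta \in \euB_\lambda$, for every $f \in \hilb_1(F)$. Once this is established, well-definedness of (\ref{F: U(D2,D1) def of}), polarization, and the density provided by~(\ref{F: hlambda(Delta)}) and Lemma~\ref{L: E(Delta) is a contraction} will combine to give a unitary operator $\hlambda(\Delta_1) \to \hlambda(\Delta_2)$.

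The first step is to express the norm in closed form. Write $f = F^*g$ with $g \in \clK$; then~(\ref{F: elements of hilb1(Delta)TWO}) and~(\ref{F: F(Delta)*=Schmidt series}) give $E_\Delta f = F^*_\Delta g$ and $\beta^\Delta_j(E_\Delta f) = \scal{\psi^\Delta_j}{g}$. Combining~(\ref{F: euD(D)(l)f=...}) with the identity $\scal{\eta^\Delta_j(\lambda)}{\eta^\Delta_k(\lambda)}_{\ell_2} = \phi^\Delta_{jk}(\lambda)$, which follows from~(\ref{F: eta(Delta)(l)}), and then substituting the second expression for $\phi^\Delta_{jk}(\lambda)$ from~(\ref{F: phi(Delta)(l)}) gives
\begin{equation*}
  \pi\|\euE^\Delta_\lambda E_\Delta f\|^2_{\hlambda(\Delta)} = \scal{g}{F_\Delta \Im R_{\lambda+i0}(HE_\Delta) F^*_\Delta g}_\clK.
\end{equation*}
Using the functional-calculus identity $E_\Delta R_z(HE_\Delta) E_\Delta = E_\Delta R_z(H) E_\Delta$ and dualizing via $F^*$, this rewrites as $\pi\|\euE^\Delta_\lambda E_\Delta f\|^2 = \scal{E_\Delta f}{\Im R_{\lambda+i0}(H) E_\Delta f}_{1,-1}$, with the pairing interpreted via the embedding $\hilb_{-1}(\Delta) \hookrightarrow \hilb_{-1}(F)$ of Proposition~\ref{P: H(-1)(Delta) incl H(-1)}.

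Independence of $\Delta$ now follows by a spectral-projection decomposition. Fix $\Delta_1, \Delta_2 \in \euB_\lambda$ and set $Q = E_{\Delta_1 \cap \Delta_2}$ and $R_j = E_{\Delta_j} - Q$; these are mutually orthogonal spectral projections of $H$ which commute with every $\Im R_{\lambda+iy}(H)$. Expanding $E_{\Delta_j} f = Qf + R_j f$ in the bilinear form above, all cross terms vanish by orthogonality of the projections. The remaining $R_j$-piece equals $\scal{R_j f}{\Im R_{\lambda+i0}(HR_j) R_j f}_{1,-1}$, and this vanishes because $HR_j$ on $R_j\hilb$ has spectrum contained in $\overline{\Delta_j \setminus (\Delta_1 \cap \Delta_2)}$, a closed set not containing the point $\lambda$ of the open set $\Delta_1 \cap \Delta_2$. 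Hence $\lambda$ is in the resolvent set of $HR_j$, so $R_{\lambda+i0}(HR_j)$ exists in uniform norm and is self-adjoint, giving $\Im R_{\lambda+i0}(HR_j) = 0$. Consequently both $\pi\|\euE^{\Delta_j}_\lambda E_{\Delta_j} f\|^2$ reduce to $\scal{Qf}{\Im R_{\lambda+i0}(HQ) Qf}_{1,-1}$, which is the desired independence.

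To finish, polarization upgrades this norm equality to the corresponding inner-product identity, so~(\ref{F: U(D2,D1) def of}) defines an isometry on the linear subspace $\euE^{\Delta_1}_\lambda E_{\Delta_1} \hilb_1(F)$ of $\hlambda(\Delta_1)$. By Lemma~\ref{L: E(Delta) is a contraction} this subspace coincides with $\euE^{\Delta_1}_\lambda \hilb_1(\Delta_1)$, whose closure is $\hlambda(\Delta_1)$ by~(\ref{F: hlambda(Delta)}); extending by continuity yields an isometry $\hlambda(\Delta_1) \to \hlambda(\Delta_2)$, whose range contains the analogous dense subspace of $\hlambda(\Delta_2)$ by the symmetric argument, so the operator is unitary. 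The main technical obstacle I expect is the first step: carefully justifying the pairing identity for $\pi\|\euE^\Delta_\lambda E_\Delta f\|^2$, which requires using compatibility of $(H,F)$ to take the limit $y \to 0^+$ only after the $F$-sandwich, together with the spectral-calculus commutation of $E_\Delta$ with resolvents of $H$; everything afterwards is bookkeeping with orthogonal spectral projections.
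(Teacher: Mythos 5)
Your proof is correct and follows essentially the same route as the paper's: both reduce the statement to showing that $\pi\norm{\euE^\Delta_\lambda E_\Delta f}^2_{\hlambda(\Delta)}$ equals $\lim_{y\to 0^+}\scal{E_\Delta f}{\Im R_{\lambda+iy}(E_\Delta H)E_\Delta f},$ an expression that is local at $\lambda$ and hence independent of the choice of $\Delta\ni\lambda.$ Your decomposition via $Q=E_{\Delta_1\cap\Delta_2}$ and $R_j=E_{\Delta_j}-Q$ is simply an explicit justification of the final independence step that the paper asserts in one line, and your polarization/density argument spells out the passage from norm preservation to unitarity that the paper leaves implicit.
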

\begin{proof} What we have to show is that for any $f \in \hilb_1(F)$
the norm of the vector $\euE^{\Delta}_\lambda E_{\Delta} f$ of the
Hilbert space $\hlambda(\Delta)$ does not depend on the choice of
a bounded open subset $\Delta \subset \Lambda(H,F)$ as long as it
contains~$\lambda.$ So, let~$\Delta$ be such a set and let
$(\beta_j^\Delta)$ be an $\ell_2$-sequence such that (see (\ref{F: elements of hilb1(Delta)TWO}))
$$
  E_{\Delta} f = \sum_{j=1}^\infty \beta^\Delta_j \kappa^\Delta_j \phi^\Delta_j.
$$
By definition of the scalar product in $\hlambda(\Delta) \subset \ell_2,$ we have
$$
  \norm{\euE^{\Delta}_\lambda E_{\Delta} f}^2_{\hlambda(\Delta)}
    = \sum_{i=1}^\infty \sum_{j=1}^\infty  \bar \beta^\Delta_i \beta^\Delta_j \kappa^\Delta_i \kappa^\Delta_j\scal{\euE_\lambda^\Delta \phi^\Delta_i}{\euE_\lambda^\Delta\phi^\Delta_j},
$$
where the double series converges absolutely (see the remark after
formula \cite[(3.1)]{Az3v6}). Hence, using \cite[(5.5)]{Az3v6} we obtain
\begin{equation} \label{F: terrible formula}
 \begin{split}
  \norm{\euE^{\Delta}_\lambda E_{\Delta} f}^2_{\hlambda(\Delta)}
   & = \frac 1 \pi \sum_{i=1}^\infty \sum_{j=1}^\infty  \bar \beta^\Delta_i \beta^\Delta_j \kappa^\Delta_i \kappa^\Delta_j \scal{\phi^\Delta_i}{\Im R_{\lambda+i0}(E^{H_0}_\Delta H) \phi^\Delta_j}_{1,-1}
  \\ & = \frac 1 \pi \sum_{i=1}^\infty \sum_{j=1}^\infty  \bar \beta^\Delta_i \beta^\Delta_j \kappa^\Delta_i \kappa^\Delta_j \lim_{y \to 0^+} \scal{\phi^\Delta_i}{\Im R_{\lambda+iy}(E^{H_0}_\Delta H) \phi^\Delta_j}
 \end{split}
\end{equation}
Since $\lambda \in \Lambda(H,F),$ the trace-class valued matrix
$$
  \phi^\Delta_{\lambda+iy} = \frac 1 \pi \brs{\kappa^\Delta_i \kappa^\Delta_j \scal{\phi^\Delta_i}{\Im R_{\lambda+iy}(E^{H_0}_\Delta H) \phi^\Delta_j}}_{i,j=1}^\infty
$$
continuously depends on $y$ up to $y=0$ (see part II of \S~\ref{SS: HS rigging}). So, the function $
  \scal{\beta}{\phi^\Delta_{\lambda+iy}\beta}
$ is also continuous up to $y=0.$ Therefore, we can interchange
the summations and the limit operation in~(\ref{F: terrible
formula}) to get
\begin{equation*}
 \begin{split}
    \norm{\euE^{\Delta}_\lambda E_{\Delta} f}^2_{\hlambda(\Delta)} & = \frac 1 \pi \lim_{y \to 0^+} \sum_{i=1}^\infty \sum_{j=1}^\infty
       \bar \beta^\Delta_i \beta^\Delta_j \kappa^\Delta_i \kappa^\Delta_j \scal{\phi^\Delta_i}{\Im R_{\lambda+iy}(E^{H_0}_\Delta H) \phi^\Delta_j}
  \\ & = \frac 1 \pi \lim_{y \to 0^+} \scal{E_\Delta f}{\Im R_{\lambda+iy}(E^{H_0}_\Delta H) E_\Delta f}.
 \end{split}
\end{equation*}
The last expression does not depend on~$\Delta$ as long as it
contains~$\lambda.$
\end{proof}

Definition (\ref{F: U(D2,D1) def of}) of operators $U_{\Delta_2,\Delta_1}(\lambda)$ implies that they satisfy the
gluing property~(\ref{F: gluing property for U}). Thus, to any
compatible pair $(H,F)$ and a number $\lambda \in \Lambda(H,F)$ we
can assign a sheaf of fiber Hilbert spaces
\begin{equation} \label{F: hlambda(H,F)}
  \hlambda(H,F) = \set{\hlambda(\Delta), \Delta \in \euB_\lambda},
\end{equation}
where the Hilbert spaces $\hlambda(\Delta)$ are defined by
(\ref{F: hlambda(Delta)}) and the gluing unitary isomorphisms
$U_{\Delta_2,\Delta_1}(\lambda)$ are defined by~(\ref{F: U(D2,D1) def of}).

\subsection{Sheaf $\euS(H,F)$}
By \cite[Corollary 3.1.8]{Az3v6}, for any bounded open set
$\Delta\subset \mbR$ the family of Hilbert spaces
$\set{\hlambda(\Delta), \lambda \in \Delta}$ is measurable. It
follows from the definition~(\ref{F: U(D2,D1) def of}) of the
gluing unitary operators $U_{\Delta_2,\Delta_1}(\lambda)$ that for
any sets $\Delta_1, \Delta_2$ from $\euB,$ the family of operators
$\set{U_{\Delta_2,\Delta_1}(\lambda), \lambda \in \Delta_1 \cap
\Delta_2}$ is also measurable. Thus, for any compatible pair
$(H,F)$ we have assigned a sheaf of Hilbert spaces
over~$\Lambda(H,F).$ We shall denote this sheaf of Hilbert spaces
by $\euS(H,F).$ We shall also use the notions and notation
associated with a sheaf of Hilbert spaces, such as $\euH(\Delta),$
etc, introduced in \S\,\ref{S: Sheaves of Hilbert spaces}.

\subsection{Evaluation operator $\euE_\lambda(H)$}
Let $\lambda \in \Lambda(H,F).$ For any $f \in \hilb_1(F)$ and for
any bounded open subset~$\Delta$ of $\Lambda(H,F),$ such that
$\lambda \in \Delta,$ we have a vector
\begin{equation} \label{F: f(D)(l) def of}
  f_\Delta(\lambda) := \euE^\Delta_\lambda E_\Delta f \in \hlambda(\Delta).
\end{equation}
It follows from~(\ref{F: U(D2,D1) def of}) that the family
$$
  f(\lambda) := \set{f_\Delta(\lambda) \colon \Delta \in \euB_\lambda}
$$
correctly defines an element of the sheaf of fiber Hilbert spaces (\ref{F: hlambda(H,F)}).
Further, by definition~(\ref{F: T(D2)=U(D2,D1)T(D1)}), the formula
\begin{equation} \label{F: euE(l)=set(f(D):D in euB(l))}
  \euE_\lambda(H) f = \set{f_\Delta(\lambda) \colon \Delta \in \euB_\lambda}
\end{equation}
combined with (\ref{F: U(D2,D1) def of}) correctly defines an operator
\begin{equation} \label{F: euE(l): hilb1(F) to hlambda}
  \euE_\lambda(H) \colon \hilb_1(F) \to \hlambda(H,F).
\end{equation}
We often omit the argument $H$ in $\euE_\lambda(H),$ if there is no danger of confusion.
According to (\ref{F: scal prod in sheaf}), for any $f,g \in \hilb_1(F)$ and any $\Delta \in \euB_\lambda$
\begin{equation} \label{F: (euEl(f),euEl(g))=(euEl(D)(f),euEl(D)(g))}
  \scal{\euE_\lambda f}{\euE_\lambda g}_{\hlambda(H,F)} = \scal{\euE_\lambda^\Delta E_\Delta f}{\euE_\lambda^\Delta E_\Delta g}_{\hlambda(\Delta)}.
\end{equation}
\begin{prop} \label{P: euE is HS} The operator~(\ref{F: euE(l): hilb1(F) to hlambda}),
defined by equality (\ref{F: euE(l)=set(f(D):D in euB(l))}), is Hilbert-Schmidt.
\end{prop}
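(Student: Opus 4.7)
The plan is to reduce the statement to the already established Hilbert-Schmidt property of the evaluation operator in the Hilbert-Schmidt rigging case (part IV of \S\ref{SS: HS rigging}, proved in \cite[\S 3.1]{Az3v6}). Fix an arbitrary $\Delta \in \euB_\lambda$. By the definition~(\ref{F: scal prod in sheaf}) of the scalar product on the sheaf $\hlambda(H,F)$, the ``evaluation-at-$\Delta$'' map
$$
  \Pi_\Delta \colon \hlambda(H,F) \ni f(\lambda) \longmapsto f_\Delta(\lambda) \in \hlambda(\Delta)
$$
is a unitary isomorphism; moreover, by (\ref{F: euE(l)=set(f(D):D in euB(l))}) together with (\ref{F: f(D)(l) def of}), we have the identity $\Pi_\Delta \circ \euE_\lambda(H) = \euE_\lambda^\Delta \circ E_\Delta$ as operators from $\hilb_1(F)$ to $\hlambda(\Delta)$. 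Since the Hilbert-Schmidt class is preserved under unitary conjugation, it suffices to show that $\euE_\lambda^\Delta \circ E_\Delta \colon \hilb_1(F) \to \hlambda(\Delta)$ is Hilbert-Schmidt.

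Next, I would invoke Lemma~\ref{L: E(Delta) is a contraction}, which asserts that $E_\Delta \colon \hilb_1(F) \to \hilb_1(\Delta)$ is a contraction, and hence a bounded operator. Since $\lambda \in \Lambda(H,F) \cap \Delta$, the identity~(\ref{F: Lambda(H,F)=bigcup...}) yields $\lambda \in \Lambda(HE_\Delta, F_\Delta)$. Therefore the Hilbert-Schmidt rigging theory of \S\ref{SS: HS rigging} applies to the pair $(HE_\Delta, F_\Delta)$ on the Hilbert space $E_\Delta \hilb$, and guarantees (part IV of \S\ref{SS: HS rigging}, \cite[\S 3.1]{Az3v6}) that $\euE_\lambda^\Delta \colon \hilb_1(\Delta) \to \hlambda(\Delta)$ is Hilbert-Schmidt.

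Since the composition of a Hilbert-Schmidt operator with a bounded operator is Hilbert-Schmidt, this yields the desired conclusion. The only point that requires any care is the verification that $\Pi_\Delta \circ \euE_\lambda(H) = \euE_\lambda^\Delta \circ E_\Delta$; this however is immediate from the definitions (\ref{F: f(D)(l) def of}) and (\ref{F: euE(l)=set(f(D):D in euB(l))}), so there is no genuine obstacle. The essential content of the proof is the observation that the sheaf identification at any single $\Delta \in \euB_\lambda$ converts the global evaluation operator $\euE_\lambda(H)$ into the already-known Hilbert-Schmidt operator $\euE_\lambda^\Delta$ precomposed with the contraction $E_\Delta$, and that the Hilbert-Schmidt norm obtained this way is independent of the choice of $\Delta$ (a fact parallel to the unitarity statement of Proposition~\ref{P: U(Delta1,Delta2) is unitary}).
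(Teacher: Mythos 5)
Your proposal is correct and is essentially the paper's own argument: both reduce the claim, via the identity $\scal{\euE_\lambda f}{\euE_\lambda g}_{\hlambda(H,F)} = \scal{\euE_\lambda^\Delta E_\Delta f}{\euE_\lambda^\Delta E_\Delta g}_{\hlambda(\Delta)}$ for a fixed $\Delta \in \euB_\lambda$, to the factorization $\euE_\lambda^\Delta \circ E_\Delta$, then invoke Lemma~\ref{L: E(Delta) is a contraction} for boundedness of $E_\Delta$ and the Hilbert--Schmidt property of $\euE_\lambda^\Delta$ from the Hilbert--Schmidt rigging theory. The paper phrases this as a computation of the Hilbert--Schmidt norm over an orthonormal basis rather than through the unitary identification $\Pi_\Delta$, but the content is identical.
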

\begin{proof}
Let $f_1,f_2,\ldots$ be an orthonormal basis of $\hilb_1(F).$ For
any $\Delta \in \euB_\lambda,$ we have
$$
  \norm{\euE_\lambda}_{\clL_2(\hilb_1(F),\hlambda)}^2 = \sum_{j=1}^\infty \norm{\euE_\lambda f_j}_{\hlambda}^2 = \sum_{j=1}^\infty \norm{\euE^\Delta_\lambda E_\Delta f_j}_{\hlambda(\Delta)}^2,
$$
where the second equality follows from~(\ref{F: (euEl(f),euEl(g))=(euEl(D)(f),euEl(D)(g))}). By Lemma~\ref{L: E(Delta) is a contraction}, the operator
$E_\Delta,$ considered as an operator $\hilb_1(F) \to
\hilb_1(\Delta),$ is bounded, and the operator
$$
  \euE^\Delta_\lambda \colon \hilb_1(\Delta) \to \hlambda(\Delta)
$$ is Hilbert-Schmidt (see part IV of \S\,\ref{SS: HS rigging} and/or \cite[\S 3.1]{Az3v6}). It follows that the composition $\euE^\Delta_\lambda E_\Delta$
is also Hilbert-Schmidt and therefore the sum above is finite. It
follows that $\euE_\lambda$ is also Hilbert-Schmidt.
\end{proof}
We say that two vectors $f$ and $g$ from $\hilb_1(F)$ are
\emph{equivalent at~$\lambda,$} if $E_\Delta f = E_\Delta g$ for
some $\Delta \in \euB_\lambda.$ Obviously, equivalence
at~$\lambda$ is an equivalence relation. A jet of vectors
at~$\lambda$ is an equivalence class with respect to this
equivalence relation and the set of all jets of vectors at~$\lambda$
is a linear space. As can be seen from the definition of the
operator $\euE_\lambda,$ it is in fact defined on the space of
jets of vectors at~$\lambda.$

\subsubsection{Operator $\euE$}
For any bounded open subset~$\Delta$ of
$\Lambda(H,F)$ we define the operator
\begin{equation} \label{F: euE(D)}
  \euE^\Delta \colon \hilb_1(\Delta) \to \euH(\Delta), \quad \euE^\Delta = \int_\Delta ^\oplus \euE^\Delta_\lambda \, d\lambda,
\end{equation}
where $\euE^\Delta_\lambda$ is defined by (\ref{F: euD(D)(l)f=...}).
\begin{lemma} \label{L: euE(d) diag-zes H E(D)} The operator $\euE^\Delta$ defined by (\ref{F: euE(D)}) and considered as an operator from $E_\Delta \hilb$ to $\euH(\Delta)$
is bounded and unitary. Moreover, for any $f_\Delta \in \hilb_1(\Delta),$ and for a.e. $\lambda \in \Lambda.$
$$
  [\euE^\Delta (Hf_\Delta)](\lambda) = \lambda\euE^\Delta_\lambda (f_\Delta).
$$
\end{lemma}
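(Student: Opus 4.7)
The plan is to reduce the statement to Theorem~\ref{T: euE}, the Hilbert-Schmidt rigging analogue from \cite{Az3v6}, applied to the self-adjoint operator $HE_\Delta$ on $E_\Delta\hilb$ equipped with the Hilbert-Schmidt rigging $F_\Delta = FE_\Delta$. Indeed, by Definition~\ref{D: compatible (H,F)} the operator $F_\Delta$ is Hilbert-Schmidt and it has trivial kernel on $E_\Delta\hilb$. For each $\lambda\in\Delta$, the fiber Hilbert space $\hlambda(\Delta)$ defined by (\ref{F: hlambda(Delta)}) together with the evaluation operator $\euE^\Delta_\lambda$ from (\ref{F: euD(D)(l)f=...})--(\ref{F: phi(Delta)(l)}) are, by construction, exactly the objects associated in parts~II and~IV of \S\,\ref{SS: HS rigging} to the Hilbert-Schmidt pair $(HE_\Delta,F_\Delta)$, and $\euH(\Delta)$ is the corresponding direct integral from part~V.

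The next step is to check that $HE_\Delta$, viewed as a self-adjoint operator on $E_\Delta\hilb$, is purely absolutely continuous, so that Theorem~\ref{T: euE} can be applied without having to discard any singular subspace. By the discussion following Definition~\ref{D: Lambda(H,F)} (``some (and thus for any)''), for every $\lambda\in\Lambda(H,F)$ and every bounded open $\Delta'\ni\lambda$ one has $\lambda\in\Lambda(HE_{\Delta'},F_{\Delta'})$; applying this with $\Delta'=\Delta$ and using $\Delta\subset\Lambda(H,F)$ gives $\Delta\subset\Lambda(HE_\Delta,F_\Delta)$. Since the spectral measure of $HE_\Delta$ on $E_\Delta\hilb$ is supported in $\overline\Delta$, and $\overline\Delta\setminus\Lambda(HE_\Delta,F_\Delta)$ has Lebesgue measure zero by Theorem~\ref{T: Abstract L.A.P.}, Theorem~\ref{T: H E(H) is a.c.} applied to $HE_\Delta$ with rigging $F_\Delta$ shows that the singular part of $HE_\Delta$ is trivial, so $(E_\Delta\hilb)^{(a)}(HE_\Delta)=E_\Delta\hilb$.

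With these identifications in place, Theorem~\ref{T: euE} applied to the pair $(HE_\Delta,F_\Delta)$ on the Hilbert space $E_\Delta\hilb$ immediately yields that $\euE^\Delta$ is bounded, vanishes on the trivial singular subspace, is isometric on $E_\Delta\hilb$, and has range equal to $\euH(\Delta)$; combining these four properties gives unitarity. The diagonalization identity $[\euE^\Delta(Hf_\Delta)](\lambda)=\lambda\euE^\Delta_\lambda(f_\Delta)$ is then a direct transcription of (\ref{F: [euE(Hf)(l)=...}) from Theorem~\ref{T: euE}, since $Hf_\Delta=(HE_\Delta)f_\Delta$ for $f_\Delta\in E_\Delta\hilb$ and the operator $HE_\Delta$ is bounded on $E_\Delta\hilb$ (its spectrum being contained in the bounded set $\overline\Delta$), so no domain issue arises.

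The only genuine obstacle is the absolute-continuity step: one must carefully reconcile the non-compact definition of $\Lambda(H,F)$ in Definition~\ref{D: Lambda(H,F)} with the Hilbert-Schmidt definition used in Theorem~\ref{T: H E(H) is a.c.} in order to conclude that \emph{all} of $E_\Delta\hilb$, rather than just the spectral subspace above $\Lambda(HE_\Delta,F_\Delta)$, is absolutely continuous. Once the inclusion $\Delta\subset\Lambda(HE_\Delta,F_\Delta)$ is secured, the rest of the proof is purely bookkeeping — quoting Theorem~\ref{T: euE} for the pair $(HE_\Delta,F_\Delta)$.
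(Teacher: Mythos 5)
Your proposal is correct and follows the paper's own route: both identify $\euE^\Delta$ with the evaluation operator of parts IV--VI of \S\,\ref{SS: HS rigging} for the Hilbert-Schmidt pair $(HE_\Delta,F_\Delta)$ on $E_\Delta\hilb$ and then quote Theorem~\ref{T: euE}. The one point where you go beyond the paper is in explicitly verifying, via $\Delta\subset\Lambda(HE_\Delta,F_\Delta)$ and Theorems~\ref{T: Abstract L.A.P.} and~\ref{T: H E(H) is a.c.}, that $HE_\Delta$ is purely absolutely continuous on $E_\Delta\hilb$ --- a step the paper leaves implicit but which is genuinely needed to upgrade ``isometric on the absolutely continuous subspace and onto'' to ``unitary''.
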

\begin{proof}
Comparing definitions (\ref{F: euD(D)(l)f=...}), (\ref{F: eta(Delta)(l)}) and (\ref{F: phi(Delta)(l)}) with definitions
(\ref{F: euE(f)=sum beta(j)eta(j)}), (\ref{F: eta(l+iy)}) and (\ref{F: phi(l+iy)}), we infer
that the operator $\euE^\Delta$ is the evaluation operator (see parts IV-VI of \S\ref{SS: HS rigging}) corresponding to self-adjoint
operator $HE_\Delta$ on the Hilbert space $E_\Delta\hilb$ with Hilbert-Schmidt rigging $F_\Delta=FE_\Delta.$
Therefore, the required assertion is a direct corollary of Theorem~\ref{T: euE}.
\end{proof}

Now we give definition of the evaluation operator
$$
  \euE \colon \hilb \to \euS(H,F).
$$
We shall do it in two different ways, mentioned in subsection~\ref{SSS: op-rs on sheaf euS}.
It is not difficult to see that the family of operators
$$
  \set{\euE_\lambda \in \clL_2(\hilb_1(F), \hlambda(H,F)) \colon \lambda \in \Lambda(H,F)},
$$
which are defined by~(\ref{F: euE(l): hilb1(F) to hlambda}), is measurable. Thus, we can define the operator
$$
  \euE = \int_\Lambda ^\oplus \euE_\lambda \, d\lambda,
$$
which acts from the dense subspace $\hilb_1(F)$ of $\hilb$ to the sheaf of Hilbert spaces $\euS(H,F).$
Further, comparing definition (\ref{F: euE(l)=set(f(D):D in euB(l))}) and (\ref{F: f(D)(l) def of}) of $\euE_\lambda$
and definition (\ref{F: euE(D)}) of $\euE^\Delta$ we infer that for any $f \in \hilb_1(F)$
\begin{equation} \label{F: (euE f)Delta=euE Delta E(Delta)f}
  (\euE f)_\Delta = \euE^\Delta E_\Delta f,
\end{equation}
where $E_\Delta$ is considered here as an operator acting from
$\hilb_1(F)$ onto $\hilb_1(\Delta)$ (see Lemma~\ref{L: E(Delta) is a contraction}). Since $\hilb_1(F)$
is naturally embedded into~$\hilb,$ we can consider $\euE$ as an
operator from~$\hilb$ to $\euS(H,F).$ The equality (\ref{F: (euE f)Delta=euE Delta E(Delta)f}) gives a second equivalent definition of $\euE.$
Finally, the collection of mappings
$$
  \set{\euE^\Delta_\lambda \colon \hilb_1(\Delta) \to \hlambda(\Delta) \ | \ \lambda \in \Lambda(H,F), \Delta \in \euB_\lambda},
$$
defined by (\ref{F: euD(D)(l)f=...}) and combined with gluing property (\ref{F: U(D2,D1) def of})  can also be considered as a definition of $\euE.$

\begin{thm} \label{T: op-r euE} The operator $\euE$ defines a bounded operator from~$\hilb$ to the sheaf of Hilbert spaces $\euS(H,F).$
The operator $\euE$ vanishes on singular (with respect to~$H$)
subspace of~$\hilb$ and it is a unitary isomorphism of the
absolutely continuous (with respect to~$H$) part of~$\hilb$ onto
the sheaf $\euS(H,F).$ This mapping diagonalizes the absolutely
continuous part of the operator~$H,$ that is, for any $f \in \dom(H)$ and for a.e. $\lambda \in \Lambda(H,F)$ the equality
$\euE_\lambda(Hf) = \lambda \euE_\lambda(f)$ holds.
\end{thm}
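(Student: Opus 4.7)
The plan is to reduce every assertion to the corresponding statement for the Hilbert--Schmidt rigged compression $(HE_\Delta,F_\Delta)$ on $E_\Delta\hilb$, i.e.\ to Lemma~\ref{L: euE(d) diag-zes H E(D)}, via the bridge identity~(\ref{F: (euE f)Delta=euE Delta E(Delta)f}) and an exhausting sequence $\Delta_1\subset\Delta_2\subset\cdots$ in $\euB$ with $\bigcup_n\Delta_n=\Lambda(H,F)$ (for instance $\Delta_n=(-n,n)\cap\Lambda(H,F)$).

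For boundedness, vanishing, and the isometry property, fix $f\in\hilb_1(F)$. By Lemma~\ref{L: euE(d) diag-zes H E(D)} each $\euE^{\Delta_n}\colon E_{\Delta_n}\hilb\to\euH(\Delta_n)$ is unitary, so~(\ref{F: (euE f)Delta=euE Delta E(Delta)f}) gives $\norm{(\euE f)_{\Delta_n}}_{\euH(\Delta_n)}=\norm{E_{\Delta_n}f}_\hilb$. Taking the supremum in~(\ref{F: S(H,F) norm}) yields $\norm{\euE f}_\euS=\sup_n\norm{E_{\Delta_n}f}_\hilb\leq\norm{f}_\hilb$, so $\euE$ is a contraction on the dense subspace $\hilb_1(F)$ and extends uniquely to a bounded operator on~$\hilb$. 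Since $\mbR\setminus\Lambda(H,F)$ has Lebesgue measure zero and is a core of the singular spectrum of $H$ by Proposition~\ref{P: Lambda is a core}, one has $E^H_{\Lambda(H,F)}=P^{(a)}(H)$, hence $E_{\Delta_n}\to P^{(a)}(H)$ strongly. Passing to the limit gives $\norm{\euE f}_\euS=\norm{P^{(a)}(H)f}_\hilb$ for every $f\in\hilb$; this single identity simultaneously shows that $\euE$ vanishes on $\hilb^{(s)}(H)$ and is isometric on $\hilb^{(a)}(H)$.

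For surjectivity, let $g\in\euS(H,F)$ and set $f_n:=(\euE^{\Delta_n})^{-1}g_{\Delta_n}\in E_{\Delta_n}\hilb$, which is well defined by Lemma~\ref{L: euE(d) diag-zes H E(D)}. The gluing property~(\ref{F: glue f(Delta)}) for the section $g$, together with the defining relation~(\ref{F: U(D2,D1) def of}) for $U_{\Delta_2,\Delta_1}(\lambda)$ integrated in $\lambda$, translates into the compatibility $E_{\Delta_m}f_n=f_m$ whenever $m\leq n$. Since $\norm{f_n}_\hilb=\norm{g_{\Delta_n}}_{\euH(\Delta_n)}$ is monotone and bounded above by $\norm{g}_\euS$, and since $f_n-f_m$ is orthogonal to $f_m$ (both lie in $E_{\Delta_n}\hilb$ and $E_{\Delta_m}(f_n-f_m)=0$), the sequence $(f_n)$ is Cauchy in $\hilb$. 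Its limit $f$ lies in $\hilb^{(a)}(H)$ because each $f_n$ does, and satisfies $E_{\Delta_n}f=f_n$, so that $(\euE f)_{\Delta_n}=\euE^{\Delta_n}f_n=g_{\Delta_n}$ for all $n$, whence $\euE f=g$. I expect the extraction of the compatibility $E_{\Delta_m}f_n=f_m$ from~(\ref{F: U(D2,D1) def of}) and~(\ref{F: glue f(Delta)}) to be the main obstacle of the whole argument, since it requires carefully unpacking how the gluing maps of the sheaf interact with the evaluation operators $\euE^\Delta$.

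For the diagonalization identity, take first $f\in\hilb_1(F)\cap\dom(H)$. Lemma~\ref{L: norm(f)=lim norm(E(Delta)f} gives $E_\Delta f\in\hilb_1(\Delta)$, and $f\in\dom(H)$ gives $E_\Delta(Hf)=HE_\Delta f$; hence Lemma~\ref{L: euE(d) diag-zes H E(D)} applied to the compression $HE_\Delta$ on $E_\Delta\hilb$ yields
$$
  (\euE(Hf))_\Delta(\lambda) = \euE^\Delta_\lambda(HE_\Delta f) = \lambda\,\euE^\Delta_\lambda(E_\Delta f) = \lambda\,(\euE f)_\Delta(\lambda)
$$
for almost every $\lambda\in\Delta$. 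Letting $\Delta$ range through $\euB_\lambda$ and invoking the gluing property of $\euS(H,F)$ promotes this to $\euE_\lambda(Hf)=\lambda\,\euE_\lambda(f)$ for a.e.\ $\lambda\in\Lambda(H,F)$. Extension to arbitrary $f\in\dom(H)$ follows from the boundedness of $\euE$ already established via a standard graph-norm density argument.
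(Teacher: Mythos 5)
Your proof is correct and follows essentially the same route as the paper's: everything is reduced to the Hilbert--Schmidt compressions $(HE_\Delta,F_\Delta)$ through the bridge identity $(\euE f)_\Delta=\euE^\Delta E_\Delta f$ and Lemma~\ref{L: euE(d) diag-zes H E(D)}, with the norm computation $\norm{\euE f}_{\euS}=\norm{P^{(a)}f}$, density of $\hilb_1(F)$, and surjectivity of the $\euE^\Delta$ (the paper is terser on surjectivity, which you spell out in more detail, and states the norm identity directly as $\sup_\Delta\norm{P^{(a)}E_\Delta f}$ rather than via $E_{\Lambda(H,F)}=P^{(a)}$, but these are the same computation). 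One small caveat: graph-norm density of $\hilb_1(F)\cap\dom(H)$ in $\dom(H)$ is not obvious and is not needed --- the cleaner extension is the per-$\Delta$ one you already set up, namely that $\euE^\Delta(HE_\Delta g)=\int_\Delta^\oplus\lambda\,\euE^\Delta_\lambda(g)\,d\lambda$ extends from $\hilb_1(\Delta)$ to all of $E_\Delta\hilb$ by continuity (both $HE_\Delta$ and $\euE^\Delta$ are bounded and $\lambda$ is bounded on $\Delta$), after which $E_\Delta Hf=HE_\Delta f$ for $f\in\dom(H)$ finishes the argument.
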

\begin{proof} 
For any $f \in \hilb_1(F),$ it follows from~(\ref{F:
S(H,F) norm}),~(\ref{F: (euE f)Delta=euE Delta E(Delta)f}) and Theorem~\ref{T: euE}, that
\begin{equation} \label{F: norm of euE f in S}
 \begin{split}
  \norm{\euE f}_{\euS(H,F)} & = \sup_{\Delta \in \euB} \norm{(\euE f)_\Delta}_{\euH(\Delta)} = \sup_{\Delta \in \euB} \norm{\euE^\Delta E_\Delta f}_{\euH(\Delta)}
    \\ & = \sup_{\Delta \in \euB} \norm{P^{(a)}E_\Delta f}_{E_\Delta \hilb} = \norm{P^{(a)}f}_{\hilb}.
 \end{split}
\end{equation}
Since $\hilb_1(F)$ is densely included in~$\hilb,$ it follows that $\euE$ can be continued
to a bounded operator on~$\hilb,$ which is a partial
isometry with initial space $P^{(a)}\hilb.$
By definition~(\ref{F: (euE f)Delta=euE Delta
E(Delta)f}) of $\euE,$ on the subspace $E_\Delta \hilb$ the
operator $\euE$ coincides with the unitary operator
$\euE^\Delta\colon E_\Delta \hilb \to \euH(\Delta).$ By Lemma~\ref{L: euE(d) diag-zes H E(D)}, the unitary operator $\euE^\Delta \colon E_\Delta \hilb \to \euH(\Delta)$
diagonalizes the self-adjoint operator $E_\Delta H,$ that is,
$\euE^\Delta(E_\Delta H f) = \int_\Delta^\oplus
\lambda\euE^\Delta(E_\Delta f)\,d\lambda.$ It follows from this
and~(\ref{F: (euE f)Delta=euE Delta E(Delta)f}) that for any $f
\in \hilb$
$$
  \euE(Hf) = \int_{\Lambda(H,F)}^\oplus \euE_\lambda (Hf) \,d\lambda = \int_{\Lambda(H,F)}^\oplus \lambda\euE_\lambda f \,d\lambda.
$$
Since the image of $\euE^\Delta$ coincides with $\euH(\Delta),$ it
also follows from the last equality that the operator $\euE \colon
\hilb \to \euS(H,F)$ is onto.
\end{proof}
\begin{cor} \label{C: euE is diagonalizing}
  If~$h$ is a bounded measurable function on~$\mbR$ and $f \in \hilb,$ then for a.e. $\lambda \in \Lambda(H,F)$ we have $\euE_\lambda(h(H)f) = h(\lambda)\euE_\lambda(f).$
\end{cor}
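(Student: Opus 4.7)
The plan is to reduce the corollary to Theorem~\ref{T: op-r euE} via the Borel functional calculus, using the fact that $\euE$ realizes a unitary equivalence between $H|_{\hilb^{(a)}}$ and the operator of multiplication by the independent variable on $\euS(H,F)$.

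First I would dispose of the singular part. Write $f = f^{(a)} + f^{(s)}$ with $f^{(a)} \in \hilb^{(a)}(H)$ and $f^{(s)} \in \hilb^{(s)}(H)$. Since $h(H)$ commutes with every spectral projection of $H$ by the Borel functional calculus, it leaves both $\hilb^{(a)}(H)$ and $\hilb^{(s)}(H)$ invariant. By Theorem~\ref{T: op-r euE}, $\euE$ vanishes on $\hilb^{(s)}(H)$, so $\euE_\lambda(f^{(s)}) = 0$ and $\euE_\lambda(h(H)f^{(s)}) = 0$ for a.e.\ $\lambda$. Thus it suffices to prove the claim for $f \in \hilb^{(a)}(H)$.

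Next I would interpret Theorem~\ref{T: op-r euE} abstractly: the map $\euE$ restricted to $\hilb^{(a)}(H)$ is a unitary $U \colon \hilb^{(a)}(H) \to \euS(H,F)$, and the diagonalization identity $\euE_\lambda(Hf) = \lambda \euE_\lambda(f)$ for $f \in \dom(H)$ exhibits $U$ as an intertwiner of $H|_{\hilb^{(a)}}$ with the operator $M$ of multiplication by $\lambda$ in the direct integral structure carried by $\euS(H,F)$; that is, $U H U^* = M$ on the appropriate domain. By the uniqueness of the Borel functional calculus, applied on the Hilbert space $\euS(H,F)$ to the self-adjoint operator $M$, one has $U h(H) U^* = h(M)$, and $h(M)$ is simply the operator of multiplication by the scalar function $\lambda \mapsto h(\lambda)$. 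Reading the fiber components of $U h(H) U^* f = h(M) U f$ gives $\euE_\lambda(h(H)f) = h(\lambda) \euE_\lambda(f)$ for a.e.\ $\lambda \in \Lambda(H,F)$.

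The only step that requires care is the invocation of the functional calculus on the sheaf-valued side. To make this rigorous without developing a separate spectral theory for $\euS(H,F)$, one can alternatively proceed by stages: for $h(x) = x^n$ an induction on $n$ based on Theorem~\ref{T: op-r euE} (using that $\euE$ intertwines $H$ with multiplication by $\lambda$ and that $\dom(H^n)$ is dense) yields the identity for polynomials on a dense set of $f$; it then extends to all $f \in \hilb^{(a)}(H)$ by the boundedness of both $h(H)$ on $\hilb$ and of multiplication by $h(\lambda)$ on $\euS(H,F)$. Continuous bounded $h$ follow from Stone--Weierstrass approximation together with the dominated convergence theorem applied fiberwise on $\euS(H,F)$, and bounded Borel $h$ follow by the standard monotone class / bounded convergence argument, using that convergence in $\euS(H,F)$ implies convergence of a subsequence for $\lambda$ a.e. This technical bookkeeping is the main (though routine) obstacle; the conceptual point is that the corollary is simply the statement that the unitary diagonalization produced by Theorem~\ref{T: op-r euE} intertwines functional calculus with pointwise multiplication.
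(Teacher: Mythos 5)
Your proposal is correct and takes essentially the same route the paper intends: the corollary is stated without proof as an immediate consequence of Theorem~\ref{T: op-r euE}, and the standard way to make that precise is exactly your argument --- split off the singular part (on which $\euE$ vanishes), note that $\euE$ restricted to $\hilb^{(a)}(H)$ is a unitary intertwining the self-adjoint operator $H|_{\hilb^{(a)}}$ with the self-adjoint multiplication operator on the direct integral (equality of the two, not just inclusion, following since a symmetric extension of a self-adjoint operator is an equality), and then transfer the bounded Borel functional calculus through this unitary. The alternative polynomial/Stone--Weierstrass/monotone-class route you sketch is unnecessary but harmless.
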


\subsection{Green operator $R_{\lambda+i0}(H)$}

If a self-adjoint operator $H$ is compatible with a rigging~$F,$ then by Lemma~\ref{L: FRF* is compact} the operator $FR_z(H)F^*$ is compact.
By definition, the norm limit
$
  FR_{\lambda+i0}(H)F^*
$
exists for all~$\lambda$ from the set of full Lebesgue
measure~$\Lambda(H,F).$ Thus, $FR_{\lambda+i0}(H)F^*$ is a compact
operator on the Hilbert space~$\clK$ for all $\lambda \in
\Lambda(H,F).$ Since the operator~$F$ (respectively, $F^*$) can be
seen as a natural isomorphism of Hilbert spaces
$\hilb_{-1}(F)$ and~$\clK$ (respectively,~$\clK$ and $\hilb_1(F)$), the operator
$FR_{\lambda+i0}(H)F^*$ can be interpreted as a compact limit
\begin{equation} \label{F: R(l+i0): hilb(1)to hilb(-1)}
  R_{\lambda+i0}(H) \colon \hilb_1 \to \hilb_{-1}
\end{equation}
of the operator $R_{\lambda+i0}(H) \colon \hilb_1 \to \hilb_{-1}$ in the norm of $\clB(\hilb_1,\hilb_{-1}).$
In this subsection we study properties of the operator (\ref{F: R(l+i0): hilb(1)to hilb(-1)}).

We denote by $R^\Delta_{\lambda+iy}(H)$ the operator $E_\Delta
R_{\lambda+iy}(H).$
The operator $R^\Delta_{\lambda+iy}(H)$ can be considered as an
operator on the Hilbert space $E_\Delta$ with Hilbert-Schmidt
rigging~$F_\Delta.$ So, there exists the limit operator
$R^\Delta_{\lambda+i0}(H)$ which acts from $\hilb_1(\Delta)$ to
$\hilb_{-1}(\Delta),$ provided that $\lambda \in \Lambda(H,F) \cap
\Delta.$
The operator $R^\Delta_{\lambda+i0}(H)$ can also be interpreted as
an operator from $\hilb_1(F)$ to $\hilb_{-1}(F)$
as a composition
$$
  \hilb_{-1}(F) \hookleftarrow \hilb_{-1}(\Delta) \stackrel{R^\Delta_{\lambda+i0}(H)}{\leftarrow\!\!\!-\!\!\!-\!\!\!-\!\!\!-\!\!\!-\!\!\!-}
  \hilb_{1}(\Delta) \stackrel{E_\Delta}{\leftarrow\!\!\!-\!\!\!-} \hilb_{1}(F).
$$
\noindent
Further, the operator $R^\Delta_{\lambda+i0}(H)$ can also be naturally defined as the norm
limit of the operator $R^\Delta_{\lambda+iy}(H) \colon \hilb_1(F)
\to \hilb_{-1}(F).$ In the following lemma we show that these two interpretations are identical.
\begin{lemma} \label{L: lim R(Delta) exists} Let $\lambda \in \Lambda(H,F).$ For any $\Delta \in \euB_\lambda$ the limit $R^\Delta_{\lambda+i0}(H)$
of the cut off resolvent $R^\Delta_{\lambda+iy}(H)$ as $y \to 0$
exists in the norm topology of $\clB(\hilb_{1},\hilb_{-1}).$
Further, 
the norm limit
$R^\Delta_{\lambda+i0}(H) \in \clB(\hilb_{1},\hilb_{-1})$ above
coincides with the composition of the contraction $E_\Delta \colon
\hilb_1(F) \to \hilb_1(\Delta),$ the norm limit
$R^\Delta_{\lambda+i0}$ of $R^\Delta_{\lambda+iy}$ as an operator
from $\hilb_1(\Delta)$ to $\hilb_{-1}(\Delta)$ and the inclusion
operator $\hilb_{-1}(\Delta) \hookrightarrow \hilb_{-1}(F).$
\end{lemma}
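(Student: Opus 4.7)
The plan is to reduce to the two pieces of the spectral decomposition \(1 = E_\Delta + E_{\mbR\setminus\Delta}\) and to exploit that~\(\lambda\) lies in the \emph{open} set~\(\Delta\). First I would write, as an identity of families in \(\clB(\hilb_1(F),\hilb_{-1}(F))\),
\[
  R^\Delta_{\lambda+iy}(H) = R_{\lambda+iy}(H) - E_{\mbR\setminus\Delta}\, R_{\lambda+iy}(H).
\]
By Definition~\ref{D: Lambda(H,F)}, the first term has a norm limit in \(\clB(\hilb_1(F),\hilb_{-1}(F))\) as \(y \to 0^+\). For the second term, set \(\eps = \mathrm{dist}(\lambda, \mbR \setminus \Delta)\); since \(\Delta\) is open, \(\eps > 0\), so the Borel function \(t \mapsto \chi_{\mbR\setminus\Delta}(t)/(t-\lambda-iy)\) is uniformly bounded by \(1/\eps\) and, as \(y \to 0^+\), converges uniformly on \(\mbR\) to its value at \(y=0\). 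By the spectral theorem, \(E_{\mbR\setminus\Delta} R_{\lambda+iy}(H)\) then converges in operator norm on~\(\hilb\), hence also in \(\clB(\hilb_1(F),\hilb_{-1}(F))\) via the bounded inclusions~(\ref{F: hilb(1) hilb hilb(-1)}). Subtracting yields the existence asserted in the first part.

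For the identification, I would show that for every \(y > 0\) and every \(f \in \hilb_1(F)\),
\[
  R^\Delta_{\lambda+iy}(H) f = R_{\lambda+iy}(HE_\Delta)(E_\Delta f),
\]
which follows immediately from the fact that \(E_\Delta\) and \(E_{\mbR\setminus\Delta}\) are orthogonal projections both commuting with \(H\), so \(E_\Delta R_{\lambda+iy}(H) E_{\mbR\setminus\Delta} = 0\). This exhibits \(R^\Delta_{\lambda+iy}(H)\) as the composition of three bounded maps: the contraction \(E_\Delta\colon \hilb_1(F) \to \hilb_1(\Delta)\) from Lemma~\ref{L: E(Delta) is a contraction}, the operator \(R_{\lambda+iy}(HE_\Delta) \in \clB(\hilb_1(\Delta),\hilb_{-1}(\Delta))\), and the natural inclusion \(\hilb_{-1}(\Delta) \hookrightarrow \hilb_{-1}(F)\) from Proposition~\ref{P: H(-1)(Delta) incl H(-1)}. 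By identity~(\ref{F: Lambda(H,F)=bigcup...}) we have \(\lambda \in \Lambda(HE_\Delta, F_\Delta)\), so the Hilbert--Schmidt theory recalled in part~I of \S\ref{SS: HS rigging} guarantees a norm limit of the middle factor in \(\clB(\hilb_1(\Delta), \hilb_{-1}(\Delta))\). Composition with the two bounded maps preserves norm convergence, and by uniqueness of the limit in \(\clB(\hilb_1(F),\hilb_{-1}(F))\), this composition must coincide with the limit obtained in the first part.

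The principal obstacle is purely one of bookkeeping: the symbol \(R^\Delta_{\lambda+iy}(H)\) is interpreted simultaneously as a bounded operator on \(\hilb\), as an element of \(\clB(\hilb_1(\Delta),\hilb_{-1}(\Delta))\), and as an element of \(\clB(\hilb_1(F),\hilb_{-1}(F))\), and the lemma is precisely the statement that the latter two interpretations are compatible under the transition maps of Lemma~\ref{L: E(Delta) is a contraction} and Proposition~\ref{P: H(-1)(Delta) incl H(-1)}. Once the elementary identity \(E_\Delta R_{\lambda+iy}(H) E_{\mbR\setminus\Delta} = 0\) is recorded, the compatibility is trivial and all of the analytic content is localised into the already-established Hilbert--Schmidt theory applied to the compatible pair \((HE_\Delta, F_\Delta)\).
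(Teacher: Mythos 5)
Your proof is correct and takes essentially the same route as the paper: the existence part rests on the same splitting $R_{\lambda+iy}(H)=R^\Delta_{\lambda+iy}(H)+R^{\mbR\setminus\Delta}_{\lambda+iy}(H)$ with the second piece controlled in $\clB(\hilb)$ by the spectral theorem, and the identification is the paper's chain of pairing identities $\scal{f}{R^\Delta_{\lambda+iy}(H)g}_{1,-1}=\scal{E_\Delta f}{R^\Delta_{\lambda+iy}(H)E_\Delta g}_{\hilb_1(\Delta),\hilb_{-1}(\Delta)}$ rewritten at the operator level as a composition through the contraction of Lemma~\ref{L: E(Delta) is a contraction} and the inclusion of Proposition~\ref{P: H(-1)(Delta) incl H(-1)}. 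The only presentational difference is that the paper passes to the limit in the sesquilinear form while you pass to the limit in the factored operator; the content is the same.
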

\begin{proof}
For any $f,g \in \hilb_1(F)$ we have
\begin{equation} \label{F: scal in H(1,-1)(F)=scal in H(1,-1)(Delta)}
  \begin{split}
    \scal{f}{R^\Delta_{\lambda+iy}(H)g}_{\hilb_1(F),\hilb_{-1}(F)} & = \scal{f}{R^\Delta_{\lambda+iy}(H)g}_{\hilb}
    \\ & = \scal{E_\Delta f}{R^\Delta_{\lambda+iy}(H)E_\Delta g}_{E_\Delta \hilb}
    \\ & = \scal{E_\Delta f}{R^\Delta_{\lambda+iy}(H) E_\Delta g}_{\hilb_1(\Delta),\hilb_{-1}(\Delta)},
  \end{split}
\end{equation}
where according to Lemma~\ref{L: norm(f)=lim norm(E(Delta)f} in the last equality we can consider vectors $E_\Delta f$ and
$E_\Delta g$ as elements of the Hilbert space $\hilb_1(\Delta).$
The right hand side of (\ref{F: scal in H(1,-1)(F)=scal in H(1,-1)(Delta)}) has a limit as $y\to 0^+,$ since according to (\ref{F: Lambda(H,F)=bigcup...})
and (\ref{F: obvious inclusion}) we have $\lambda \in \Lambda(HE_\Delta, F_\Delta).$ The left hand side of (\ref{F: scal in H(1,-1)(F)=scal in H(1,-1)(Delta)}) has a limit,
since for $\lambda \in \Lambda(H,F)$ the limit $\lim_{y \to 0} R_{\lambda+iy}(H)$ exists in the norm of $\clB(\hilb_1,\hilb_{-1})$ and the limit
$\lim_{y \to 0} R^{\mbR\setminus \Delta}_{\lambda+iy}(H)$ exists even in the norm of $\clB(\hilb).$
Taking these limits we obtain
$$
  \scal{f}{R^\Delta_{\lambda+i0}(H)g}_{\hilb_1(F),\hilb_{-1}(F)} = \scal{E_\Delta f}{R^\Delta_{\lambda+i0}(H) E_\Delta g}_{\hilb_1(\Delta),\hilb_{-1}(\Delta)}.
$$
This equality completes the proof.
\end{proof}

Next we show that the resolvent $R_{\lambda+i0}(H)$ can be treated as the limit of
the operator $R^\Delta_{\lambda+i0}(H)$ as $\Delta \to \mbR.$
\begin{prop} \label{P: Delta to R}
If $H$ is a self-adjoint operator on a Hilbert space $\hilb$ compatible with a rigging $F$
and if $\lambda \in \Lambda(H,F),$ then there holds the equality
$$R_{\lambda+i0}(H) = \lim\limits_{\Delta \nearrow \mbR} R^\Delta_{\lambda+i0}(H),$$
where the limit is taken in norm topology of $\clB(\hilb_1,\hilb_{-1}).$
\end{prop}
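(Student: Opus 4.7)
The plan is to reduce everything to the operator identity
$$
  R_{\lambda+iy}(H) = E_\Delta R_{\lambda+iy}(H) + E_{\mbR\setminus\Delta}R_{\lambda+iy}(H),
$$
valid as an equality of bounded operators on~$\hilb,$ and to control the second summand as $\Delta \nearrow \mbR.$ Conjugating by~$F$ and~$F^*$ and letting $y \to 0^+,$ the term $F R_{\lambda+iy}(H) F^*$ has a limit in the norm of $\clB(\clK)$ by the definition of $\Lambda(H,F),$ while $F E_\Delta R_{\lambda+iy}(H) F^* = F R^\Delta_{\lambda+iy}(H) F^*$ has such a limit by Lemma~\ref{L: lim R(Delta) exists}. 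Hence $F E_{\mbR\setminus\Delta} R_{\lambda+iy}(H) F^*$ converges in norm, and its limit represents $F\bigl(R_{\lambda+i0}(H) - R^\Delta_{\lambda+i0}(H)\bigr)F^*$ in $\clB(\clK).$

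The central observation is that for $\lambda$ in the interior of~$\Delta,$ the cut-off $E_{\mbR\setminus\Delta}R_{\lambda+iy}(H)$ is, by the functional calculus, the bounded operator $f_y(H)$ associated with $f_y(\mu)=\chi_{\mbR\setminus\Delta}(\mu)(\mu-\lambda-iy)^{-1}.$ Because the support of $f_y$ lies at distance $d_\Delta := \operatorname{dist}(\lambda,\mbR\setminus\Delta)>0$ from~$\lambda,$ the family $(f_y)_{y>0}$ converges uniformly on $\mbR$ to $f_0$ (with error dominated by $y/d_\Delta^2$), so $E_{\mbR\setminus\Delta}R_{\lambda+iy}(H)$ converges in the norm of $\clB(\hilb)$ to the honest bounded operator $E_{\mbR\setminus\Delta}R_{\lambda}(H),$ whose operator norm is at most $1/d_\Delta.$ Uniqueness of norm limits in $\clB(\clK)$ then yields the identification
$$
  R_{\lambda+i0}(H) - R^\Delta_{\lambda+i0}(H) = E_{\mbR\setminus\Delta}R_{\lambda}(H)
$$
as elements of $\clB(\hilb_1,\hilb_{-1}),$ where the right hand side is interpreted via the natural inclusions $\hilb_1\hookrightarrow\hilb\hookrightarrow\hilb_{-1}.$

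Finally, the elementary estimate $\norm{T}_{\clB(\hilb_1,\hilb_{-1})}\leq\norm{F}^2\norm{T}_{\clB(\hilb)}$ (a direct consequence of the definitions of the scales $\hilb_{\pm 1}$) yields
$$
  \norm{R_{\lambda+i0}(H)-R^\Delta_{\lambda+i0}(H)}_{\clB(\hilb_1,\hilb_{-1})} \leq \frac{\norm{F}^2}{d_\Delta},
$$
and $d_\Delta\to\infty$ as $\Delta\nearrow\mbR,$ which completes the proof. There is no substantial obstacle here; the only point demanding a bit of care is that the difference $R_{\lambda+i0}(H)-R^\Delta_{\lambda+i0}(H),$ a priori merely an element of $\clB(\hilb_1,\hilb_{-1}),$ must be identified with the concrete bounded operator $E_{\mbR\setminus\Delta}R_\lambda(H)\in\clB(\hilb),$ and this identification is forced by the uniqueness of norm limits in the common ambient space $\clB(\clK).$
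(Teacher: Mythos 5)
Your proof is correct and follows essentially the same route the paper intends (the paper omits the proof, referring to the argument of Lemma~\ref{L: FRF* is compact}): split off the tail $E_{\mbR\setminus\Delta}R_{\lambda+iy}(H)$, observe via the functional calculus that it converges in $\clB(\hilb)$ to a bounded operator of norm at most $1/\operatorname{dist}(\lambda,\mbR\setminus\Delta)$, and pass to $\clB(\hilb_1,\hilb_{-1})$ using $\norm{\cdot}_{\clB(\hilb_1,\hilb_{-1})}\le\norm{F}^2\norm{\cdot}_{\clB(\hilb)}$. The extra care you take in identifying the limit of the difference inside the common ambient space $\clB(\clK)$ is a welcome elaboration of what the paper leaves implicit.
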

\noindent Proof of this proposition is the same as that of Lemma~\ref{L: FRF* is compact} and therefore is omitted.

The operator $\Im R_{\lambda+i0}(H) \colon \hilb_1(F) \to
\hilb_{-1}(F)$ is compact as a difference of two compact
operators. We now study some additional properties of this operator.

\begin{lemma} \label{L: Im R=Im R(Delta)}
If $H$ is a self-adjoint operator on a rigged Hilbert space $(\hilb,F)$ which is compatible with a rigging~$F,$
if $\lambda \in \Lambda(H,F)$ and if $\Delta \in \euB_\lambda,$ then for any $f,g \in \hilb_1(F)$
\begin{equation} \label{F: (f,Im Rg)=(f,Im R(D)g)}
  \scal{f}{\Im R_{\lambda+i0}(H) g}_{1,-1} = \scal{f}{\Im R^\Delta_{\lambda+i0}(H) g}_{1,-1}.
\end{equation}
\end{lemma}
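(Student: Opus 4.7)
The plan is to pass to the pre-limit at $y > 0$, split the full resolvent into a $\Delta$-part and a complementary part, and show that the complementary part vanishes in the limit because $\lambda$ lies in the interior of $\Delta$.

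First I would observe that since $E_\Delta$ commutes with the resolvent $R_{\lambda \pm iy}(H)$, the paper's definition of the imaginary part yields
$$
  \Im R^\Delta_{\lambda+iy}(H) = \tfrac{1}{2i}\brs{E_\Delta R_{\lambda+iy}(H) - E_\Delta R_{\lambda-iy}(H)} = E_\Delta\, \Im R_{\lambda+iy}(H)
$$
for every $y > 0$. Consequently, writing $1 = E_\Delta + E_{\mbR \setminus \Delta}$, we obtain the decomposition
$$
  \Im R_{\lambda+iy}(H) = \Im R^\Delta_{\lambda+iy}(H) + E_{\mbR \setminus \Delta}\,\Im R_{\lambda+iy}(H).
$$
Both summands are bounded operators on $\hilb$, and for $f,g \in \hilb_1(F) \subset \hilb$ the ordinary inner product coincides with the pairing $\scal{\cdot}{\cdot}_{1,-1}$.

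Next I would estimate the complementary term using the spectral theorem: by the Poisson representation,
$$
  E_{\mbR \setminus \Delta}\,\Im R_{\lambda+iy}(H) = \int_{\mbR \setminus \Delta} \frac{y}{(t-\lambda)^2 + y^2}\, dE^H(t).
$$
Since $\Delta$ is open and contains $\lambda$, we have $d := \operatorname{dist}(\lambda, \mbR \setminus \Delta) > 0$, and hence
$$
  \norm{E_{\mbR \setminus \Delta}\,\Im R_{\lambda+iy}(H)}_{\clB(\hilb)} \leq \frac{y}{d^2 + y^2} \xrightarrow[y \to 0^+]{} 0.
$$
In particular $\scal{f}{E_{\mbR \setminus \Delta}\,\Im R_{\lambda+iy}(H) g}_\hilb \to 0$ as $y \to 0^+$.

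Finally I would take the limit $y \to 0^+$ in the displayed decomposition, paired against $f \in \hilb_1(F)$. The limit of the left-hand side exists in the $(1,-1)$ pairing and equals $\scal{f}{\Im R_{\lambda+i0}(H)g}_{1,-1}$ because $\lambda \in \Lambda(H,F)$; the first term on the right converges to $\scal{f}{\Im R^\Delta_{\lambda+i0}(H)g}_{1,-1}$ by Lemma~\ref{L: lim R(Delta) exists} (whose existence in $\clB(\hilb_1,\hilb_{-1})$-norm follows from $\lambda \in \Lambda(HE_\Delta, F_\Delta)$); the second term vanishes by the estimate above. This yields the claimed identity~(\ref{F: (f,Im Rg)=(f,Im R(D)g)}). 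There is no real obstacle here — the only point requiring care is to keep clear which convergences happen in $\clB(\hilb)$ versus in $\clB(\hilb_1,\hilb_{-1})$, and to use $\hilb_1(F) \subset \hilb$ so that the two pairings agree at finite $y$.
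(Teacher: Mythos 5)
Your argument is correct, and it is organized differently from the paper's. The paper's proof is a two-line reduction: it first observes (via the identification of the $(1,-1)$ pairing with the ordinary inner product in the first equality of~(\ref{F: scal in H(1,-1)(F)=scal in H(1,-1)(Delta)})) that the right-hand side of~(\ref{F: (f,Im Rg)=(f,Im R(D)g)}) is independent of the choice of $\Delta \in \euB_\lambda,$ and then invokes Proposition~\ref{P: Delta to R} to identify that constant value with the left-hand side by letting $\Delta \nearrow \mbR.$ You instead stay at finite $y>0,$ split $\Im R_{\lambda+iy}(H)$ into $\Im R^\Delta_{\lambda+iy}(H)$ plus the complementary piece $E_{\mbR\setminus\Delta}\Im R_{\lambda+iy}(H),$ and kill the latter with the explicit Poisson-kernel bound $y/(d^2+y^2).$ This buys two things: you do not need the exhaustion result of Proposition~\ref{P: Delta to R} at all, and you make fully explicit the spectral estimate that the paper leaves implicit (the same estimate is in fact what underlies the paper's claim of $\Delta$-independence, via the remark after Definition~\ref{D: Lambda(H,F)} that $E_{\Delta\setminus\Delta_2}R_{\lambda+i0}(H)$ exists and is self-adjoint, hence has vanishing imaginary part). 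The remaining ingredients you use --- existence of the limit of the left-hand side from condition (i) of Definition~\ref{D: Lambda(H,F)}, existence of the limit of the first right-hand term from Lemma~\ref{L: lim R(Delta) exists}, and agreement of the $(1,-1)$ pairing with the $\hilb$ inner product at finite $y$ --- are all correctly identified, so the proof is complete.
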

\begin{proof} It follows from the first equality of~(\ref{F: scal in H(1,-1)(F)=scal in H(1,-1)(Delta)}) that
the right hand side of (\ref{F: (f,Im Rg)=(f,Im R(D)g)}) does not depend on
$\Delta \in \euB_\lambda.$ This observation, combined with
Proposition~\ref{P: Delta to R}, completes the proof.
\end{proof}

\begin{prop} \label{P: delta(H)=euE*euE}
If $H$ is a self-adjoint operator on a Hilbert space $\hilb$ and is compatible with a rigging $F$
and if $\lambda \in \Lambda(H,F),$ then
$$
  \frac 1\pi \Im R_{\lambda+i0}(H) = \euE^\diamondsuit_\lambda(H)\euE_\lambda(H)
$$
as the equality in $\clB(\hilb_1(F),\hilb_{-1}(F)).$ In particular, the operator $\Im R_{\lambda+i0}(H)$ is trace
class.
\end{prop}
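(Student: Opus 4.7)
The plan is to verify the operator identity by testing against vectors in $\hilb_1(F)$, then reduce the statement to the already-established Hilbert--Schmidt rigging case of \cite{Az3v6} on the cut-off Hilbert space $E_\Delta \hilb$. More precisely, I would pick arbitrary $f,g \in \hilb_1(F)$, an arbitrary $\Delta \in \euB_\lambda$, and show both sides of the asserted identity coincide when paired as $\scal{f}{\cdot g}_{1,-1}$.

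First I would compute the right-hand side. By the defining property of $\euE_\lambda^\diamondsuit$ (formula (\ref{F: def of diamond})),
$$
  \scal{f}{\euE_\lambda^\diamondsuit \euE_\lambda g}_{1,-1} = \scal{\euE_\lambda f}{\euE_\lambda g}_{\hlambda(H,F)},
$$
and by~(\ref{F: (euEl(f),euEl(g))=(euEl(D)(f),euEl(D)(g))}) this equals $\scal{\euE_\lambda^\Delta E_\Delta f}{\euE_\lambda^\Delta E_\Delta g}_{\hlambda(\Delta)}$. Next I would compute the left-hand side: by Lemma~\ref{L: Im R=Im R(Delta)} and~(\ref{F: scal in H(1,-1)(F)=scal in H(1,-1)(Delta)}),
$$
  \tfrac{1}{\pi}\scal{f}{\Im R_{\lambda+i0}(H) g}_{1,-1} = \tfrac{1}{\pi}\scal{E_\Delta f}{\Im R^\Delta_{\lambda+i0}(H)\, E_\Delta g}_{\hilb_1(\Delta),\hilb_{-1}(\Delta)}.
$$
Thus the whole identity reduces to the corresponding equality for the compatible pair $(HE_\Delta, F_\Delta)$ on the Hilbert space $E_\Delta \hilb$, where $F_\Delta$ is genuinely Hilbert--Schmidt. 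But in that setting the identity is already packaged into the computation leading up to~(\ref{F: terrible formula}) in the proof of Proposition~\ref{P: U(Delta1,Delta2) is unitary}: the entries of $\phi^\Delta(\lambda)$ in~(\ref{F: phi(Delta)(l)}) are precisely $\frac{1}{\pi}\kappa_j^\Delta\kappa_k^\Delta\scal{\phi_j^\Delta}{\Im R_{\lambda+i0}^\Delta(H)\phi_k^\Delta}_{1,-1}$, and expanding $E_\Delta f$ and $E_\Delta g$ in the basis $(\kappa_j^\Delta \phi_j^\Delta)$ via~(\ref{F: elements of hilb1(Delta)TWO}) and then using $\eta^\Delta(\lambda)^*\eta^\Delta(\lambda) = \phi^\Delta(\lambda)$ gives exactly the required equality by polarization of~(\ref{F: terrible formula}).

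Finally, the trace-class assertion follows quickly once the factorization is established: by Proposition~\ref{P: euE is HS} the operator $\euE_\lambda(H) \colon \hilb_1(F) \to \hlambda(H,F)$ is Hilbert--Schmidt, and its diamond conjugate $\euE_\lambda^\diamondsuit(H) \colon \hlambda(H,F) \to \hilb_{-1}(F)$ is Hilbert--Schmidt as well (the diamond differs from the usual adjoint only by the isomorphism $|F|^{-2}\colon \hilb_1(F)\to\hilb_{-1}(F)$, which preserves Schatten norms). The composition of two Hilbert--Schmidt operators is trace class, whence $\Im R_{\lambda+i0}(H) = \pi\,\euE_\lambda^\diamondsuit \euE_\lambda$ is trace class as an operator $\hilb_1(F)\to\hilb_{-1}(F)$. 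I do not anticipate a genuine obstacle here; the only point requiring care is to make sure that the $\Delta$-dependent reductions (Lemma~\ref{L: Im R=Im R(Delta)} and equation (\ref{F: (euEl(f),euEl(g))=(euEl(D)(f),euEl(D)(g))})) are invoked for the \emph{same} $\Delta \in \euB_\lambda$, so that the identity in the cut-off space can be transferred back verbatim to an identity in $\clB(\hilb_1(F),\hilb_{-1}(F))$.
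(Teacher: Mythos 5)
Your proposal is correct and follows essentially the same route as the paper: reduce both sides to the cut-off pair $(HE_\Delta,F_\Delta)$ via~(\ref{F: (euEl(l,f),euEl(g))=(euEl(D)(f),euEl(D)(g))}), Lemma~\ref{L: Im R=Im R(Delta)} and~(\ref{F: scal in H(1,-1)(F)=scal in H(1,-1)(Delta)}), and then invoke the Hilbert--Schmidt-rigging identity (the paper cites \cite[(5.5)]{Az3v6} where you rederive it by polarization of~(\ref{F: terrible formula})). Your explicit argument for the trace-class assertion, which the paper leaves implicit, is also correct.
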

\begin{proof} It is enough to show that for any $f,g \in \hilb_1(F)$
the equality
$$
  \frac 1\pi \scal{f}{\Im R_{\lambda+i0}(H) g}_{1,-1} = \scal{f}{\euE^\diamondsuit_\lambda\euE_\lambda g}_{1,-1}
$$
holds.
For any $\Delta \in \euB_\lambda,$ by (\ref{F: (euEl(f),euEl(g))=(euEl(D)(f),euEl(D)(g))}) and \cite[(5.5)]{Az3v6},
the right hand side of this equality is equal to
\begin{equation*}
  \begin{split}
    \scal{\euE_\lambda f}{\euE_\lambda g}_{\hlambda} & = \scal{\euE^\Delta_\lambda E_\Delta f}{\euE^\Delta_\lambda E_\Delta g}_{\hlambda(\Delta)}
   \\ & = \frac 1 \pi \scal{E_\Delta f}{\Im R^\Delta_{\lambda+i0}(H) E_\Delta g}_{\hilb_1(\Delta),\hilb_{-1}(\Delta)},
  \end{split}
\end{equation*}
where
$\Im R^\Delta_{\lambda+i0}(H)$ is considered as an operator from
$\hilb_1(\Delta)$ to $\hilb_{-1}(\Delta).$ Thus, it follows from
(\ref{F: scal in H(1,-1)(F)=scal in H(1,-1)(Delta)}) that
$$
  \scal{f}{\euE^\diamondsuit_\lambda\euE_\lambda g}_{1,-1} = \frac 1 \pi \scal{f}{\Im R^\Delta_{\lambda+i0}(H) g}_{1,-1},
$$
where now $\Im R^\Delta_{\lambda+i0}(H) \colon \hilb_1(F) \to
\hilb_{-1}(F).$ Combining this with Lemma~\ref{L: Im R=Im
R(Delta)} completes the proof.
\end{proof}

\section{Wave matrix}
\label{S: wave matrix}
So far we have considered a single self-adjoint operator~$H$ on a
Hilbert space~$\hilb$ with a fixed rigging operator~$F,$ and for
this reason we did not use sub-indices. From now on we are going
to consider not only the operator~$H$ but also its perturbations.
We denote by~$H_0$ an ``initial'' self-adjoint operator, and by
$H_1 = H_0+V$ we denote its perturbation by a self-adjoint
operator $V.$
We assume that the operator~$V$ admits a decomposition
\begin{equation} \label{F: V=F^*JF}
  V =F^*JF,
\end{equation}
where $J$ is a bounded operator on~$\clK.$ Existence of
such a factorization allows to treat~$V$ as a bounded operator (see e.g. discussion in introduction)
\begin{equation} \label{F: V:hilb(-1)to hilb(1)}
  V \colon \hilb_{-1}(F) \to \hilb_{1}(F).
\end{equation}
\subsection{Operators $\mathfrak a_\pm(\lambda; H_1,H_0)$}
We shall assume that~$H_0$ and~$H_1$ are two self-adjoint operators on a rigged Hilbert space $(\hilb,F),$
compatible with the rigging $F$ and such that the difference $V=H_1-H_0$ admits decomposition (\ref{F: V=F^*JF}).
This means that $V$ can be considered as a bounded operator (\ref{F: V:hilb(-1)to hilb(1)}) and we shall do this without further reference.
Whether $V$ is considered as an operator $\hilb \to \hilb$ or as an operator (\ref{F: V:hilb(-1)to hilb(1)}) will be clear from the context.

With the preparations given in previous sections, for any real
number~$\lambda,$ which belongs to the set $\Lambda(H_0,F) \cap
\Lambda(H_1,F),$ we can now define operators
$$
  \mathfrak a_\pm(\lambda; H_1,H_0) \colon \hilb_{1}(F) \to \hilb_{-1}(F),
$$
which are analogues of the forms \cite[(2.7.4)]{Ya}. We have (cf. e.g.~\cite[(2.7.10)]{Ya}, see also \cite[(5.3)]{Az3v6})
\begin{equation} \label{F: Ya (2.7.10)}
 \begin{split}
  \frac y \pi R_{\lambda\mp iy}(H_1) R_{\lambda\pm iy}(H_0)
        & = \frac 1 \pi \Im R_{\lambda+iy}(H_1) \SqBrs{1 + V R_{\lambda\pm iy}(H_0)}
     \\ & = \SqBrs{1 - R_{\lambda\mp iy}(H_1)V} \cdot \frac 1 \pi \Im R_{\lambda+iy}(H_0).
 \end{split}
\end{equation}
Since $\lambda \in \Lambda(H_0,F) \cap \Lambda(H_1,F),$ by Definition~\ref{D: Lambda(H,F)} the limits
\begin{equation} \label{D: of mathfrak a pm}
 \begin{split}
    \mathfrak a_\pm(\lambda; H_1,H_0)  & := \frac 1 \pi \Im R_{\lambda+i0}(H_1) \SqBrs{1 + V R_{\lambda\pm i0}(H_0)}
     \\ & = \SqBrs{1 - R_{\lambda\mp i0}(H_1)V} \cdot \frac 1 \pi \Im R_{\lambda+i0}(H_0)
 \end{split}
\end{equation}
exist in norm topology of the space $\clB(\hilb_1,\hilb_{-1}).$
Since the resolvent operator $R_{\lambda+iy}(H)$ is
compact as an operator from $\clB(\hilb_1,\hilb_{-1}),$ it follows
from the definition of the operators $\mathfrak a_\pm(\lambda;
H_1,H_0),$ that they are also compact.

\subsection{Wave matrix $w_\pm(\lambda; H_1,H_0)$}
\label{SS: wave matrix} Let $\lambda \in \Lambda(H_0,F) \cap
\Lambda(H_1,F).$ Following (\ref{F: def of w(+-) from intro}) (see also \cite[Definition 5.2.1]{Az3v6}), we
define the wave matrix $w_\pm(\lambda; H_1,H_0)$ as a form
$$
  w_\pm(\lambda; H_1,H_0) \colon \hlambda(H_1) \times \hlambda(H_0) \to \mbC,
$$ by the formula
\begin{equation} \label{F: def of w +-}
  w_\pm(\lambda; H_1,H_0) \brs{\euE_\lambda(H_1)f, \euE_\lambda(H_0) g} =  \scal{f}{\mathfrak a_\pm(\lambda;H_1,H_0)g}_{1,-1},
\end{equation}
where $f,g \in \hilb_1(F).$ This definition is exactly the same as
the definition of the wave matrix for the case of trace-class
perturbations~$V,$ considered in \cite[\S 5.2]{Az3v6}; the only difference
being the treatment of the fiber Hilbert space $\hlambda$ as a sheaf
of Hilbert spaces (\ref{F: hlambda(H,F)}). The wave matrix $w_\pm(\lambda; H_1,H_0)$ is
correctly defined by equality (\ref{F: def of w +-}) in the sense that if $f'$ and $g'$ is another
pair of vectors from $\hilb_1(F)$ such that
$\euE_\lambda(H_1)f = \euE_\lambda(H_1)f'$ and $\euE_\lambda(H_0)
g = \euE_\lambda(H_0) g',$ then
$$
  w_\pm(\lambda; H_1,H_0) \brs{\euE_\lambda(H_1)f, \euE_\lambda(H_0) g} = w_\pm(\lambda; H_1,H_0) \brs{\euE_\lambda(H_1)f', \euE_\lambda(H_0) g'},
$$
as it follows from Proposition~\ref{P: delta(H)=euE*euE} and
definition (\ref{D: of mathfrak a pm}) of $\mathfrak a_\pm(\lambda;H_1,H_0).$

Proof of the following proposition follows verbatim the proof of
\cite[Proposition 5.2.2]{Az3v6} (the idea of which was taken in its turn from \cite[\S 5.2]{Ya}),
with reference to Proposition~\ref{P: delta(H)=euE*euE} instead of \cite[(5.5)]{Az3v6}.
\begin{prop} \label{P: w(pm) is well-defined}
Let~$H_0$ and~$H_1$ be two self-adjoint operators on a rigged Hilbert space $(\hilb,F)$ which are compatible with the rigging~$F,$ such that $V=H_1-H_0$
admits decomposition (\ref{F: V=F^*JF}). For any $\lambda \in \Lambda\brs{H_1,F} \cap \Lambda\brs{H_0,F}$ the form
$w_\pm(\lambda; H_1,H_0)$ is well-defined, and it is bounded with norm $\leq 1.$
\end{prop}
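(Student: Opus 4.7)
The plan is to handle well-definedness and the norm bound separately, in both cases using Proposition~\ref{P: delta(H)=euE*euE} to convert expressions involving $\frac{1}{\pi}\Im R_{\lambda+i0}(H_j)$ into inner products in the fiber Hilbert spaces $\hlambda(H_j)$.

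For well-definedness, it suffices to verify two things: (i) if $\euE_\lambda(H_0) g = 0$, then $\mathfrak a_\pm(\lambda;H_1,H_0) g = 0$, and (ii) if $\euE_\lambda(H_1) f = 0$, then $\langle f, \mathfrak a_\pm(\lambda;H_1,H_0) g\rangle_{1,-1} = 0$ for all $g \in \hilb_1(F)$. For (i), I would use the second expression in~(\ref{D: of mathfrak a pm}), namely $\mathfrak a_\pm = [1 - R_{\lambda \mp i0}(H_1) V]\cdot \frac{1}{\pi}\Im R_{\lambda+i0}(H_0)$, together with $\frac{1}{\pi}\Im R_{\lambda+i0}(H_0) = \euE_\lambda^\diamondsuit(H_0)\euE_\lambda(H_0)$ from Proposition~\ref{P: delta(H)=euE*euE}, so that $\euE_\lambda(H_0)g = 0$ forces $\mathfrak a_\pm g = 0$. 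For (ii), I would symmetrically use the first expression and write
$$
  \langle f, \mathfrak a_\pm g\rangle_{1,-1} = \langle f, \tfrac{1}{\pi}\Im R_{\lambda+i0}(H_1)[1+VR_{\lambda\pm i0}(H_0)]g\rangle_{1,-1} = \langle \euE_\lambda(H_1)f, \euE_\lambda(H_1)[1+VR_{\lambda\pm i0}(H_0)]g\rangle_{\hlambda(H_1)},
$$
which obviously vanishes when $\euE_\lambda(H_1)f = 0$.

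For the norm bound, set $h_\pm := [1 + V R_{\lambda\pm i0}(H_0)]g \in \hilb_1(F)$ (the inclusion being valid because $R_{\lambda\pm i0}(H_0) \colon \hilb_1 \to \hilb_{-1}$ and $V \colon \hilb_{-1} \to \hilb_1$). The identity displayed above can be rewritten as
$$
  w_\pm(\lambda;H_1,H_0)\brs{\euE_\lambda(H_1)f, \euE_\lambda(H_0)g} = \scal{\euE_\lambda(H_1)f}{\euE_\lambda(H_1)h_\pm}_{\hlambda(H_1)},
$$
so Cauchy--Schwarz reduces the problem to proving $\norm{\euE_\lambda(H_1)h_\pm}_{\hlambda(H_1)} \leq \norm{\euE_\lambda(H_0)g}_{\hlambda(H_0)}$. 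Applying Proposition~\ref{P: delta(H)=euE*euE} once more, followed by the definition of $\mathfrak a_\pm$ in terms of the second expression, I get
$$
  \norm{\euE_\lambda(H_1)h_\pm}_{\hlambda(H_1)}^2 = \scal{h_\pm}{\tfrac{1}{\pi}\Im R_{\lambda+i0}(H_1) h_\pm}_{1,-1} = \scal{h_\pm}{\mathfrak a_\pm g}_{1,-1} = \scal{[1+VR_{\lambda\pm i0}(H_0)]g}{[1-R_{\lambda\mp i0}(H_1)V]\tfrac{1}{\pi}\Im R_{\lambda+i0}(H_0)g}_{1,-1}.
$$
Moving the first factor across the pairing (using $V^* = V$ and $R_{\lambda\pm i0}(H_0)^* = R_{\lambda\mp i0}(H_0)$ as operators $\hilb_1 \to \hilb_{-1}$), I rewrite this as $\langle g, [1+R_{\lambda\mp i0}(H_0)V][1-R_{\lambda\mp i0}(H_1)V]\frac{1}{\pi}\Im R_{\lambda+i0}(H_0)g\rangle_{1,-1}$. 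The key algebraic step is then the second resolvent identity $R_{\lambda\mp i0}(H_0) - R_{\lambda\mp i0}(H_1) = R_{\lambda\mp i0}(H_0)V R_{\lambda\mp i0}(H_1)$, which, combined with multiplication on the right by $V$, yields $[1+R_{\lambda\mp i0}(H_0)V][1-R_{\lambda\mp i0}(H_1)V] = 1$; the bound (in fact the equality $\norm{\euE_\lambda(H_1)h_\pm} = \norm{\euE_\lambda(H_0)g}$) follows.

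The main obstacle is bookkeeping: throughout the computation one must track in which of the Hilbert spaces $\hilb,\hilb_1,\hilb_{-1},\clK$ each object lives and in which topology each limit $R_{\lambda\pm i0}$ exists. In particular, the second resolvent identity and the adjoint relations must be justified at the level of operators $\hilb_1(F) \to \hilb_{-1}(F)$, invoking the fact that the limits in~(\ref{D: of mathfrak a pm}) exist in the norm of $\clB(\hilb_1, \hilb_{-1})$, and that $V$ and $R_{\lambda\pm i0}(H_j)$ compose in the correct order between the appropriate scale spaces. This is essentially the content of the proof of~\cite[Proposition 5.2.2]{Az3v6} adapted to the sheaf-theoretic framework developed in Sections~\ref{S: Sheaves of Hilbert spaces} and~\ref{S: s.a. op-rs on rigged Hilbert spaces}.
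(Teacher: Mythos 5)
Your argument is correct, and its well-definedness half is exactly what the paper does: the paper's proof simply says this ``has already been shown,'' referring to the remark following definition~(\ref{F: def of w +-}), which uses Proposition~\ref{P: delta(H)=euE*euE} together with the two factorizations in~(\ref{D: of mathfrak a pm}) precisely as in your points (i) and (ii}). Where you genuinely diverge is the norm bound. The paper stays at $y>0$: it applies Cauchy--Schwarz in $\hilb$ to $\frac y\pi\scal{f}{R_{\lambda\mp iy}(H_1)R_{\lambda\pm iy}(H_0)g}$, uses $y\norm{R_{\lambda+iy}(H)f}^2=\scal{f}{\Im R_{\lambda+iy}(H)f}$, and only then lets $y\to 0^+$, identifying the two limiting factors with $\norm{\euE_\lambda(H_1)f}$ and $\norm{\euE_\lambda(H_0)g}$ via Proposition~\ref{P: delta(H)=euE*euE}; nothing beyond the representation~(\ref{F: Ya (2.7.10)}) of $\mathfrak a_\pm$ as the limit of $\frac y\pi R_{\lambda\mp iy}(H_1)R_{\lambda\pm iy}(H_0)$ is needed, so the proof is very short. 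You instead work entirely at the boundary: you read off $w_\pm(\lambda;H_1,H_0)\euE_\lambda(H_0)g=\euE_\lambda(H_1)\SqBrs{1+VR_{\lambda\pm i0}(H_0)}g$ and verify isometry via the identity $\SqBrs{1+R_{\lambda\mp i0}(H_0)V}\SqBrs{1-R_{\lambda\mp i0}(H_1)V}=1$, the boundary form of the second resolvent identity. This costs more justification --- the resolvent identity and the adjoint relations must be checked in $\clB(\hilb_1(F),\hilb_{-1}(F))$ after passage to the limit, which is legitimate since all factors converge in that norm and the $\lambda-i0$ limits exist as adjoints of the $\lambda+i0$ ones --- but it buys more: you obtain outright that $w_\pm$ is isometric, and you produce the explicit formula for $w_\pm\euE_\lambda(H_0)$ that the paper only records later as a by-product of the stationary formula, whereas the paper deduces unitarity afterwards from the multiplicative property combined with the contraction bound. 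Both routes are sound; the paper's is the more economical path to the stated bound $\leq 1$.
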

\begin{proof} That $w_\pm(\lambda; H_1,H_0)$ is well-defined has already been shown.

Further, by Schwarz inequality, for any $f,g \in \hilb_1(F) \subset \hilb$
\begin{equation} \label{F: norm of mathfrak a}
 \begin{split}
  \frac y \pi & \abs{\scal{f}{ R_{\lambda-iy}(H_1) R_{\lambda+iy}(H_0)g}_\hilb}
    \\   & \qquad \leq \frac y \pi \norm{R_{\lambda+iy}(H_1)f}_\hilb \norm{R_{\lambda+iy}(H_0)g}_\hilb
    \\ & \qquad = \frac 1\pi \abs{\scal{f}{\Im R_{\lambda+iy}(H_1)f}_\hilb}^{1/2} \cdot \abs{\scal{g}{\Im R_{\lambda+iy}(H_0)g}_\hilb}^{1/2}.
 \end{split}
\end{equation}
In this inequality we take the limit $y\to 0^+$ to get, using
Proposition~\ref{P: delta(H)=euE*euE} and~(\ref{F: def of
diamond}),
$$
  \abs{\scal{f}{\mathfrak a_\pm(\lambda;H_1,H_0)g}_{1,-1}} \leq \norm{\euE_\lambda (H_1)f}_{\hlambda(H_r)} \cdot \norm{\euE_\lambda (H_0)g}_{\hlambda(H_0)}.
$$
It follows that the wave matrix is bounded with bound less or
equal to $1.$
\end{proof}
\noindent Henceforth we shall identify the form $w_\pm(\lambda;
H_1,H_0)$ with the corresponding operator
\begin{equation} \label{F: w(+-) as an op-r}
  w_\pm(\lambda; H_1,H_0) \colon \hlambda(H_0) \to \hlambda(H_1).
\end{equation}
It follows directly
from the definition of the wave matrix~(\ref{F: def of w +-}) and
Proposition~\ref{P: delta(H)=euE*euE} that
\begin{equation} \label{F: w(H0,H0)=1}
  w_\pm(\lambda; H_0,H_0) = 1_{\hlambda(H_0)}.
\end{equation}

\subsection{Multiplicative property of the wave matrix}
In this subsection we prove the multiplicative property of the
wave matrix. 
As is known (cf. e.g. \cite[2.7.3]{Ya}), proof of the multiplicative property is one of the key points of the stationary approach to scattering theory.
\begin{lemma} \label{L: Lemma Y}
Let $H_0,H_1,H_2$ be three self-adjoint operators on a rigged Hilbert space $(\hilb,F),$ which are compatible with the rigging~$F$ and
such that the operators $V_1 = H_1-H_0$ and $V_2=H_2-H_1$ admit decomposition (\ref{F: V=F^*JF}).
If $\lambda \in \Lambda(H_0,F) \cap \Lambda(H_1,F) \cap \Lambda(H_2,F),$ then the
equality
\begin{equation} \label{F: mathfrak a+=(1-RV)d(H)(1-VR)}
  \mathfrak a_\pm(\lambda; H_2,H_0) = (1 - R_{\lambda\mp i0}(H_2)V_2) \frac 1\pi \Im R_{\lambda+i0}(H_1)(1 + V_1 R_{\lambda\pm i0}(H_0))
\end{equation}
holds, where both sides are compact operators from $\clB(\hilb_1,\hilb_{-1}).$
\end{lemma}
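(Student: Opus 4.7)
The plan is to prove the identity first at positive imaginary part $y>0$, where everything reduces to the algebraic resolvent identity on $\hilb$, and then to pass to the limit $y\to 0^+$ in the operator norm topology of $\clB(\hilb_1,\hilb_{-1})$, using the fact that $\lambda$ lies in $\Lambda(H_j,F)$ for $j=0,1,2$ and the factorisation $V_j=F^*J_jF$ which makes $V_1,V_2$ bounded as operators $\hilb_{-1}\to\hilb_1$.

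First I would apply the second resolvent identity twice. For $y>0$, the identity $R_{\lambda\mp iy}(H_2)=(1-R_{\lambda\mp iy}(H_2)V_2)R_{\lambda\mp iy}(H_1)$ follows from $H_2=H_1+V_2$, and $R_{\lambda\pm iy}(H_0)=R_{\lambda\pm iy}(H_1)(1+V_1R_{\lambda\pm iy}(H_0))$ follows from $H_1=H_0+V_1$. Multiplying and using the elementary identity $R_{\lambda-iy}(H_1)R_{\lambda+iy}(H_1)=y^{-1}\Im R_{\lambda+iy}(H_1)$ (which is a restatement of the first resolvent identity applied to $z=\lambda-iy$ and $w=\lambda+iy$) yields, as an equality of bounded operators on $\hilb$,
\begin{equation*}
  \tfrac{y}{\pi}R_{\lambda\mp iy}(H_2)R_{\lambda\pm iy}(H_0)
  =(1-R_{\lambda\mp iy}(H_2)V_2)\,\tfrac{1}{\pi}\Im R_{\lambda+iy}(H_1)\,(1+V_1R_{\lambda\pm iy}(H_0)).
\end{equation*}
Meanwhile, (\ref{F: Ya (2.7.10)}) applied to the pair $(H_0,H_2)$ with perturbation $V=V_1+V_2$ shows that the left-hand side is nothing but the pre-limit expression defining $\mathfrak a_\pm(\lambda;H_2,H_0)$, so it converges in the norm of $\clB(\hilb_1,\hilb_{-1})$ to $\mathfrak a_\pm(\lambda;H_2,H_0)$ by the definition of $\Lambda(H_0,F)\cap\Lambda(H_2,F)$.

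Next I would pass to the limit on the right-hand side, reading the three factors in the appropriate scales. Since $\lambda\in\Lambda(H_0,F)$, Lemma~\ref{L: FRF* is compact} together with the abstract limiting absorption principle gives that $R_{\lambda\pm iy}(H_0)\to R_{\lambda\pm i0}(H_0)$ in $\clB(\hilb_1,\hilb_{-1})$; post-composing with the bounded operator $V_1\colon\hilb_{-1}\to\hilb_1$ produces norm convergence of $1+V_1R_{\lambda\pm iy}(H_0)$ in $\clB(\hilb_1,\hilb_1)$. Symmetrically, pre-composing $V_2\colon\hilb_{-1}\to\hilb_1$ with $R_{\lambda\mp iy}(H_2)\colon\hilb_1\to\hilb_{-1}$ and using $\lambda\in\Lambda(H_2,F)$ gives norm convergence of $1-R_{\lambda\mp iy}(H_2)V_2$ in $\clB(\hilb_{-1},\hilb_{-1})$. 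Finally $\tfrac{1}{\pi}\Im R_{\lambda+iy}(H_1)\to\tfrac{1}{\pi}\Im R_{\lambda+i0}(H_1)$ in $\clB(\hilb_1,\hilb_{-1})$ because $\lambda\in\Lambda(H_1,F)$. Combining these three norm convergences and the boundedness of the limiting factors gives convergence of the product in $\clB(\hilb_1,\hilb_{-1})$ to the right-hand side of (\ref{F: mathfrak a+=(1-RV)d(H)(1-VR)}), which matches the limit of the left-hand side.

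Compactness of both sides as elements of $\clB(\hilb_1,\hilb_{-1})$ is then immediate: $\Im R_{\lambda+i0}(H_1)$ is even trace class by Proposition~\ref{P: delta(H)=euE*euE}, so sandwiching it between two bounded operators (in the correct scales) preserves compactness. The main delicate point is not the algebra, which is routine, but the bookkeeping of the three different Hilbert-space scales in which each factor converges; once these are aligned, the limit passes through the product unambiguously because norm convergence is stable under composition with bounded operators.
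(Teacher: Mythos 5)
Your proposal is correct and follows essentially the same route as the paper: the identity is first established for $y>0$ via $\Im R_{\lambda+iy}(H_1)=yR_{\lambda-iy}(H_1)R_{\lambda+iy}(H_1)$ and two applications of the second resolvent identity, collapsing the product to $\tfrac{y}{\pi}R_{\lambda\mp iy}(H_2)R_{\lambda\pm iy}(H_0)$, after which the limit $y\to0^+$ is taken using the norm convergence of the resolvents in $\clB(\hilb_1,\hilb_{-1})$ and identified with $\mathfrak a_\pm(\lambda;H_2,H_0)$ via (\ref{F: Ya (2.7.10)}) and (\ref{D: of mathfrak a pm}). Your explicit bookkeeping of the scales in which each factor converges (and the compactness remark) is slightly more detailed than the paper's, but the argument is the same.
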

\begin{proof} Let $y>0.$ We have
\begin{equation} \label{F: (1-R)V2 delta(H)...}
  \begin{split}
     (E) & := (1 - R_{\lambda \mp iy}(H_2)V_2) \frac 1\pi \Im R_{\lambda+iy}(H_1)(1 + V_1R_{\lambda\pm iy}(H_0))
      \\ & = \frac y\pi (1 - R_{\lambda\mp iy}(H_2)V_2) R_{\lambda\mp iy}(H_1)R_{\lambda\pm iy}(H_1)(1 + V_1R_{\lambda\pm iy}(H_0)).
  \end{split}
\end{equation}
Applying the second resolvent identity to this equality twice yields
\begin{equation} \label{F: (E)=R(bar z)R(z)}
  \begin{split}
     (E) = \frac y\pi R_{\lambda\mp iy}(H_2) R_{\lambda\pm iy}(H_0).
  \end{split}
\end{equation}
Since for $\lambda \in \Lambda(H_j,F)$ the limits $R_{\lambda\pm
i0}(H_j)$ of the resolvents exist in the norm topology of
$\clB(\hilb_1,\hilb_{-1}),$ it follows that for all $\lambda \in \Lambda(H_0,F) \cap \Lambda(H_1,F) \cap \Lambda(H_2,F)$
the limit of the left hand side of~(\ref{F: (1-R)V2 delta(H)...}) exists in norm as $y \to
0^+.$ By~(\ref{F: Ya (2.7.10)}) and~(\ref{D: of mathfrak a pm}),
the norm limit of~(\ref{F: (E)=R(bar z)R(z)}) as $y \to 0^+$
exists in the norm of the space $\clB(\hilb_1(F),\hilb_{-1}(F))$ and is equal to $\mathfrak a_\pm(\lambda; H_2,H_0).$
\end{proof}
Let $\lambda \in \Lambda(H,F)$ and let~$\Delta$ be any fixed open
bounded set containing~$\lambda.$ We can consider a compatible
pair $(E_\Delta H, F_\Delta)$ with Hilbert-Schmidt
rigging~$F_\Delta.$ By \cite[(2.20), 2.16(viii) and Lemma
3.1.6]{Az3v6}, for such riggings there exists a sequence of vectors
$b_j^\Delta(\lambda+i0) \in \hilb_1(\Delta),$ $j$ is an index of non-zero type (see \cite[\S 2.10]{Az3v6}), such
that the sequence of vectors
$$
  e_j^\Delta (\lambda+i0) := \euE_\lambda^\Delta(H) b_j^\Delta(\lambda+i0) \in \hlambda(\Delta), \ j \in \clZ_\lambda,
$$
is an orthonormal basis of the Hilbert space $\hlambda(\Delta),$ where $\clZ_\lambda$ is the set of indices of non-zero type.
Since, by Lemma~\ref{L: E(Delta) is a
contraction}, $\hilb_1(\Delta) = E_\Delta \hilb_1(F),$ there exist
vectors $b_j(\lambda+i0) \in \hilb_1(F),$ such that
$$
  b_j^\Delta(\lambda+i0) = E_\Delta b_j(\lambda+i0).
$$
Thus, according to (\ref{F: (euEl(f),euEl(g))=(euEl(D)(f),euEl(D)(g))}), the set of vectors
$$
  \set{b_j(\lambda+i0) \in \hilb_1(F) \colon j\in \clZ_\lambda}
$$
is such that the sequence
\begin{equation} \label{F: (bj) is o.n. basis}
  \brs{\euE_\lambda(H) b_j(\lambda+i0)}_{j\in \clZ_\lambda} \ \text{is an orthonormal basis of \ $\hlambda(H,F)$}
\end{equation}
\noindent (see (\ref{F: hlambda(H,F)}) for definition of the Hilbert space $\hlambda(H,F)$).
It is the only property of vectors $b_j(\lambda+i0)$ which is used
in the proof of the following lemma.

\begin{lemma} \label{L: Lemma X} Let~$H$ be a self-adjoint operator compatible with rigging~$F$
and let $\lambda \in \Lambda(H,F).$ For any $f,g \in \hilb_1(F)$
the equality
\begin{multline*}
  \frac 1\pi \scal{f}{\Im R_{\lambda+i0}(H) g}_{1,-1} \\ = \frac 1{\pi^2}
    \sum_{j \in \clZ_\lambda} \scal{f}{\Im R_{\lambda+i0}(H) b_j(\lambda+i0)}_{1,-1} \scal{\Im R_{\lambda+i0}(H) b_j(\lambda+i0)}{g}_{-1,1}
\end{multline*}
holds, where $\brs{b_j(\lambda+i0)}_{j\in \clZ_\lambda}$ is a sequence of
vectors from $\hilb_1(F),$ constructed above, with the property (\ref{F: (bj) is o.n. basis}).
\end{lemma}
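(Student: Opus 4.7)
The plan is to reduce the identity to Parseval's relation in the fiber Hilbert space $\hlambda(H,F)$. The key ingredient is Proposition~\ref{P: delta(H)=euE*euE}, which gives the factorization
$$
  \frac{1}{\pi}\Im R_{\lambda+i0}(H) \;=\; \euE_\lambda^\diamondsuit(H)\,\euE_\lambda(H)
$$
as an equality of operators in $\clB(\hilb_1(F),\hilb_{-1}(F))$. Combined with the defining property~(\ref{F: def of diamond}) of $A^\diamondsuit$, this means that for any $u,v\in\hilb_1(F)$,
$$
  \frac{1}{\pi}\scal{u}{\Im R_{\lambda+i0}(H)\, v}_{1,-1} \;=\; \scal{u}{\euE_\lambda^\diamondsuit \euE_\lambda v}_{1,-1} \;=\; \scal{\euE_\lambda u}{\euE_\lambda v}_{\hlambda(H,F)}.
$$

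First I would apply this identity with $(u,v)=(f,g)$ to rewrite the left-hand side of the claim as $\scal{\euE_\lambda f}{\euE_\lambda g}_{\hlambda(H,F)}$, and then with $(u,v)=(f,b_j(\lambda+i0))$ to rewrite the first factor in each summand on the right as $\scal{\euE_\lambda f}{\euE_\lambda b_j(\lambda+i0)}_{\hlambda(H,F)}$. For the second factor $\frac{1}{\pi}\scal{\Im R_{\lambda+i0}(H)\,b_j(\lambda+i0)}{g}_{-1,1}$ I would use Hermitian symmetry of the $(1,-1)$-pairing, namely $\scal{\phi}{\psi}_{-1,1} = \overline{\scal{\psi}{\phi}_{1,-1}}$, so that this factor becomes $\overline{\scal{\euE_\lambda g}{\euE_\lambda b_j(\lambda+i0)}_{\hlambda(H,F)}} = \scal{\euE_\lambda b_j(\lambda+i0)}{\euE_\lambda g}_{\hlambda(H,F)}$.

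Once these substitutions are made, the right-hand side of the asserted identity becomes
$$
  \sum_{j\in\clZ_\lambda} \scal{\euE_\lambda f}{\euE_\lambda b_j(\lambda+i0)}_{\hlambda(H,F)}\,\scal{\euE_\lambda b_j(\lambda+i0)}{\euE_\lambda g}_{\hlambda(H,F)}.
$$
By hypothesis~(\ref{F: (bj) is o.n. basis}), the family $(\euE_\lambda(H) b_j(\lambda+i0))_{j\in\clZ_\lambda}$ is an orthonormal basis of $\hlambda(H,F)$, so Parseval's identity identifies this sum with $\scal{\euE_\lambda f}{\euE_\lambda g}_{\hlambda(H,F)}$, matching the rewritten left-hand side.

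There is no real obstacle: the entire content is Parseval's identity, once Proposition~\ref{P: delta(H)=euE*euE} is used to transplant the two pairings in $\hilb_{\pm 1}(F)$ into scalar products in $\hlambda(H,F)$. The only points requiring a line of justification are the self-adjointness interpretation of $\Im R_{\lambda+i0}(H)$ as an operator $\hilb_1(F)\to \hilb_{-1}(F)$ (needed for the Hermitian symmetry used on the second factor) and the absolute convergence of the resulting series, which is automatic from Parseval since $\euE_\lambda f, \euE_\lambda g\in \hlambda(H,F)$ by Proposition~\ref{P: euE is HS}.
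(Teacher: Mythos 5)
Your argument is correct and follows essentially the same route as the paper's own proof: both reduce the identity to Parseval's relation in $\hlambda(H,F)$ by using Proposition~\ref{P: delta(H)=euE*euE} together with the defining property~(\ref{F: def of diamond}) of $\euE_\lambda^\diamondsuit$ to convert each pairing $\frac1\pi\scal{\cdot}{\Im R_{\lambda+i0}(H)\cdot}_{1,-1}$ into a scalar product $\scal{\euE_\lambda\cdot}{\euE_\lambda\cdot}_{\hlambda}$, and then invoke the orthonormal basis property~(\ref{F: (bj) is o.n. basis}). Your remarks on the Hermitian symmetry of the pairing and the absolute convergence of the series only make explicit what the paper leaves implicit.
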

\begin{proof} Using Proposition~\ref{P: delta(H)=euE*euE} and~(\ref{F: def of diamond}), we infer that the right hand side of the equality to be proved is equal to
\begin{equation*}
  \begin{split}
     (E) :&= \sum_{j\in \clZ_\lambda} \scal{f}{\euE_\lambda^\diamondsuit \euE_\lambda b_j(\lambda+i0)}_{1,-1} \scal{\euE_\lambda^\diamondsuit \euE_\lambda b_j(\lambda+i0)}{g}_{-1,1}
      \\ &  = \sum_{j\in \clZ_\lambda} \scal{\euE_\lambda f}{\euE_\lambda b_j(\lambda+i0)}_{\hlambda} \scal{\euE_\lambda b_j(\lambda+i0)}{\euE_\lambda g}_{\hlambda}.
  \end{split}
\end{equation*}
It follows from this and (\ref{F: (bj) is o.n. basis}) that
$$
  (E) = \scal{\euE_\lambda f}{\euE_\lambda g}_{\hlambda} = \scal{f}{\euE_\lambda^\diamondsuit \euE_\lambda g}_{1,-1}.
$$ Now, another application of Proposition~\ref{P: delta(H)=euE*euE} completes the proof.
\end{proof}

\begin{thm} \label{T: mult property of w(pm)} Let $H_0,H_1,H_2$ be three self-adjoint operators compatible with rigging~$F$
on~$\hilb,$ which satisfy conditions of Lemma~\ref{L: Lemma Y}. If $\lambda \in \Lambda(H_0,F) \cap \Lambda(H_1,F) \cap \Lambda(H_2,F),$ then
$$
  w_\pm(\lambda; H_2,H_0) = w_\pm(\lambda; H_2,H_1)w_\pm(\lambda; H_1,H_0).
$$
\end{thm}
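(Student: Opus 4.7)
The plan is to verify the operator identity weakly, by testing inner products $\scal{\euE_\lambda(H_2)f}{\,\cdot\,\euE_\lambda(H_0)g}_{\hlambda(H_2)}$ with $f,g \in \hilb_1(F).$ Since the ranges of $\euE_\lambda(H_0)$ and $\euE_\lambda(H_2)$ are dense in $\hlambda(H_0)$ and $\hlambda(H_2)$ respectively, this suffices. By the definition~(\ref{F: def of w +-}) of the wave matrix and by Lemma~\ref{L: Lemma Y}, the left-hand side equals
$$
  \scal{f}{\mathfrak a_\pm(\lambda; H_2,H_0) g}_{1,-1} = \scal{f}{(1 - R_{\lambda \mp i0}(H_2) V_2) \tfrac 1\pi \Im R_{\lambda+i0}(H_1)(1 + V_1 R_{\lambda \pm i0}(H_0)) g}_{1,-1}.
$$

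For the composition on the right I would expand the intermediate vector $\eta := w_\pm(\lambda; H_1,H_0)\euE_\lambda(H_0)g \in \hlambda(H_1)$ in the orthonormal basis $\set{\euE_\lambda(H_1) b_j(\lambda+i0) \colon j \in \clZ_\lambda}$ of $\hlambda(H_1)$ from~(\ref{F: (bj) is o.n. basis}), and then apply $w_\pm(\lambda; H_2,H_1)$ termwise. Using~(\ref{F: def of w +-}) for each of the two wave matrices, this produces
$$
  \scal{\euE_\lambda(H_2) f}{w_\pm(\lambda; H_2,H_1) w_\pm(\lambda; H_1,H_0) \euE_\lambda(H_0) g}_{\hlambda(H_2)} = \sum_{j \in \clZ_\lambda} \scal{f}{\mathfrak a_\pm(\lambda; H_2,H_1) b_j(\lambda+i0)}_{1,-1} \scal{b_j(\lambda+i0)}{\mathfrak a_\pm(\lambda; H_1,H_0) g}_{1,-1}.
$$
Now, using the two equivalent expressions in~(\ref{D: of mathfrak a pm}) so that $\Im R_{\lambda+i0}(H_1)$ appears as the middle factor in both $\mathfrak a$-terms, and setting $g' := (1 + V_1 R_{\lambda\pm i0}(H_0)) g \in \hilb_1(F)$ together with $\tilde f := (1 - V_2 R_{\lambda\pm i0}(H_2)) f \in \hilb_1(F),$ the sum transforms into
$$
  \frac 1{\pi^2}\sum_{j \in \clZ_\lambda} \scal{\tilde f}{\Im R_{\lambda+i0}(H_1) b_j(\lambda+i0)}_{1,-1} \scal{\Im R_{\lambda+i0}(H_1) b_j(\lambda+i0)}{g'}_{-1,1}.
$$
The passage from $f$ to $\tilde f$ is the transposition of $(1 - R_{\lambda\mp i0}(H_2) V_2)$ from the right slot of the pairing to the left; the second factor of each summand has been rewritten using the symmetry of $\Im R_{\lambda+i0}(H_1)$ with respect to the pairing.

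Once this reduction is done, Lemma~\ref{L: Lemma X} applied to the pair $\tilde f, g' \in \hilb_1(F)$ collapses the sum to $\tfrac 1\pi \scal{\tilde f}{\Im R_{\lambda+i0}(H_1) g'}_{1,-1}.$ Undoing the substitutions $\tilde f$ and $g'$ recovers exactly the expression obtained for the left-hand side, which completes the proof. The principal obstacle is justifying the two transposition identities $\scal{f}{R_{\lambda\mp i0}(H_2) V_2 h}_{1,-1} = \scal{V_2 R_{\lambda\pm i0}(H_2) f}{h}_{1,-1}$ and $\scal{b}{\Im R_{\lambda+i0}(H_1) h}_{1,-1} = \scal{\Im R_{\lambda+i0}(H_1) b}{h}_{-1,1}$: both are immediate for $y>0$ from the self-adjointness of $R_{\lambda+iy}(H_2)$ and $\Im R_{\lambda+iy}(H_1)$ on~$\hilb,$ and they transfer to the boundary values via norm convergence in $\clB(\hilb_1,\hilb_{-1}),$ which is available because $\lambda \in \Lambda(H_j,F)$ for $j = 0,1,2.$
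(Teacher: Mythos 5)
Your proposal is correct and follows essentially the same route as the paper's own proof: weak testing against $\euE_\lambda(H_2)f$ and $\euE_\lambda(H_0)g,$ insertion of the orthonormal basis $\set{\euE_\lambda(H_1)b_j(\lambda+i0)}$ from~(\ref{F: (bj) is o.n. basis}), use of the two expressions in~(\ref{D: of mathfrak a pm}) so that $\Im R_{\lambda+i0}(H_1)$ sits in the middle, then Lemma~\ref{L: Lemma X} to collapse the sum and Lemma~\ref{L: Lemma Y} to identify the result with $\mathfrak a_\pm(\lambda;H_2,H_0).$ Your explicit justification of the transposition identities (passing $1-R_{\lambda\mp i0}(H_2)V_2$ across the pairing and using the symmetry of $\Im R_{\lambda+i0}(H_1)$) fills in a step the paper leaves implicit.
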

\begin{proof} Let $f,g \in \hilb_1(F).$ Since the linear manifold $\euE_\lambda(H)\hilb_1(F)$ is dense in $\hlambda(H),$ it is enough to show that
\begin{equation*}
  \begin{split}
    (E) & := \scal{\euE_\lambda(H_2) f}{w_\pm(\lambda; H_2,H_1)w_\pm(\lambda; H_1,H_0)\euE_\lambda(H_0)g}
      \\ & = \scal{\euE_\lambda(H_2) f}{w_\pm(\lambda; H_2,H_0)\euE_\lambda(H_0) g}.
  \end{split}
\end{equation*}
Let vectors $b_j(\lambda+i0) \in \hilb_1(F), \ j \in \clZ_\lambda,$ be as in Lemma~\ref{L: Lemma X} for
the operator $H_1.$
It follows from definition of $(E)$ just given and (\ref{F: (bj) is o.n. basis}) that
\begin{equation*}
  \begin{split}
     (E) = \sum_{j\in \clZ_\lambda} & \scal{\euE_\lambda(H_2) f}{w_\pm(\lambda; H_2,H_1)\euE_\lambda(H_1) b_j(\lambda+i0)}
     \\ & \qquad \qquad \qquad \qquad \cdot \scal{\euE_\lambda(H_1) b_j(\lambda+i0)}{w_\pm(\lambda; H_1,H_0)\euE_\lambda(H_0)g}.
  \end{split}
\end{equation*}
Combining this equality with definition (\ref{F: def of w +-}) of $w_\pm(\lambda; H_1,H_0)$ gives
$$
  (E) = \sum_{j\in \clZ_\lambda} \scal{f}{\mathfrak a_\pm(\lambda; H_2,H_1) b_j(\lambda+i0)}_{1,-1} \scal{b_j(\lambda+i0)}{\mathfrak a_\pm(\lambda; H_1,H_0)\euE_\lambda(H_0)g}_{1,-1}.
$$
Combining this equality with formulas~(\ref{D: of mathfrak a pm}) for $\mathfrak a_\pm$ yields
\begin{multline*}
  (E) = \frac 1 {\pi^2} \sum_{j\in \clZ_\lambda} \scal{f}{\SqBrs{1 - R_{\lambda\mp i0}(H_2)V} \Im R_{\lambda+i0}(H_1) b_j(\lambda+i0)}_{1,-1}
    \\ \scal{b_j(\lambda+i0)}{\Im R_{\lambda+i0}(H_1) \SqBrs{1 + V R_{\lambda\pm i0}(H_0)}g}_{1,-1}.
\end{multline*}
By Lemma~\ref{L: Lemma X}, it follows that
$$
  (E) = \frac 1 \pi \scal{f}{\SqBrs{1 - R_{\lambda\mp i0}(H_2)V} \Im R_{\lambda+i0}(H_1) \SqBrs{1 + V R_{\lambda\pm i0}(H_0)}g}_{1,-1}.
$$
Now, Lemma~\ref{L: Lemma Y} implies that
$$
  (E) = \scal{f}{\mathfrak a_\pm(\lambda; H_2,H_0) g}_{1,-1} = \scal{\euE_\lambda(H_2) f}{w_\pm(\lambda; H_2,H_0)\euE_\lambda(H_0) g}.
$$
Proof is complete.
\end{proof}
\begin{cor} \label{C: w(pm) is unitary} For any two operators $H_0,H_1$ which satisfy conditions of Proposition~\ref{P: w(pm) is well-defined}
and for any $\lambda \in \Lambda(H_0,F)\cap \Lambda(H_1,F),$
the wave matrix (\ref{F: w(+-) as an op-r}) is a unitary operator. Moreover, $w_\pm^*(\lambda; H_1,H_0)=w_\pm(\lambda; H_0,H_1).$
\end{cor}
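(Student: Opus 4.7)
The plan is to deduce both assertions directly from the multiplicative property (Theorem~\ref{T: mult property of w(pm)}), the normalization formula~(\ref{F: w(H0,H0)=1}), and the norm bound from Proposition~\ref{P: w(pm) is well-defined}.

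First, I would apply Theorem~\ref{T: mult property of w(pm)} twice, once with $H_2 = H_0$ (and $H_1 \mapsto H_1$, $H_0 \mapsto H_0$ playing the middle/right roles appropriately, i.e., taking the triple $(H_0, H_1, H_0)$), and once with the roles reversed (taking the triple $(H_1, H_0, H_1)$). Combined with~(\ref{F: w(H0,H0)=1}), this yields
\begin{equation*}
  w_\pm(\lambda; H_0,H_1) w_\pm(\lambda; H_1,H_0) = 1_{\hlambda(H_0)}, \quad w_\pm(\lambda; H_1,H_0) w_\pm(\lambda; H_0,H_1) = 1_{\hlambda(H_1)}.
\end{equation*}
Thus $w_\pm(\lambda; H_1,H_0)$ is a bounded invertible operator from $\hlambda(H_0)$ to $\hlambda(H_1)$, and its two-sided inverse is $w_\pm(\lambda; H_0,H_1)$.

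Next, by Proposition~\ref{P: w(pm) is well-defined}, both $w_\pm(\lambda; H_1,H_0)$ and $w_\pm(\lambda; H_0,H_1)$ have norm at most $1$. For any $x \in \hlambda(H_0)$ we then have
\begin{equation*}
  \norm{x}_{\hlambda(H_0)} = \norm{w_\pm(\lambda; H_0,H_1) w_\pm(\lambda; H_1,H_0) x}_{\hlambda(H_0)} \leq \norm{w_\pm(\lambda; H_1,H_0) x}_{\hlambda(H_1)} \leq \norm{x}_{\hlambda(H_0)},
\end{equation*}
so $w_\pm(\lambda; H_1,H_0)$ is a surjective isometry (surjectivity follows from the existence of its right inverse), i.e., a unitary operator. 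The adjoint relation $w_\pm^*(\lambda; H_1,H_0) = w_\pm(\lambda; H_0,H_1)$ then follows because for a unitary operator the adjoint coincides with the inverse.

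There is no serious obstacle here: the corollary is a formal consequence of the multiplicative property and the a~priori norm bound $\leq 1$. The only point deserving attention is that the norm bound must be available for \emph{both} $w_\pm(\lambda; H_1,H_0)$ and $w_\pm(\lambda; H_0,H_1)$ (which it is, since the hypothesis on $V = H_1 - H_0$ is symmetric in $H_0$ and $H_1$), so that the two-sided squeeze above forces isometry.
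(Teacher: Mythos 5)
Your proof is correct and follows essentially the same route as the paper, which cites the verbatim argument of \cite[Corollary 5.3.8]{Az3v6} built from exactly the ingredients you use: the multiplicative property, the normalization $w_\pm(\lambda;H_0,H_0)=1_{\hlambda(H_0)}$, and the contraction bound of Proposition~\ref{P: w(pm) is well-defined} applied to both $w_\pm(\lambda;H_1,H_0)$ and $w_\pm(\lambda;H_0,H_1)$. The two-sided inverse plus two-sided norm bound $\leq 1$ squeeze is the intended argument, so nothing further is needed.
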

\noindent Proof of this corollary follows verbatim the proof of
\cite[Corollary 5.3.8]{Az3v6}, but one has to use~(\ref{F:
w(H0,H0)=1}) and Proposition~\ref{P: w(pm) is well-defined}
instead of \cite[Propositions 5.2.3 and 5.2.2]{Az3v6}.

\begin{rems} \rm The argument of this proof works in the case of Hilbert-Schmidt rigging~$F,$ and thus it simplifies
proof the multiplicative property of the wave matrix given in \cite{Az3v6}.
\end{rems}

\subsection{Wave operator}
Let $H_0, H_1$ be operators which satisfy conditions of Proposition~\ref{P: w(pm) is well-defined}.
Clearly, the operator-valued function
$\Lambda(H_0,F) \cap \Lambda(H_1,F) \ni \lambda \mapsto
w_\pm(\lambda; H_1,H_0)$ is measurable. Therefore, as in
\cite{Az3v6}, we can define the wave operator as an operator
$$
  W_\pm(H_1,H_0) \colon \euS(H_0,F) \to \euS(H_1,F)
$$
by the formula
\begin{equation} \label{F: def of W(pm)}
  W_\pm(H_1,H_0) = \int_{\Lambda(H_0,F) \cap \Lambda(H_1,F)}^\oplus w_\pm(\lambda; H_1,H_0)\,d\lambda.
\end{equation}
Since, by Theorem~\ref{T: op-r euE}, the Hilbert space
$\euS(H,F)$ is naturally isomorphic to the absolutely continuous
subspace $\hilb^{(a)}(H)$ of the self-adjoint operator~$H,$ it
follows that the wave operator $W_\pm(H_1,H_0),$ thus defined, can
also be considered as an operator
\begin{equation} \label{F: W(pm) def}
  W_\pm(H_1,H_0) \colon \hilb^{(a)}(H_0) \to \hilb^{(a)}(H_1).
\end{equation}
As an immediate consequence of the definition~(\ref{F: def of W(pm)}), Theorem~\ref{T: mult property of w(pm)} and Corollary~\ref{C: w(pm) is unitary}, we obtain the following
theorem.
\begin{thm} \label{T: properties of W(pm)}
Let $H_0, H_1, H_2$ be three self-adjoint operators which satisfy conditions of Lemma~\ref{L: Lemma Y}.
The wave operator~(\ref{F: W(pm) def}) defined by~(\ref{F: def of W(pm)}) have the following properties:
\begin{enumerate}
  \item[(i)] The operator (\ref{F: W(pm) def}) is unitary.
  \item[(ii)] $W_{\pm}(H_2,H_0) =  W_{\pm}(H_2,H_1)W_{\pm}(H_1,H_0).$
  \item[(iii)] $W^*_{\pm}(H_1,H_0) = W_{\pm}(H_0,H_1).$
  \item[(iv)] The operator $W_{\pm}(H_0,H_0)$ is the identity operator on $\hilb^{(a)}(H_0).$
\end{enumerate}
\end{thm}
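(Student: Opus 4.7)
The plan is to establish each property by lifting the already-proved pointwise statement on fibers to the direct integral $W_\pm(H_1,H_0) = \int^\oplus w_\pm(\lambda;H_1,H_0)\,d\lambda$. The standard principle is that a direct integral of a measurable field of bounded operators inherits algebraic and norm-pointwise properties almost everywhere, provided the underlying $\lambda$-set has full measure.

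First I would note that each of the sets $\Lambda(H_j,F)$ has full Lebesgue measure by the abstract limiting absorption principle, so their pairwise or triple intersections also have full measure; consequently the direct integrals defining $W_\pm(H_1,H_0)$, $W_\pm(H_2,H_1)$ etc.\ are all taken over sets that differ from $\Lambda(H_0,F)$, $\Lambda(H_1,F)$, $\Lambda(H_2,F)$ only by null sets, and under the identifications coming from Theorem~\ref{T: op-r euE} this is harmless. For property (i), Corollary~\ref{C: w(pm) is unitary} gives that $w_\pm(\lambda;H_1,H_0)$ is a unitary $\hlambda(H_0)\to\hlambda(H_1)$ for a.e.\ $\lambda$, and the direct integral of a measurable field of fibrewise unitaries is unitary between the corresponding direct integral Hilbert spaces; composing with the natural isometric identifications $\euS(H_j,F)\cong \hilb^{(a)}(H_j)$ from Theorem~\ref{T: op-r euE} yields (i).

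For (ii), Theorem~\ref{T: mult property of w(pm)} gives the fibrewise identity $w_\pm(\lambda;H_2,H_0) = w_\pm(\lambda;H_2,H_1)w_\pm(\lambda;H_1,H_0)$ for every $\lambda \in \Lambda(H_0,F)\cap\Lambda(H_1,F)\cap\Lambda(H_2,F)$, which is of full measure. Since the composition of two decomposable operators on $\int^\oplus \hlambda\,d\lambda$ is decomposable with fibre equal to the pointwise composition, we may integrate the identity to obtain (ii). Property (iii) is obtained in the same way from the fibrewise identity $w_\pm^*(\lambda;H_1,H_0)=w_\pm(\lambda;H_0,H_1)$ of Corollary~\ref{C: w(pm) is unitary}, using that the adjoint of a direct integral of operators is the direct integral of the adjoints. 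Finally, (iv) follows from (\ref{F: w(H0,H0)=1}), namely $w_\pm(\lambda;H_0,H_0)=1_{\hlambda(H_0)}$ for every $\lambda\in\Lambda(H_0,F)$, whose direct integral is the identity on $\euS(H_0,F)$ and hence, under the identification from Theorem~\ref{T: op-r euE}, the identity on $\hilb^{(a)}(H_0)$.

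There is no genuine obstacle here: the only thing to be careful about is the bookkeeping of $\lambda$-sets of full measure and the identification of $\euS(H_j,F)$ with $\hilb^{(a)}(H_j)$ in Theorem~\ref{T: op-r euE}, so that the operators (\ref{F: W(pm) def}) and (\ref{F: W(pm) def}) considered on absolutely continuous subspaces are the same object. Accordingly I expect the write-up to consist of four short paragraphs, one per item, each quoting the corresponding pointwise result and invoking the standard direct-integral lemma.
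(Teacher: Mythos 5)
Your proposal is correct and follows exactly the route the paper takes: the paper states this theorem as ``an immediate consequence'' of the direct-integral definition~(\ref{F: def of W(pm)}), Theorem~\ref{T: mult property of w(pm)}, Corollary~\ref{C: w(pm) is unitary} and~(\ref{F: w(H0,H0)=1}), which is precisely the fibrewise-to-direct-integral lifting you spell out. Your write-up simply makes explicit the standard decomposable-operator facts that the paper leaves implicit.
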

\noindent If we define the operator (\ref{F: W(pm) def}) to be zero on the
singular subspace $\hilb^{(s)}(H_0),$ then the part~(iv) of this
theorem becomes $W_{\pm}(H_0,H_0) = P^{(a)}(H_0).$ It
follows from (ii) with $H_2 = H_0,$ (iii) and (iv) that
$$
  W_{\pm}(H_1,H_0) = W_{\pm}(H_1,H_0)P^{(a)}(H_0) = P^{(a)}(H_1)W_{\pm}(H_1,H_0).
$$
\begin{thm} For any bounded measurable function~$h$ on~$\mbR$ and any two operators self-adjoint operators $H_0,H_1$
which satisfy conditions of Proposition~\ref{P: w(pm) is well-defined}, we have
$$
  h(H_1) W_{\pm}(H_1,H_0) = W_{\pm}(H_1,H_0) h(H_0).
$$
Also,
$$
  H_1 W_{\pm}(H_1,H_0) = W_{\pm}(H_1,H_0) H_0.
$$
\end{thm}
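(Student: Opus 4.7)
The plan is to transfer the claim to the sheaf picture where both sides become obvious. Under the unitary identifications $\euE(H_j) \colon \hilb^{(a)}(H_j) \to \euS(H_j,F)$ from Theorem~\ref{T: op-r euE}, the definition~(\ref{F: def of W(pm)}) says that $W_\pm(H_1,H_0)$ is the operator of pointwise application of $w_\pm(\lambda;H_1,H_0)$ between the fibers. Meanwhile, by Corollary~\ref{C: euE is diagonalizing}, $h(H_j)$ becomes fiberwise multiplication by the scalar $h(\lambda)$. Since scalars commute with any linear operator between fibers, the intertwining is immediate in the fiber, and integration gives it on the whole space.

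More concretely, for $f \in \hilb^{(a)}(H_0)$ and for a.e.\ $\lambda \in \Lambda(H_0,F) \cap \Lambda(H_1,F)$, we compute
\begin{equation*}
\begin{split}
  \euE_\lambda(H_1)\bigl[W_\pm(H_1,H_0) h(H_0) f\bigr]
   &= w_\pm(\lambda; H_1,H_0)\,\euE_\lambda(H_0)\bigl[h(H_0) f\bigr] \\
   &= w_\pm(\lambda; H_1,H_0)\, h(\lambda)\, \euE_\lambda(H_0) f \\
   &= h(\lambda)\, w_\pm(\lambda; H_1,H_0)\, \euE_\lambda(H_0) f \\
   &= h(\lambda)\, \euE_\lambda(H_1)\bigl[W_\pm(H_1,H_0) f\bigr] \\
   &= \euE_\lambda(H_1)\bigl[h(H_1) W_\pm(H_1,H_0) f\bigr].
\end{split}
\end{equation*}
Since $\euE(H_1)$ is an isometry on $\hilb^{(a)}(H_1)$ and both vectors involved lie in $\hilb^{(a)}(H_1)$ (the range of $W_\pm$ sits in $\hilb^{(a)}(H_1)$ by Theorem~\ref{T: properties of W(pm)}(i) together with the preceding remark $W_\pm = P^{(a)}(H_1) W_\pm$), we conclude $h(H_1)W_\pm(H_1,H_0) f = W_\pm(H_1,H_0) h(H_0) f$. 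For $f \in \hilb^{(s)}(H_0)$ both sides vanish, since $W_\pm$ annihilates $\hilb^{(s)}(H_0)$ and $h(H_0)$ preserves $\hilb^{(s)}(H_0)$, so the identity extends to all of $\hilb$.

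For the unbounded case $H_1 W_\pm = W_\pm H_0$, the cleanest route is again the direct-integral picture: under $\euE(H_j)$ the operator $H_j$ becomes multiplication by $\lambda$ (Theorem~\ref{T: op-r euE}), so $f \in \dom(H_0) \cap \hilb^{(a)}(H_0)$ corresponds to a section $\lambda \mapsto \euE_\lambda(H_0)f$ with $\int \lambda^2 \|\euE_\lambda(H_0) f\|^2\,d\lambda < \infty$. Because $w_\pm(\lambda;H_1,H_0)$ is fiberwise unitary, the same integrability holds for $\lambda \mapsto w_\pm(\lambda;H_1,H_0)\euE_\lambda(H_0)f = \euE_\lambda(H_1)[W_\pm f]$, which means $W_\pm f \in \dom(H_1)$; fiberwise multiplication by $\lambda$ then commutes with fiberwise application of $w_\pm(\lambda)$, yielding $H_1 W_\pm f = W_\pm H_0 f$. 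The only subtle point — the one I expect to be the main obstacle if one tries to avoid the diagonalization picture — is precisely that $W_\pm$ maps $\dom(H_0)\cap \hilb^{(a)}(H_0)$ into $\dom(H_1)$; but in the sheaf formulation this is merely the observation that unitarily equivalent multiplication operators have unitarily related domains.
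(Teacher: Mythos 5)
Your proposal is correct and follows essentially the same route as the paper: the paper's proof is precisely the one-line observation that the claim follows from the direct-integral definition~(\ref{F: def of W(pm)}) of $W_\pm$, the diagonalization Theorem~\ref{T: op-r euE}, and Corollary~\ref{C: euE is diagonalizing}, which is exactly the fiberwise computation you carry out. You have merely written out the details (including the domain issue for the unbounded case) that the paper leaves implicit.
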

\begin{proof} This immediately follows from the definition~(\ref{F: def of W(pm)}) of the wave operator, Theorem~\ref{T: op-r euE} and Corollary~\ref{C: euE is diagonalizing}.
\end{proof}
\noindent Hence, we have the following corollary (generalized
Kato-Rosenblum theorem).
\begin{cor} If $H_0, H_1$ are two self-adjoint operators which satisfy conditions of Proposition~\ref{P: w(pm) is well-defined},
then restrictions of operators~$H_0$ and~$H_1$ to their absolutely continuous subspaces $\hilb^{(a)}(H_0)$
and $\hilb^{(a)}(H_1)$ respectively are unitarily equivalent.
\end{cor}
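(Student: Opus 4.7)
The plan is essentially to read off this corollary from the two theorems immediately preceding it. By Theorem~\ref{T: properties of W(pm)}(i), the wave operator
$$W_\pm(H_1,H_0) \colon \hilb^{(a)}(H_0) \to \hilb^{(a)}(H_1)$$
is unitary. By the intertwining theorem just stated, $H_1 W_\pm(H_1,H_0) = W_\pm(H_1,H_0) H_0$, where these equalities are to be read between the appropriate absolutely continuous subspaces (the wave operator is understood to vanish on the singular subspaces, in accordance with the remark following Theorem~\ref{T: properties of W(pm)}).

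So first I would fix either sign (say~$+$) and set $U = W_+(H_1,H_0)$, viewed as an operator between $\hilb^{(a)}(H_0)$ and $\hilb^{(a)}(H_1)$. Then I would invoke unitarity of $U$ from part~(i) of Theorem~\ref{T: properties of W(pm)}, and the intertwining identity $H_1 U = U H_0$ from the preceding theorem (applied to the operator $h(x) = x$ via the standard reduction, which is how the two displayed identities there are related). Together these say that $U$ conjugates $H_0\big|_{\hilb^{(a)}(H_0)}$ into $H_1\big|_{\hilb^{(a)}(H_1)}$, which is exactly the unitary equivalence claimed.

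There is no real obstacle here since the work has already been done: unitarity is Theorem~\ref{T: properties of W(pm)}(i) and the intertwining property is the immediately preceding theorem. The only mildly delicate point is the bookkeeping about domains --- one must be sure that the identity $H_1 W_\pm = W_\pm H_0$ is read correctly on $\dom(H_0) \cap \hilb^{(a)}(H_0)$ and that $W_\pm$ maps this into $\dom(H_1) \cap \hilb^{(a)}(H_1)$, but this is automatic from the preceding theorem applied to $h(x) = e^{itx}$ (which gives the unitary intertwining $e^{itH_1}W_\pm = W_\pm e^{itH_0}$ and hence, by differentiation on the appropriate dense subspace, the claim on domains of $H_0,H_1$).
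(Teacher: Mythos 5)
Your proposal is correct and is exactly the argument the paper intends: the corollary is stated as an immediate consequence of the unitarity of the wave operator (Theorem~\ref{T: properties of W(pm)}(i)) together with the intertwining relation $H_1 W_\pm(H_1,H_0) = W_\pm(H_1,H_0) H_0$ from the theorem directly preceding it. Your remark on handling domains via $h(x)=e^{itx}$ and differentiation is a reasonable way to make the intertwining of the unbounded operators precise, and does not depart from the paper's route.
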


A natural question is whether definition~(\ref{F: def of W(pm)}) coincides with
classical time-dependent definition of the wave matrix. The answer is positive:
inspection of the proof of \cite[Theorem 6.1.4]{Az3v6} (which itself is an adjustment of appropriate proofs from \cite{Ya}
to our setting) shows that the following theorem holds.
\begin{thm} \label{T: class-l def of W(pm)} If $H_0$ and $H_1$ are two self-adjoint operators which satisfy conditions of Proposition~\ref{P: w(pm) is well-defined},
then the strong operator limit
$$
  \lim_{t \to \pm \infty} e^{itH_1}e^{-itH_0}P_0^{(a)}
$$
exists and coincides with $\euE^*(H_1)W_\pm(H_1,H_0)\euE(H_0).$
\end{thm}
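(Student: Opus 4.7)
The plan is to adapt the argument of \cite[Theorem 6.1.4]{Az3v6} --- which in turn adapts the stationary approach of \cite[\S 5]{Ya} --- to the present generalized-rigging setting, using the cut-off apparatus developed in Section~\ref{S: s.a. op-rs on rigged Hilbert spaces}. Since $U(t) := e^{itH_1}e^{-itH_0}P^{(a)}(H_0)$ is a contraction uniformly in~$t$, and the target operator $\euE^*(H_1)W_\pm(H_1,H_0)\euE(H_0)$ is bounded by Theorem~\ref{T: op-r euE} and Corollary~\ref{C: w(pm) is unitary}, it suffices by density and a $3\eps$-argument to verify weak convergence of $\scal{g}{U(t)f}_\hilb$ as $t \to \pm\infty$ for $f, g$ in the dense subspace $\hilb_1(F) \subset \hilb$.

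For the $+$ case, I introduce the Abel mean
$$M_\eps(f,g) := 2\eps\int_0^\infty e^{-2\eps t}\scal{g}{U(t)f}_\hilb\,dt,\qquad \eps > 0,$$
and compute it via resolvents. A standard spectral Parseval identity (followed by an elementary contour-integral computation) yields
$$M_\eps(f,g) = \frac{\eps}{\pi}\int_\mbR \scal{R_{\lambda+i\eps}(H_1) g}{R_{\lambda+i\eps}(H_0) f}_\hilb\,d\lambda,$$
which by~(\ref{F: Ya (2.7.10)}) rewrites as
$$\frac{1}{\pi}\int_\mbR \scal{g}{\Im R_{\lambda+i\eps}(H_1)\bigl[1 + V R_{\lambda+i\eps}(H_0)\bigr] f}_\hilb\,d\lambda.$$
By Definition~\ref{D: Lambda(H,F)} and~(\ref{D: of mathfrak a pm}), the integrand converges as $\eps \to 0^+$ on the full-measure set $\Lambda(H_0,F) \cap \Lambda(H_1,F)$ to $\scal{g}{\mathfrak a_+(\lambda; H_1, H_0) f}_{1,-1}$; by Proposition~\ref{P: delta(H)=euE*euE} together with~(\ref{F: def of w +-}) this limit equals $\scal{\euE_\lambda(H_1) g}{w_+(\lambda; H_1, H_0)\euE_\lambda(H_0) f}_{\hlambda}$. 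Integrating and invoking~(\ref{F: def of W(pm)}) together with the isometric diagonalization of Theorem~\ref{T: op-r euE} identifies the resulting integral with $\scal{g}{\euE^*(H_1)W_+(H_1,H_0)\euE(H_0) f}_\hilb$, which is the desired right-hand side.

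Two technical steps remain. To justify interchanging the $\eps\to 0^+$ limit with the $\lambda$-integral, I localize: for each bounded open $\Delta \in \euB$, Lemma~\ref{L: Im R=Im R(Delta)} and Proposition~\ref{P: Delta to R} reduce the contribution from $\Delta$ to a trace-norm estimate for the Hilbert--Schmidt cut-off rigging $F_\Delta$ on $E_\Delta\hilb$, for which Definition~\ref{D: Lambda(H,F)} supplies the required continuity up to the real axis; the contribution from outside $\Delta$ is controlled by elementary resolvent bounds and vanishes as $\Delta \nearrow \mbR$. Finally, the Abel convergence of $M_\eps$ is promoted to the genuine strong limit of $U(t)f$ via a Tauberian argument exploiting the uniform bound $\norm{U(t)} \leq 1$, identical to the one at the end of \cite[Theorem 6.1.4]{Az3v6}. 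The $-$ case is symmetric.

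The main obstacle is the first of these two points: the operator $\mathfrak a_+(\lambda; H_1, H_0) \colon \hilb_1(F) \to \hilb_{-1}(F)$ is only guaranteed to be compact, not uniformly trace-class in~$\lambda$, so no naive dominated-convergence argument is available at the level of the full rigging~$F$. The localization via $E_\Delta$ from Section~\ref{S: s.a. op-rs on rigged Hilbert spaces} is precisely what restores access to the Hilbert--Schmidt machinery of \cite{Az3v6}; once the cut-off estimates are assembled, the remainder of the argument transfers mutatis mutandis from \cite[\S 6.1]{Az3v6}.
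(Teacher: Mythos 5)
There is a genuine gap at the final step. Your identification of the Abel mean $M_\eps(f,g)$ with $\scal{g}{\euE^*(H_1)W_+(H_1,H_0)\euE(H_0)f}$ is the standard stationary computation and is essentially sound (including the need to localize via $E_\Delta$ to justify the limit under the integral sign). But Abel convergence of $\scal{g}{U(t)f}$ cannot be ``promoted to the genuine strong limit via a Tauberian argument exploiting the uniform bound $\norm{U(t)}\le 1$'': boundedness is not a Tauberian condition. A bounded function can be Abel-summable without converging, and even adding that the Abel limit $W_\pm$ is an isometry only upgrades weak-abelian to \emph{strong-abelian} convergence, i.e. $2\eps\int_0^\infty e^{-2\eps t}\norm{U(t)f - W_\pm f}^2\,dt \to 0$, which does not force $\norm{U(t)f - W_\pm f}\to 0$. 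There is no such argument ``at the end of \cite[Theorem 6.1.4]{Az3v6}'' to appeal to.

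What actually produces the genuine limit is Cook's method combined with the $F$-smoothness estimate: for $g$ in a dense set one needs $\int_{-\infty}^\infty \norm{F e^{-itH_0}P_0^{(a)}g}^2\,dt < \infty$ (and likewise for $H_1$), so that $\frac{d}{dt}\scal{h}{e^{itH_1}e^{-itH_0}g} = i\scal{Fe^{-itH_1}h}{JFe^{-itH_0}g}$ is integrable in $t$ by Cauchy--Schwarz, whence $\scal{h}{U(t)g}$ converges as $t\to\pm\infty$; only then does the Abel computation serve to \emph{identify} the limit. In the present setting $F$ is not Hilbert--Schmidt, so \cite[Lemma 6.1.1]{Az3v6} does not apply directly, and the entire content of the paper's proof is the replacement of that lemma by its localized version: if $\norm{\euE_\lambda(H_0)g}_\hlambda \le N$ a.e. and $\euE_\lambda(H_0)g = 0$ outside a bounded open $\Delta$, then $\int_{-\infty}^\infty \norm{Fe^{-itH_0}P_0^{(a)}g}^2\,dt \le 2\pi N^2\norm{FE^{H_0}_\Delta}_2^2$, obtained by applying the Hilbert--Schmidt case to $H_0E^{H_0}_\Delta$ with rigging $F_\Delta = FE^{H_0}_\Delta$, together with the density of such $g$. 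Your proposal directs the localization effort at the dominated-convergence interchange in the $\lambda$-integral (a real but secondary issue) and omits the smoothness estimate entirely, which is precisely the step that both requires adaptation and makes the theorem true.
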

\noindent
Proof of this theorem is the same as that of \cite[Theorem 6.1.4]{Az3v6} with only one change:
\cite[Lemma 6.1.1]{Az3v6} should be replaced by the following lemma:
if $g \in \hilb$ is such that $\norm{\euE_\lambda(H_0) g}_\hlambda \leq N$ for a.e. $\lambda \in \Lambda(H_0,F)$
and $\euE_\lambda(H_0) g = 0$ for a.e. $\lambda \in \Lambda(H_0,F) \setminus \Delta,$ where~$\Delta$ is a bounded open set,
then
$$
  \int_{-\infty}^\infty \norm{F e^{-itH_0}P_0^{(a)} g}^2 \,dt \leq 2\pi N^2 \norm{FE_\Delta^{H_0}}_2^2.
$$
This lemma itself follows from \cite[Lemma 6.1.1]{Az3v6} applied to the self-adjoint operator $H_0E_\Delta^{H_0}$ acting on the Hilbert
space $E_\Delta^{H_0}\hilb$ with Hilbert-Schmidt rigging $F_\Delta = FE_\Delta^{H_0}.$
The set of vectors $g$ which satisfy the premise of this lemma is dense in $\hilb,$ and this is what is used in the proof of
\cite[Theorem 6.1.4]{Az3v6}.

\section{Scattering matrix}
\label{S: scat matrix}
As in \cite[Section 7]{Az3v6}, given a number $\lambda \in
\Lambda(H_0,F) \cap \Lambda(H_1,F),$ we define the scattering
matrix $S(\lambda; H_1,H_0)$ as an operator on the Hilbert space
$\hlambda(H_0)$ by the formula
\begin{equation} \label{F: def of S(lambda)}
  S(\lambda; H_1,H_0) = w_+^*(\lambda; H_1,H_0)w_-(\lambda; H_1,H_0).
\end{equation}
\noindent Just as in \cite[Section 7]{Az3v6}, we list here some
properties of the scattering matrix, which immediately follow from
this definition and the properties of the wave matrix already
established.
\begin{thm} \label{T: properties of S(lambda)}
Let $H_0,H_1,H_2$ be three self-adjoint operators compatible with rigging~$F$
on~$\hilb,$ which satisfy conditions of Lemma~\ref{L: Lemma Y}.
Let $\lambda \in \Lambda(H_0,F) \cap \Lambda(H_1,F) \cap \Lambda(H_2,F).$ The scattering matrix~(\ref{F: def of S(lambda)})
possesses the following properties.
  \begin{enumerate}
     \item[(i)] The scattering matrix~(\ref{F: def of S(lambda)}) is a unitary operator.
     \item[(ii)] $S(\lambda; H_{2},H_0) = w_+^*(\lambda;H_1,H_0)S(\lambda; H_{2},H_1)w_-(\lambda;H_1,H_0).$
     \item[(iii)] $S(\lambda; H_{2},H_0) = w_+^*(\lambda;H_1,H_0)S(\lambda; H_{2},H_1)w_+(\lambda;H_1,H_0) S(\lambda;H_1,H_0).$
  \end{enumerate}
\end{thm}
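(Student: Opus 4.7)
The plan is to derive all three parts directly from the definition $S(\lambda; H_1, H_0) = w_+^*(\lambda; H_1, H_0) w_-(\lambda; H_1, H_0)$, using only two already-established facts about the wave matrix: unitarity (Corollary~\ref{C: w(pm) is unitary}) and the multiplicative property (Theorem~\ref{T: mult property of w(pm)}).

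For (i), I would observe that $w_-(\lambda; H_1, H_0) \colon \hlambda(H_0) \to \hlambda(H_1)$ is unitary by Corollary~\ref{C: w(pm) is unitary}, and that $w_+^*(\lambda; H_1, H_0) = w_+(\lambda; H_0, H_1) \colon \hlambda(H_1) \to \hlambda(H_0)$ is unitary by the same corollary. The scattering matrix is the composition of these two unitaries, hence unitary on $\hlambda(H_0)$.

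For (ii), I would apply Theorem~\ref{T: mult property of w(pm)} to both signs simultaneously, obtaining $w_\pm(\lambda; H_2, H_0) = w_\pm(\lambda; H_2, H_1) w_\pm(\lambda; H_1, H_0)$. Taking the adjoint of the $+$ equation gives $w_+^*(\lambda; H_2, H_0) = w_+^*(\lambda; H_1, H_0) w_+^*(\lambda; H_2, H_1)$, and substituting both into the definition of $S(\lambda; H_2, H_0)$ yields
\[
S(\lambda; H_2, H_0) = w_+^*(\lambda; H_1, H_0)\bigl[w_+^*(\lambda; H_2, H_1) w_-(\lambda; H_2, H_1)\bigr] w_-(\lambda; H_1, H_0),
\]
and the bracketed expression is exactly $S(\lambda; H_2, H_1)$ by the definition~(\ref{F: def of S(lambda)}).

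For (iii), I would start from (ii) and insert the identity $w_+(\lambda; H_1, H_0) w_+^*(\lambda; H_1, H_0) = 1_{\hlambda(H_1)}$ (valid by unitarity) between $S(\lambda; H_2, H_1)$ and $w_-(\lambda; H_1, H_0)$. This produces a trailing factor $w_+^*(\lambda; H_1, H_0) w_-(\lambda; H_1, H_0)$, which is $S(\lambda; H_1, H_0)$ by definition. There is no real obstacle here, since both deep ingredients (unitarity and multiplicativity) are already available; the proof is purely an algebraic manipulation of compositions of unitary maps between fibers of the sheaves $\euS(H_j, F)$, and one only needs to keep track of which fiber each factor lives on.
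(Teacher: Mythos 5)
Your proposal is correct and follows exactly the route the paper intends: the paper simply notes that these properties ``immediately follow'' from the definition~(\ref{F: def of S(lambda)}) together with Corollary~\ref{C: w(pm) is unitary} and Theorem~\ref{T: mult property of w(pm)}, and your algebraic manipulations (composition of unitaries for (i), multiplicativity plus adjoints for (ii), inserting $w_+w_+^*=1_{\hlambda(H_1)}$ for (iii)) are precisely the omitted details, with the fibers tracked correctly.
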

\noindent Another important property of the scattering matrix is the stationary formula.
\begin{thm} \label{T: stationary formula} Let~$H_0$ and~$H_1$ be two self-adjoint operators which satisfy conditions of Proposition~\ref{P: w(pm) is well-defined}.
If $\lambda \in \Lambda(H_0,F) \cap \Lambda(H_1,F),$ then
$$
  S(\lambda; H_1, H_0) = 1_\hlambda - 2\pi i \euE_\lambda(H_0) V (1 + R_{\lambda+i0}(H_0)V)^{-1}\euE_\lambda^\diamondsuit(H_0),
$$
where $\euE_\lambda^\diamondsuit(H_0)$ is defined by (\ref{F: def of diamond}).
\end{thm}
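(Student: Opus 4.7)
The plan is to use the definition $S(\lambda;H_1,H_0)=w_+^*(\lambda;H_1,H_0)w_-(\lambda;H_1,H_0)$, so that $S-1_\hlambda = w_+^*(w_- - w_+)$, and then compute the two factors separately using the definition~(\ref{F: def of w +-}) of the wave matrices. I would begin by verifying that every operator appearing in the statement is well-defined on the appropriate rigged space: since $\lambda\in\Lambda(H_0,F)$, the operator $R_{\lambda+i0}(H_0)\colon\hilb_1\to\hilb_{-1}$ exists; composing with $V\colon\hilb_{-1}\to\hilb_1$ one gets a compact operator $R_{\lambda+i0}(H_0)V\colon\hilb_{-1}\to\hilb_{-1}$, and the invertibility of $1+R_{\lambda+i0}(H_0)V$ follows (via the Hilbert identity) from the existence of $R_{\lambda+i0}(H_1)$ since $\lambda\in\Lambda(H_1,F)$; more precisely, one has the algebraic identity $(1+VR_{\lambda+i0}(H_0))^{-1}=1-VR_{\lambda+i0}(H_1)$.

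Next I would compute $\mathfrak a_-(\lambda;H_1,H_0)-\mathfrak a_+(\lambda;H_1,H_0)$. Using the expression $\mathfrak a_\pm=\frac1\pi\Im R_{\lambda+i0}(H_1)(1+VR_{\lambda\pm i0}(H_0))$ from~(\ref{D: of mathfrak a pm}) together with $R_{\lambda+i0}(H_0)-R_{\lambda-i0}(H_0)=2i\Im R_{\lambda+i0}(H_0)$, one obtains
\[
  \mathfrak a_-(\lambda;H_1,H_0)-\mathfrak a_+(\lambda;H_1,H_0)=-\frac{2i}\pi\,\Im R_{\lambda+i0}(H_1)\,V\,\Im R_{\lambda+i0}(H_0).
\]
Now invoking Proposition~\ref{P: delta(H)=euE*euE} twice, $\frac1\pi\Im R_{\lambda+i0}(H_j)=\euE_\lambda^\diamondsuit(H_j)\euE_\lambda(H_j)$, and the definition~(\ref{F: def of w +-}) of $w_\pm$, I deduce that for every $g\in\hilb_1(F)$
\[
  (w_-(\lambda;H_1,H_0)-w_+(\lambda;H_1,H_0))\,\euE_\lambda(H_0)g
  = -2\pi i\,\euE_\lambda(H_1)\,h,
\]
where $h:=V\euE_\lambda^\diamondsuit(H_0)\euE_\lambda(H_0)g\in\hilb_1(F)$ (since $V\colon\hilb_{-1}\to\hilb_1$).

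Having reduced the problem to applying $w_+^*=w_+(\lambda;H_0,H_1)$ (by Corollary~\ref{C: w(pm) is unitary}) to $\euE_\lambda(H_1)h$, I use the defining identity~(\ref{F: def of w +-}) with roles of $H_0,H_1$ swapped:
\[
  \scal{\euE_\lambda(H_0)f}{w_+(\lambda;H_0,H_1)\euE_\lambda(H_1)h}_{\hlambda(H_0)}=\scal{f}{\mathfrak a_+(\lambda;H_0,H_1)h}_{1,-1},
\]
valid for all $f\in\hilb_1(F)$. Combining the two displayed formulas and again using Proposition~\ref{P: delta(H)=euE*euE} to rewrite $\frac1\pi\Im R_{\lambda+i0}(H_0)=\euE_\lambda^\diamondsuit(H_0)\euE_\lambda(H_0)$, together with the form $\mathfrak a_+(\lambda;H_0,H_1)=\frac1\pi\Im R_{\lambda+i0}(H_0)\,(1-VR_{\lambda+i0}(H_1))$ and the identity $(1-VR_{\lambda+i0}(H_1))=(1+VR_{\lambda+i0}(H_0))^{-1}$ noted above, I arrive at
\[
  \scal{\euE_\lambda(H_0)f}{(S-1_\hlambda)\euE_\lambda(H_0)g}_{\hlambda(H_0)}
  = -2\pi i\,\scal{\euE_\lambda(H_0)f}{\euE_\lambda(H_0)(1+VR_{\lambda+i0}(H_0))^{-1}V\euE_\lambda^\diamondsuit(H_0)\euE_\lambda(H_0)g}_{\hlambda(H_0)}.
\]

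The last step is the purely algebraic identity $(1+VR_{\lambda+i0}(H_0))^{-1}V=V(1+R_{\lambda+i0}(H_0)V)^{-1}$, which puts the right hand side into the form of the theorem. Since $\euE_\lambda(H_0)\hilb_1(F)$ is dense in $\hlambda(H_0)$, the formula for $S(\lambda;H_1,H_0)$ as a bounded operator on $\hlambda(H_0)$ follows. The only delicate point I anticipate is bookkeeping of the spaces $\hilb_{\pm1}(F)$ at each step so that every composition (in particular $V(1+R_{\lambda+i0}(H_0)V)^{-1}\euE_\lambda^\diamondsuit(H_0)\colon\hlambda(H_0)\to\hilb_1(F)$) lands in a space where the next operator is defined; this is routine but essential for the formal manipulations to be legitimate.
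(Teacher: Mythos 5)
Your proof is correct, and it is essentially the same stationary-method computation that the paper defers to \cite[Theorem 7.2.2]{Az3v6}: the identities $\mathfrak a_- - \mathfrak a_+ = -\tfrac{2i}{\pi}\Im R_{\lambda+i0}(H_1)V\Im R_{\lambda+i0}(H_0)$, Proposition~\ref{P: delta(H)=euE*euE}, the inversion $(1+VR_{\lambda+i0}(H_0))^{-1}=1-VR_{\lambda+i0}(H_1)$, and the commutation $(1+VR)^{-1}V=V(1+RV)^{-1}$ are exactly the ingredients of that argument, and your intermediate steps amount to the intertwining relation $w_\pm(\lambda;H_1,H_0)\euE_\lambda(H_0)=\euE_\lambda(H_1)[1+VR_{\lambda\pm i0}(H_0)]$ which the paper records as a by-product of the omitted proof. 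The density of $\euE_\lambda(H_1)\hilb_1(F)$ in $\hlambda(H_1)$, which you use to pass from the form identity to the operator identity, holds by the construction of $\hlambda(\Delta)$ as $\overline{\euE_\lambda^\Delta\hilb_1(\Delta)}$, so no gap remains.
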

\noindent Recall that in this theorem~$V$ is understood as a bounded operator
acting from $\hilb_{-1}(F)$ to $\hilb_1(F)$ and
$R_{\lambda+i0}(H_0)$ is understood as a compact operator acting from
$\hilb_{1}(F)$ to $\hilb_{-1}(F).$ 
Proof of this stationary formula follows verbatim the proof of \cite[Theorem 7.2.2]{Az3v6}
and therefore is omitted. Further, as a by-product of the proof of this theorem, as in \cite{Az3v6}
one obtains the following formulas for the wave matrices
$$
  w_\pm(\lambda;H_1,H_0) \euE_\lambda(H_0) = \euE_\lambda(H_1)[1+VR_{\lambda\pm i0}(H_0)].
$$
\begin{cor} If $H_0,H_1$ are two self-adjoint operators which satisfy conditions of Proposition~\ref{P: w(pm) is well-defined},
and $\lambda \in \Lambda(H_0,F) \cap \Lambda(H_1,F),$ then $S(\lambda; H_1,H_0) \in 1 + \clL_1(\hlambda(H_0)).$
\end{cor}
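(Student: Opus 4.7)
The plan is to read off the result directly from the stationary formula of Theorem~\ref{T: stationary formula} by tracking trace ideal membership through the factorization
$$
  S(\lambda; H_1,H_0) - 1_\hlambda = -2\pi i\,\euE_\lambda(H_0) \circ \bigl[V(1+R_{\lambda+i0}(H_0)V)^{-1}\bigr]\circ \euE_\lambda^\diamondsuit(H_0).
$$
The strategy is to exhibit the right-hand side as a composition of the shape \emph{Hilbert--Schmidt} $\circ$ \emph{bounded} $\circ$ \emph{Hilbert--Schmidt}, which is automatically trace class on $\hlambda(H_0)$.

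First I would verify the types. By Proposition~\ref{P: euE is HS}, $\euE_\lambda(H_0) \colon \hilb_1(F) \to \hlambda(H_0,F)$ is Hilbert--Schmidt; its diamond-adjoint $\euE_\lambda^\diamondsuit(H_0) \colon \hlambda(H_0,F) \to \hilb_{-1}(F)$ is also Hilbert--Schmidt, since it is obtained from the ordinary Hilbert-space adjoint by composition with the isometric isomorphism $|F|^{-2} \colon \hilb_1(F) \to \hilb_{-1}(F)$ (which plays the role implicit in the formula $A^\diamondsuit = |F|^{-2}A^*$ from Section~\ref{S: generalized rigging}). The operator $V$ is bounded $\hilb_{-1}(F) \to \hilb_1(F)$ via $V=F^*JF$, and $R_{\lambda+i0}(H_0)$ is a (compact) bounded operator $\hilb_1(F) \to \hilb_{-1}(F)$ by Lemma~\ref{L: FRF* is compact} together with existence of the norm limit guaranteed by $\lambda \in \Lambda(H_0,F)$. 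Hence $R_{\lambda+i0}(H_0)V$ acts boundedly on $\hilb_{-1}(F)$; its invertibility (i.e.\ the existence of $(1+R_{\lambda+i0}(H_0)V)^{-1}$ on $\hilb_{-1}(F)$) is already built into the statement of Theorem~\ref{T: stationary formula}, so the middle factor $M := V(1+R_{\lambda+i0}(H_0)V)^{-1}$ is a bounded operator $\hilb_{-1}(F) \to \hilb_1(F)$.

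With these types fixed, the composition becomes
$$
  \hlambda(H_0,F) \xrightarrow{\ \euE_\lambda^\diamondsuit(H_0)\ } \hilb_{-1}(F) \xrightarrow{\ M\ } \hilb_1(F) \xrightarrow{\ \euE_\lambda(H_0)\ } \hlambda(H_0,F).
$$
The key trace-ideal step is: the product $M \circ \euE_\lambda^\diamondsuit(H_0) \colon \hlambda \to \hilb_1(F)$ is Hilbert--Schmidt (bounded times Hilbert--Schmidt), and then $\euE_\lambda(H_0) \circ (M \circ \euE_\lambda^\diamondsuit(H_0))$ is the composition of two Hilbert--Schmidt operators, hence trace class on $\hlambda(H_0,F)$. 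Multiplication by the scalar $-2\pi i$ preserves this, and Theorem~\ref{T: stationary formula} then yields $S(\lambda; H_1,H_0) \in 1+\clL_1(\hlambda(H_0))$.

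The only non-routine point is the bookkeeping of the three Hilbert spaces $\hilb_{\pm 1}(F)$ and $\hlambda$ and the fact that $\euE_\lambda^\diamondsuit$ is Hilbert--Schmidt as a map into $\hilb_{-1}(F)$ (rather than into $\hilb_1(F)$, which would be the usual adjoint). Once the diamond-adjoint is recognized as the Hilbert-space adjoint followed by the bounded isomorphism $\hilb_1(F)\to\hilb_{-1}(F)$, the argument reduces to the elementary ideal identity $\clL_2\cdot\clB\cdot\clL_2\subset\clL_1$.
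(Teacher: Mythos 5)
Your proposal is correct and follows essentially the same route as the paper: both read the result off the stationary formula of Theorem~\ref{T: stationary formula}, use Proposition~\ref{P: euE is HS} to see that $\euE_\lambda(H_0)$ and hence $\euE_\lambda^\diamondsuit(H_0)$ are Hilbert--Schmidt, and conclude via $\clL_2\cdot\clB\cdot\clL_2\subset\clL_1$. Your extra care with the diamond-adjoint (as the ordinary adjoint composed with the bounded isomorphism $\hilb_1(F)\to\hilb_{-1}(F)$) just makes explicit what the paper leaves implicit.
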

\begin{proof} This follows from Theorem~\ref{T: stationary formula} and the fact that, by Proposition~\ref{P: euE is HS},
the evaluation operator $\euE_\lambda(H_0) \colon \hilb_1(F) \to
\hlambda(H_0)$ is Hilbert-Schmidt, and therefore so is the
operator $\euE_\lambda^\diamondsuit(H_0) \colon \hlambda(H_0) \to
\hilb_{-1}(F).$
\end{proof}

\subsection{Scattering operator}
We define the scattering operator as an operator
$$
  \mathbf S(H_1,H_0) \colon \euS(H_0,F) \to \euS(H_0,F),
$$ given by equality
\begin{equation} \label{F: def of bfS}
  \mathbf S(H_1,H_0) = \int_{\Lambda(H_0,F) \cap \Lambda(H_1,F)}^\oplus S(\lambda; H_1,H_0)\,d\lambda.
\end{equation}
The usual definition of the scattering operator
$$
  \mathbf S(H_1,H_0) = W_+^*(H_1,H_0)W_-(H_1,H_0)
$$
follows from definitions of the wave operators (\ref{F: def of W(pm)}) and the scattering matrix (\ref{F: def of S(lambda)}).
Obviously, for scattering operator we have analogues of properties
of the scattering matrix given in Theorem~\ref{T: properties of
S(lambda)}, similar to those in \cite[Theorem 7.1.3]{Az3v6}. In
particular, the scattering operator $\mathbf S(H_1,H_0)$ is
unitary and commutes with $H_0.$

\section{Example}
\label{S: example}
Recall that the class of potentials~$K_\nu$ on $\mbR^\nu$
\cite[p.\,453]{Si82BAMS} is defined as follows: a real-valued
measurable function~$V$ belongs to~$K_\nu$ if and only if
\\ (a) if $\nu\geq 3$
$$
  \lim_{\alpha \downarrow 0} \SqBrs{\sup_x \int_{\abs{x-y}\leq \alpha} \abs{x-y}^{-(\nu-2)}\abs{V(y)}\,d^\nu y} = 0,
$$
\\ (b) if $\nu = 2$
$$
  \lim_{\alpha \downarrow 0} \SqBrs{\sup_x \int_{\abs{x-y}\leq \alpha} \ln\brs{\abs{x-y}^{-1}} \abs{V(y)}\,d^2 y} = 0,
$$
\\ (c) if $\nu = 1$
$$
  \sup_x \int_{\abs{x-y}\leq 1} \abs{V(y)}\,dy < \infty.
$$
In particular, $L^\infty(\mbR^\nu,\mbR) \subset K_\nu.$
A potential~$V$ belongs to $K_\nu^{loc}$ if $V \chi_{R} \in
K_\nu,$ where $\chi_R$ is the characteristic function of the ball
$\set{x \colon \abs{x} \leq R}.$
The following theorem provides a large class of compatible pairs
$(H,F)$ among Schr\"odinger operators.
\begin{thm}\cite[Theorem B.9.1]{Si82BAMS}
Let $\nu$ be any positive integer. Let $H = -\Delta + V(x)$ be a
Schr\"odinger operator on $L^2(\mbR^\nu)$ with potential~$V$
satisfying $V_- \in K_\nu,$ $V_+ \in K_\nu^{loc}.$ Let~$f$ be a
bounded Borel function on the spectrum of~$H$ obeying $\abs{f(x)} \leq C (1+\abs{x})^{-\alpha}
$ for some $\alpha > \nu/4.$ If $g \in L^2(\mbR^\nu),$ then
$g(x)f(H)$ is Hilbert-Schmidt.
\end{thm}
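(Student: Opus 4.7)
The plan is to reduce the Hilbert-Schmidt property of $g(x)f(H)$ to a heat-kernel bound on $e^{-tH}$, and then use an integral representation of $f(H)$ in terms of the semigroup. Since the hypothesis on $V$ is precisely Simon's Kato-class assumption, the Feynman-Kac representation
$$
  e^{-tH}(x,y) = E_{x,y}^{t} \SqBrs{e^{-\int_0^t V(B_s)\,ds}} \cdot (4\pi t)^{-\nu/2} e^{-\abs{x-y}^2/(4t)}
$$
combined with Khas'minskii's lemma (for the $V_-$ part) and local cutoff plus a domination argument (for the $V_+ \in K_\nu^{loc}$ part) yields a pointwise bound of the form
$$
  0 \leq e^{-tH}(x,y) \leq C_t := C\,t^{-\nu/2} e^{\omega t}, \qquad t>0,
$$
which is what I will take from Simon's \emph{Schr\"odinger semigroups}. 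Without loss of generality I assume $H$ is bounded below (replace $H$ by $H+c$ and $f$ by $f(\cdot - c)$), so the bound simplifies on bounded time intervals.

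First I would estimate $\norm{g\cdot e^{-tH}}_{\mathrm{HS}}$ for $g \in L^2(\mbR^\nu)$. The squared HS-norm is $\int \abs{g(x)}^2 \int \abs{e^{-tH}(x,y)}^2 dy\,dx = \int \abs{g(x)}^2 e^{-2tH}(x,x)\,dx,$ and the diagonal bound $e^{-2tH}(x,x) \leq C'\,t^{-\nu/2} e^{2\omega t}$ gives
$$
  \norm{g\cdot e^{-tH}}_{\mathrm{HS}} \leq C''\,\norm{g}_{L^2}\,t^{-\nu/4}\,e^{\omega t}.
$$
Next I would use the Laplace representation of the negative power of $c+H$ (choosing $c$ so that $c+H \geq 1$):
$$
  (c+H)^{-\alpha} = \frac{1}{\Gamma(\alpha)} \int_0^\infty t^{\alpha-1} e^{-t(c+H)}\,dt.
$$
Applying $g(x)$ on the left and using the previous HS-estimate under the integral sign, the HS-norm of $g\cdot (c+H)^{-\alpha}$ is bounded by $\frac{C''\norm{g}_{L^2}}{\Gamma(\alpha)} \int_0^\infty t^{\alpha-1-\nu/4} e^{-(c-\omega)t}\,dt,$ which converges precisely under the condition $\alpha > \nu/4$.

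Finally, I would write $f(H) = h(H)(c+H)^{-\alpha}$ where $h(x) = (c+x)^\alpha f(x)$ is bounded on $\spec(H) \subset [-c+1,\infty)$ by the hypothesis $\abs{f(x)} \leq C(1+\abs{x})^{-\alpha}$. Since $h(H)$ and $(c+H)^{-\alpha}$ commute, I obtain
$$
  g\cdot f(H) = \SqBrs{g\cdot (c+H)^{-\alpha}} h(H),
$$
a product of a Hilbert-Schmidt operator and a bounded operator, hence Hilbert-Schmidt. The genuinely hard step is the pointwise heat-kernel bound on $e^{-tH}(x,y)$ for Kato-class potentials; once it is in hand, the rest is a bookkeeping computation, and the condition $\alpha > \nu/4$ drops out naturally from the convergence of the gamma-type integral near $t=0$.
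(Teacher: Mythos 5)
Your argument is correct, but note that the paper itself offers no proof of this statement: it is quoted verbatim as Theorem B.9.1 of Simon's \emph{Schr\"odinger semigroups} and used as a black box to produce compatible pairs $(H,F)$. Your proof is essentially the standard (and, as far as I recall, Simon's own) route: the $L^1\to L^\infty$ semigroup bound for Kato-class potentials gives $\norm{g\,e^{-tH}}_{\mathrm{HS}}\leq C\norm{g}_{L^2}t^{-\nu/4}e^{\omega t}$ via $\int e^{-tH}(x,y)^2\,dy=e^{-2tH}(x,x)$, the Gamma-integral representation of $(c+H)^{-\alpha}$ then shows $g\,(c+H)^{-\alpha}$ is Hilbert--Schmidt exactly when $\alpha>\nu/4$, and the factorization $f(H)=(c+H)^{-\alpha}h(H)$ with $h$ bounded on $\spec(H)$ finishes the job. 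The one honest caveat, which you flag yourself, is that the pointwise heat-kernel bound for $V_-\in K_\nu$, $V_+\in K_\nu^{\mathrm{loc}}$ is itself a nontrivial theorem from the same source, so the proof is not self-contained; since the statement being proved is already an import from that paper, this is an acceptable dependency, but you should cite the semigroup bound (Simon's Theorem B.1.1) explicitly rather than only sketching its Feynman--Kac/Khas'minskii provenance.
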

\noindent It follows from this theorem that if~$F$ is multiplication
by $g(x) \in L^2(\mbR^\nu) \cap L^\infty(\mbR^\nu)$ where
$g(x) \neq 0$ for a.e. $x \in \mbR,$
then 
for any dimension $\nu$ the operator $F E^H_\Delta$ is Hilbert-Schmidt.
Hence, we have, in particular, a new proof of the following theorem as a corollary of part (i) of Theorem~\ref{T: properties of W(pm)}.
\begin{thm} Let $\nu$ be a positive integer and let~$H_0$ be self-adjoint extension of the differential operator $-\Delta + V_0$ on $\mbR^\nu,$
where $V_0$ is a potential satisfying $V_- \in K_\nu,$ $V_+ \in K_\nu^{loc}.$
For any self-adjoint operator~$H_1$ such that $H_1-H_0 \in L^\infty(\mbR^\nu) \cap L^1(\mbR^\nu),$ the wave operators $W_\pm(H_1,H_0)$
exist and complete.
\end{thm}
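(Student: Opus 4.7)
The plan is to realize the pair $(H_0, H_1)$ as an instance of the abstract framework developed in Sections~\ref{S: wave matrix}--\ref{S: scat matrix} by exhibiting a single rigging $F$ on $\hilb = L^2(\mbR^\nu)$ that (a) makes both pairs $(H_0,F)$ and $(H_1,F)$ compatible in the sense of Definition~\ref{D: compatible (H,F)}, and (b) admits a factorization $V := H_1 - H_0 = F^*JF$ with bounded self-adjoint $J.$ Once this is done, the conclusion follows by combining Theorem~\ref{T: properties of W(pm)}(i) with Theorem~\ref{T: class-l def of W(pm)}.

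For the construction of $F,$ I would choose a strictly positive auxiliary function $\phi \in L^1(\mbR^\nu) \cap L^\infty(\mbR^\nu)$ (for instance $\phi(x) = (1+|x|^2)^{-\nu}$) and set
$$
  g(x) := \sqrt{|V(x)| + \phi(x)}.
$$
Since $V \in L^1 \cap L^\infty,$ we have $g^2 \in L^1 \cap L^\infty,$ hence $g \in L^2 \cap L^\infty,$ and $g(x) > 0$ everywhere. Let $F \colon \hilb \to \hilb$ be the operator of multiplication by $g$ (so $\clK = L^2(\mbR^\nu)$); its kernel is trivial and its range is dense. Setting $j(x) := V(x)/g(x)^2,$ we have $|j(x)| \leq 1,$ so the operator $J$ of multiplication by $j$ is bounded and self-adjoint on $\clK,$ and $V = F^*JF$ as required in (\ref{F: V=F^*JF}).

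For compatibility, observe that $V \in L^\infty \subset K_\nu,$ so $(V_0 + V)_- \in K_\nu$ and $(V_0 + V)_+ \in K_\nu^{loc},$ and therefore Simon's theorem cited above (Theorem B.9.1 of \cite{Si82BAMS}) applies to both $H_0$ and $H_1.$ Applied to the bounded Borel function $\chi_\Delta,$ which trivially satisfies the decay hypothesis of that theorem for any bounded $\Delta,$ it yields that $F E^{H_j}_\Delta$ (multiplication by $g \in L^2$ composed with $\chi_\Delta(H_j)$) is Hilbert-Schmidt for $j = 0, 1.$ Thus both $(H_0, F)$ and $(H_1, F)$ are compatible pairs, and all hypotheses of Proposition~\ref{P: w(pm) is well-defined} hold.

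With the abstract machinery available, Theorem~\ref{T: properties of W(pm)}(i) guarantees that the operators $W_\pm(H_1, H_0)$ defined by~(\ref{F: def of W(pm)}) are unitary from $\hilb^{(a)}(H_0)$ onto $\hilb^{(a)}(H_1),$ and Theorem~\ref{T: class-l def of W(pm)} identifies them (via the intertwining unitaries $\euE(H_j)$) with the time-dependent strong limits $\mathrm{s}\text{-}\lim_{t\to \pm\infty} e^{itH_1}e^{-itH_0}P^{(a)}(H_0).$ Hence these strong limits exist and their ranges coincide with $\hilb^{(a)}(H_1),$ i.e.\ the wave operators exist and are complete. The only delicate point I foresee is the choice of $g$: it must simultaneously lie in $L^2 \cap L^\infty,$ vanish nowhere, and dominate $|V|$ pointwise so that $V/g^2$ is bounded; the explicit choice $g = \sqrt{|V| + \phi}$ with $\phi \in L^1 \cap L^\infty$ strictly positive is the natural remedy and trivially works since $V \in L^1 \cap L^\infty.$
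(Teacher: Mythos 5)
Your proposal is correct and follows essentially the same route as the paper: realize the pair in the abstract framework by taking $F$ to be multiplication by a nowhere-vanishing $g \in L^2(\mbR^\nu) \cap L^\infty(\mbR^\nu)$, use Simon's Theorem B.9.1 to verify that $FE^{H_j}_\Delta$ is Hilbert--Schmidt for $j=0,1$, and then invoke Theorem~\ref{T: properties of W(pm)}(i) together with Theorem~\ref{T: class-l def of W(pm)}. Your explicit choice $g=\sqrt{|V|+\phi}$, which secures the factorization $V=F^*JF$ with $\|J\|\le 1$, is exactly the detail the paper leaves implicit.
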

Other results of Sections~\ref{S: wave matrix} and~\ref{S: scat matrix} also apply to this class of Schr\"odinger operators.


\rndef{\emph}[1]{{\it #1}}

\mathsurround 0pt
\ndef{\AndSoOn}{$\dots$}

\end{document}
%
%
%
%
%
%
%